\documentclass[fleqn,10pt]{wlscirep}
\usepackage[utf8]{inputenc}
\usepackage[T1]{fontenc}
\usepackage{lmodern}
\usepackage{microtype}
\usepackage{amsmath,amssymb,amsthm}
\usepackage{float}
\usepackage{booktabs}
\usepackage{tabularx}
\usepackage{array}
\usepackage{multirow}
\usepackage{subcaption}
\usepackage{xcolor}
\usepackage[noabbrev,capitalise]{cleveref}
\usepackage{orcidlink}
\graphicspath{{figures/}}

\title{Input-schema identifiability limits in physics-informed surrogates for mechanics-governed flow}

\author[1,*]{Daniel Cie\'slak}
\author[1]{Andrzej Czy\.zewski}

\affil[1]{Gda\'nsk University of Technology, Gda\'nsk, Poland}
\affil[*]{Corresponding author: \href{mailto:daniel.cieslak@pg.edu.pl}{daniel.cieslak@pg.edu.pl}}

\newcommand{\dirmagprobe}{\textsc{direction+magnitude oracle probe}}
\newcommand{\dironly}{\textsc{direction oracle probe}}
\newcommand{\magonly}{\textsc{magnitude oracle probe}}
\newcommand{\geoschema}{\textsc{geometry-only input schema}}
\newcommand{\centerlinegeo}{\textsc{geometry-only input schema (centreline tangent)}}
\newcommand{\sagedirmagprobe}{\textsc{direction+magnitude oracle probe, SAGE backbone}}
\newcommand{\sagegeo}{\textsc{geometry-only input schema, SAGE backbone}}
\newcommand{\wmdir}{\textsc{direction oracle probe (Womersley)}}
\newcommand{\wmwith}{\textsc{direction+magnitude oracle probe (Womersley)}}
\newcommand{\wmnoleak}{\textsc{geometry-only input schema (Womersley)}}
\newcommand{\wmmag}{\textsc{magnitude oracle probe (Womersley)}}
\newcommand{\dmcf}{\textsc{D+M-CF}}
\newcommand{\dcf}{\textsc{D-CF}}
\newcommand{\mcf}{\textsc{M-CF}}
\newcommand{\geocf}{\textsc{GEO}}
\newcommand{\unit}[1]{\hat{\mathbf{#1}}}

\newcommand{\PPdir}[1]{\mathrm{PP_{dir}}@#1^\circ}
\newcommand{\cosgn}{\mathrm{cos}_\mathrm{signed}}
\newcommand{\fracflip}{\phi_\mathrm{flip}}
\newcommand{\ewRMSE}{ewRMSE_{\text{he}}}

\theoremstyle{plain}
\newtheorem{theorem}{Theorem}
\newtheorem{proposition}{Proposition}
\newtheorem{lemma}{Lemma}
\newtheorem{corollary}{Corollary}
\theoremstyle{definition}
\newtheorem{definition}{Definition}
\newtheorem{remark}{Remark}

\begin{abstract}
\noindent Physics-informed and data-driven surrogates are increasingly used to approximate mechanics-governed flow fields, but the target quantities assigned to such models are not always identifiable from the input variables available at prediction time. We introduce an input-schema identifiability certificate for computational surrogates. Starting from a reduced physical model, the certificate decomposes a target field into components that are measurable from geometry, components that require boundary-condition information, and components identifiable only up to a symmetry quotient. This yields a pre-training audit: it predicts which oracle-channel interventions should reduce error, which should fail, and which ambiguity cannot be removed by changing the architecture, loss, optimiser, or sample size. We instantiate the framework for incompressible tubular flow using a Cosserat-rod reduction, where lumen velocity separates into a mesh-measurable tangent direction, a boundary-condition-dependent magnitude, and a signed-orientation ambiguity. Controlled experiments on patient-specific aortic CFD geometries, analytic Womersley flows, and an advection–diffusion transfer problem confirm the predicted pattern: supplying signed direction collapses angular error to the oracle regime, whereas supplying magnitude without orientation leaves the predicted sign ambiguity and yields 16–33\% per-node sign flips. The results provide a mechanics-based diagnostic for deciding whether a surrogate modelling task is physically identifiable before training, and expose failure modes that aggregate error metrics can hide.
\end{abstract}

\begin{document}
\raggedbottom
\maketitle
\thispagestyle{empty}

\section{Introduction}

Computational surrogates are increasingly used as reduced-cost approximations of mechanics-governed simulations in fluid mechanics, biomechanics, and design optimisation. Their accuracy is usually assessed after training, by comparing predicted and simulated fields. A more basic question is often left implicit: whether the requested field is identifiable from the variables supplied to the surrogate at prediction time.

This question is distinct from numerical accuracy, expressive power, or optimisation quality. A highly expressive physics-informed model, graph network, or neural operator cannot recover a boundary-condition-dependent component if the declared input schema contains only geometry. In that case the failure is not a training defect but a well-posedness defect of the surrogate task itself.

Physics-informed machine learning (PIML) has largely treated physics as a constraint on the predictor: conservation laws enter as soft losses~\cite{raissi2019pinn,karniadakis2021pinn}, symmetries enter as hard architectural structure~\cite{satorras2021egnn,suk2024labgatr}, and neural operators learn solution maps for parameterised equations~\cite{li2021fno,kontolati2024neuraloperator}. These tools ask how a model should learn once the prediction task has been declared. We ask the prior question: \emph{is the declared prediction task identifiable from the information the surrogate will actually receive at deployment?}

So formulated, identifiability is a physical well-posedness condition for surrogate learning, not a statistical afterthought. A surrogate input tensor generates a sigma-algebra of observable quantities, and a target field may split into components measurable from it, components requiring boundary-condition information outside it, and components requiring boundary or orientation information not present in the declared input. The theorem we prove constrains the computational question, not the predictor: if a target component is not measurable with respect to the declared input schema, no admissible surrogate can identify it uniformly over the admissible mechanics class without supplying the missing physical information.

That geometry alone cannot determine a boundary-condition-dependent flow is, taken literally, obvious; that is our starting point. The novelty is its operational consequence: a \emph{component-wise input-schema certificate} predicting \emph{before training} which oracle-channel interventions must collapse an error, which must fail, and which residual ambiguity is invariant to architecture, loss, optimiser, and sample size. Instantiated for slender incompressible flow, it classifies the three velocity components as mesh-measurable, boundary-condition-measurable, or identifiable only modulo a discrete orientation quotient; the same construction applies to any target admitting a reduced or asymptotic factorisation, turning a qualitative intuition into a falsifiable protocol that separates a model that learned the flow from one that read a geometric proxy.

\paragraph*{Why this matters now.}
Machine-learning surrogates for computational fluid dynamics (CFD) are now used across the physical sciences~\cite{vinuesa2022mlfluids,cremades2025coherent}, but aggregate error is not a sufficient diagnostic for mechanics-governed vector fields. A model can attain a plausible headline metric by recovering a deterministic geometric proxy for one component while failing on boundary-dependent components. Identifiability therefore belongs alongside conservation, symmetry, discretisation choice, and operator structure as a first-class design condition for computational surrogates. The closest precedent in spirit is Fajardo-Fontiveros \emph{et al.}'s information-theoretic limits on symbolic-model recovery from noisy data~\cite{fajardo2023limits}; we give the PDE-asymptotic analogue---a closed-form bracket on which target components are recoverable from a chosen input set, derived before any predictor is trained.

\paragraph*{Why vascular flow is a clean testbed.}
Cardiovascular flow sharpens the question cleanly. Patient-specific aortas are tubular domains in which geometry, pulsatile inflow, and downstream impedance jointly produce the velocity field, and classical fluid mechanics gives a closed asymptotic reduction of Navier--Stokes that separates the three~\cite{ku1997bloodflow,steinman2002,formaggia20031dbloodflow,sherwin20031dvascular}. A surrogate trained on geometric inputs implicitly claims to recover a function of all three. Recent geometry-conditioned graph surrogates---GEM-GCN~\cite{suk2024gemgcn}, LaB-GATr~\cite{suk2024labgatr}, physics-informed and reduced-order graph networks~\cite{tabe2026pignn,pegolotti2024}, one-dimensional blood-flow networks~\cite{sen2024pignn}, and aneurysm-cohort generalisation~\cite{lannelongue2026npj}---report aggregate accuracy on full input-feature stacks. Taking their accuracy at face value, our contribution is orthogonal: none isolates which input channels drive the prediction, nor states analytically that the prediction \emph{should be possible} from the chosen inputs. Three risk factors of the ML reproducibility crisis~\cite{kapoor2023leakage,alzheimer2025scoping} apply directly---cohorts small by construction in the public-data regime, labels sharing boundary conditions with inputs, and end-to-end metrics that aggregate physically heterogeneous errors---and an identifiability audit is the diagnostic these works lack.

\paragraph*{The Cosserat-rod factorisation.}
We resolve the identifiability question analytically. Modelling a patient-specific vessel as a Cosserat rod~\cite{antman2005nonlinear,formaggia20031dbloodflow} with arc-length $s$, centreline tangent $\unit{T}(s)$ and slowly varying cross-section of radius $R(s)$, and substituting an axisymmetric ansatz into the incompressible Navier--Stokes equations, the leading-order velocity factorises as
\[
\mathbf{u}_\star(s,r,t) = \frac{Q(t)}{\pi R(s)^2}\, f\!\left(\frac{r}{R(s)}\right)\,\unit{T}(s).
\]
This factorisation has three immediate consequences. First, the \emph{direction} field $\unit{\mathbf{u}}_\star = \unit{T}(s)$ is a deterministic functional of mesh geometry, identifiable up to a curvature-induced Dean-number correction. Second, the \emph{magnitude} field is fixed by the inflow flow rate $Q(t)$---a non-geometric boundary-condition quantity~\cite{westerhof2009arterial}---and is not recoverable from geometry alone. Third, a graph predictor that respects the local energy $|\mathbf{u}|^2$ but receives no per-node direction channel suffers an irreducible $\mathbb{Z}_2$ \emph{sign ambiguity} per connected component. These statements are closed-form properties of the velocity field in the lumen; they are independent of any downstream surrogate architecture.

\paragraph*{Empirical audit.}
We test these predictions with controlled oracle-channel interventions on $46$ patient-specific aortas and $34$ analytic Womersley flows, crossing direction $\times$ magnitude availability under an identical FlowGAT model, optimiser, augmentation, and three seeds. The data match the theorem on every prespecified axis: a signed direction probe collapses angular error to the oracle regime; a magnitude probe without direction is non-identifying; the predicted sign ambiguity is realised as a $16$--$33\%$ per-node flip rate; and apparent mass conservation on the patient cohort does not transfer to the analytic regime. The theorem and its matched audit form a computational identifiability protocol for mechanics-governed surrogates, distinguishing a model that learned the vector field from one that exploited a geometric proxy.

\paragraph*{Contributions.}
We make four contributions. First, we formulate \emph{input-schema identifiability} for mechanics-governed surrogate modelling: identifiability is assessed relative to the information available in the prediction-time input schema, not relative to a full PDE problem with all boundary data. Second, we derive a closed-form \emph{pre-training certificate} for slender incompressible tubular flow, using a Cosserat-rod reduction to separate mesh-measurable direction, boundary-condition-dependent magnitude, and a signed-orientation quotient (\Cref{thm:dir-id,prop:quotient,lem:sign}). Third, we convert the certificate into a falsifiable oracle-channel audit with quantitative predictions (P1--P5) stating which interventions must reduce vector-field error, which must fail, and which ambiguity is invariant to architecture, loss, optimiser, regularisation, or training-set size. Fourth, we verify the audit on patient-specific aortic CFD geometries, analytic Womersley flows, architecture and direction-proxy checks, and a non-flow advection--diffusion transfer problem (\Cref{sec:advdiff}); the accompanying repository releases configurations, frozen splits, generators, metrics, bootstrap scripts, and figure/table pipelines (\Cref{sec:data-availability}).

\paragraph*{Scope of the claim.}
The certificate applies to laminar incompressible flow in slender tubular domains that admit a Cosserat-rod reduction; it is not a clinical vascular-flow surrogate, a wall-shear-stress predictor, or a universal theorem for turbulent or strongly non-axisymmetric geometries. The oracle-probe channels are diagnostic interventions that inject selected target-derived quantities to bracket the identifiability ceiling, not deployable inputs. The patient cohort is a realism check supplying paired geometry and CFD velocity; the analytic Womersley benchmark is the mechanism-controlled falsification check because its ground truth is exact. The held-out patient set has $n=5$ cases by construction of the public-data regime, so the evidence is the stable sign and magnitude of the mechanistic contrasts and their analytic replication---not a $p$-value at $n=5$ (\Cref{sec:n5-framing}).

\section{Input-schema identifiability for mechanics-governed surrogates}
\label{sec:theory}

\paragraph*{The audit in seven steps.}
The procedure is stated once, in problem-agnostic form, then instantiated for tubular flow: (1)~define the deployment-time input tensor; (2)~define the target field; (3)~derive or reuse a reduced/asymptotic factorisation of that target; (4)~classify its components against the input tensor as input-measurable, boundary-condition-measurable, quotient-identifiable, or non-identifiable; (5)~design oracle-channel interventions that selectively inject each missing component; (6)~test whether only the identifying channels collapse the corresponding errors while the non-identifying ones leave them unchanged; (7)~verify the obstruction persists under architecture, proxy, or training changes that do not enlarge the input $\sigma$-algebra. Steps 1--4 are analytic and consume no data; steps 5--7 are the falsifiable audit. The Cosserat-rod theorem below is step 3 for slender incompressible flow; \Cref{sec:advdiff} carries the same steps through a synthetic advection--diffusion target to show none is flow-specific.

The analytic claim is about the declared input schema, not the predictor. We formalise the admissible flow class, the four input schemata ($\mathcal{I}_{\mathrm{geo}}$, $\mathcal{I}_{\mathrm{geo+mag}}$, $\mathcal{I}_{\mathrm{geo+dir}}$, $\mathcal{I}_{\mathrm{full}}$) and an $L^2$ tolerance in \Cref{def:input-schema} (Methods), and reduce the lumen flow by a Cosserat-rod expansion factorising the velocity into a geometry-determined direction and a boundary-fixed magnitude, $\mathbf{u}_\star=[Q(t)/\pi R(s)^2]\,f_\alpha(r/R)\,\unit{T}(s)+\mathcal{O}(\epsilon,De)$ (\Cref{eq:closedform}). From this we derive a three-part certificate (proved in \Cref{sec:certificate-formal} and Supplementary \Cref{sec:proofs}): \emph{(i)~direction}---the leading-order tangent line is mesh-measurable up to a Cosserat residual, so direction is $\mathcal{I}_{\mathrm{geo}}$-identifiable; \emph{(ii)~magnitude}---the speed waveform is \emph{not} (admissible inflows share geometry but differ in magnitude), so recovering it requires a boundary channel; \emph{(iii)~sign}---geometry plus a non-negative magnitude channel leaves an irreducible $\mathbb{Z}_2$ sign ambiguity per lumen component unless an asymmetric boundary or signed-orientation channel is supplied, invariant under any predictor class, loss, optimiser or sample size (\Cref{thm:dir-id,prop:quotient,prop:arch-invariance,lem:sign}). The quotient clause yields a lower bound: the geometry-versus-direction angular gap is at least $p^\star\cdot 90^\circ$ minus the Cosserat residual, with $p^\star$ the population-average sign ambiguity (\Cref{cor:gnnaudit}).

\subsection{Falsifiable consequences for surrogate modelling}
\label{sec:falsification}

The certificate makes a risky prediction: if direction is the identifying channel, supplying it must collapse the vector-angle error whether or not magnitude is present, whereas supplying only a non-negative magnitude channel must \emph{not}, since signed orientation remains a quotient variable. A failure of this pattern on the analytic Womersley benchmark---where the tangent direction and exact solution are known by construction---would falsify the audit itself, not merely weaken a neural baseline.

\subsection{Quantitative predictions for the numerical audit}
\label{sec:theory-predictions}

The Cosserat-rod theorem makes five quantitative predictions that we test in \Cref{sec:results-preview} and that structure the empirical sections below.

\textbf{(P1) Symmetry under direction-proxy refinement.} \Cref{thm:dir-id} treats $\unit{T}(s)$ as a property of the mesh, not of the proxy method, so the theorem-predicted gap should be invariant to whether $\unit{T}$ is computed by case-global PCA or by per-node centreline skeletonisation. The \centerlinegeo{} variant should be indistinguishable from \geoschema{}.

\textbf{(P2) Symmetry under architecture choice.} \Cref{lem:sign} concerns the information content of the declared input schema, not the backbone, so the asymmetric pattern should reproduce on a vanilla GraphSAGE backbone (\sagegeo{} vs.\ \sagedirmagprobe{}).

\textbf{(P3) Cross-domain replication.} On a straight cylinder ($\epsilon \to 0$, $De \to 0$), \Cref{thm:dir-id} predicts that the geometric proxy becomes \emph{exact} but \Cref{lem:sign} still gives a non-zero flip rate; the geometry-only-family residual on the Womersley benchmark must therefore persist and be accounted for entirely by sign-degeneracy.

\textbf{(P4) Phase invariance of the flip rate.} \Cref{lem:sign} singles out no preferred sign, so the flip rate should be independent of the cycle phase $\varphi = (\omega t_\mathrm{phase})\bmod 2\pi$ on Womersley. In particular, the forward-flow and reverse-flow flip rates should agree to within sampling noise.

\textbf{(P5) Mass conservation is non-axiomatic for the predictor.} The predictor's continuity behaviour follows from the training-data distribution, not from \Cref{eq:closedform}. On a regime where the data-distributional pattern fails (Womersley analytical truth $\overline{|\nabla\!\cdot\!\mathbf{u}|}\!\to\!0$), the trained predictor should retain its in-vivo divergence floor.

All five predictions are confirmed in \Cref{sec:results-preview,sec:robustness,sec:womersley,sec:continuity} and in Supplementary \Cref{sec:supp_sage,sec:supp_centerline}. This section therefore defines an a-priori, falsifiable analytic frame for the numerical audit that follows.

\begin{table}[H]
\small
\centering
\caption{\textbf{Falsifiable interventions implied by the certificate.}
Each row is a quantitative prediction derived from
\Cref{thm:dir-id,prop:quotient,prop:arch-invariance,lem:sign,cor:gnnaudit}
together with the controlled-input modification that tests it.
The third column is the predicted sign of the effect on the
identifiability bracket; the fourth column is the corresponding
observable in the empirical audit; the fifth column reports the
realised result. A negative outcome in any row would falsify the
corresponding clause of the theorem rather than merely degrade a
neural baseline. The interventions are
$\mathcal{A}_{\mathcal{I}}$-modifications (input-schema or
admissible-class changes), not architecture, loss or
optimiser changes.}
\label{tab:falsifiable}
\begin{tabularx}{\linewidth}{p{0.08\linewidth}p{0.18\linewidth}p{0.20\linewidth}p{0.27\linewidth}X}
\toprule
\# & Intervention on $\mathcal{I}$ or $\mathcal{F}$ &
Predicted effect on $\mathcal{R}^{\star}$ &
Observable &
Realised outcome \\
\midrule
P1 & Replace global PCA tangent with per-node centreline tangent
(refines $\mathcal{G}(\Omega)$, not $\mathcal{A}_{\mathcal{I}}$). &
No change: $\unit{T}$ already $\mathcal{A}_{\mathcal{I}_{\mathrm{geo}}}$-measurable. &
$\cosgn$, $\fracflip$, angle median for \centerlinegeo{} vs \geoschema{}. &
$\cosgn=+0.799$ vs $+0.798$; angle $35.4^\circ$ vs $34.8^\circ$
(Supp.\ \Cref{sec:supp_centerline}). \\
\midrule
P2 & Replace GATv2 attention with vanilla GraphSAGE (changes
$\mathcal{H}_{\mathcal{I}}$ only). &
No change: \Cref{prop:arch-invariance} bracket invariant
under $\mathcal{H}$. &
$\PPdir{10}$, $\fracflip$, angle median for \sagegeo{} vs \geoschema{} and \sagedirmagprobe{} vs \dirmagprobe{}. &
SAGE reproduces both ceiling and floor within seed s.d.\
(Supp.\ \Cref{sec:supp_sage}). \\
\midrule
P3 & Replace patient cohort with straight-tube
analytic Womersley ($\epsilon, De\to 0$). &
Direction bound \Cref{eq:dir-bound} tightens to zero; sign-degeneracy bound \Cref{eq:flip-lower} unchanged. &
$\fracflip$ on geometry-only-schema-family variants. &
$\fracflip=0.33$ on Womersley vs $0.16$ on VMR; direction-oracle-probe variants stay at $\fracflip=0$ (\Cref{sec:womersley}). \\
\midrule
P4 & Bin Womersley test cases by cycle phase $\varphi$;
$Q\mapsto -Q$ is a measure-preserving involution. &
Per-phase flip rate constant: $\pi_+(X)$ independent of $\varphi$. &
$\fracflip$ on forward- vs reverse-flow half-cycle for \wmnoleak{}. &
$0.39$ vs $0.33$, indistinguishable within sampling noise
(\Cref{fig:phase_cos}). \\
\midrule
P5 & Compare divergence statistics
$\overline{|\nabla\!\cdot\!\hat{\mathbf{u}}|}$ across domains
(swaps the distribution of $\nabla\!\cdot Y$, leaves
$\mathcal{A}_{\mathcal{I}}$ intact). &
Predictor inherits training-distribution divergence; no analytic
mass conservation. &
Predicted divergence on VMR vs Womersley. &
Matches CFD discretisation on VMR
($\sim\!2.7\times 10^{-3}$); blows up by $\sim\!10^2$ on Womersley
(\Cref{sec:continuity}). \\
\bottomrule
\end{tabularx}
\end{table}

\subsection{Relation to physically based machine learning and numerical surrogates}
\label{sec:relation-piml}

The certificate is complementary to familiar uses of physics in data-driven modelling. Physics-informed neural networks~\cite{raissi2019pinn,karniadakis2021pinn}, PDE-preserved graph networks~\cite{liu2024multiresolution,sharma2025dynamical}, equivariant backbones~\cite{satorras2021egnn,suk2024labgatr}, and neural operators~\cite{li2021fno,kontolati2024neuraloperator} all use physics to \emph{constrain the predictor}; the Cosserat-rod theorem instead uses physics to \emph{constrain the question}---given a fixed predictor class, what is identifiable from the declared inputs and what is not (developed in \Cref{sec:piml-implications}).

\section{Results}

\begin{figure}[H]
\centering
\includegraphics[width=0.99\linewidth]{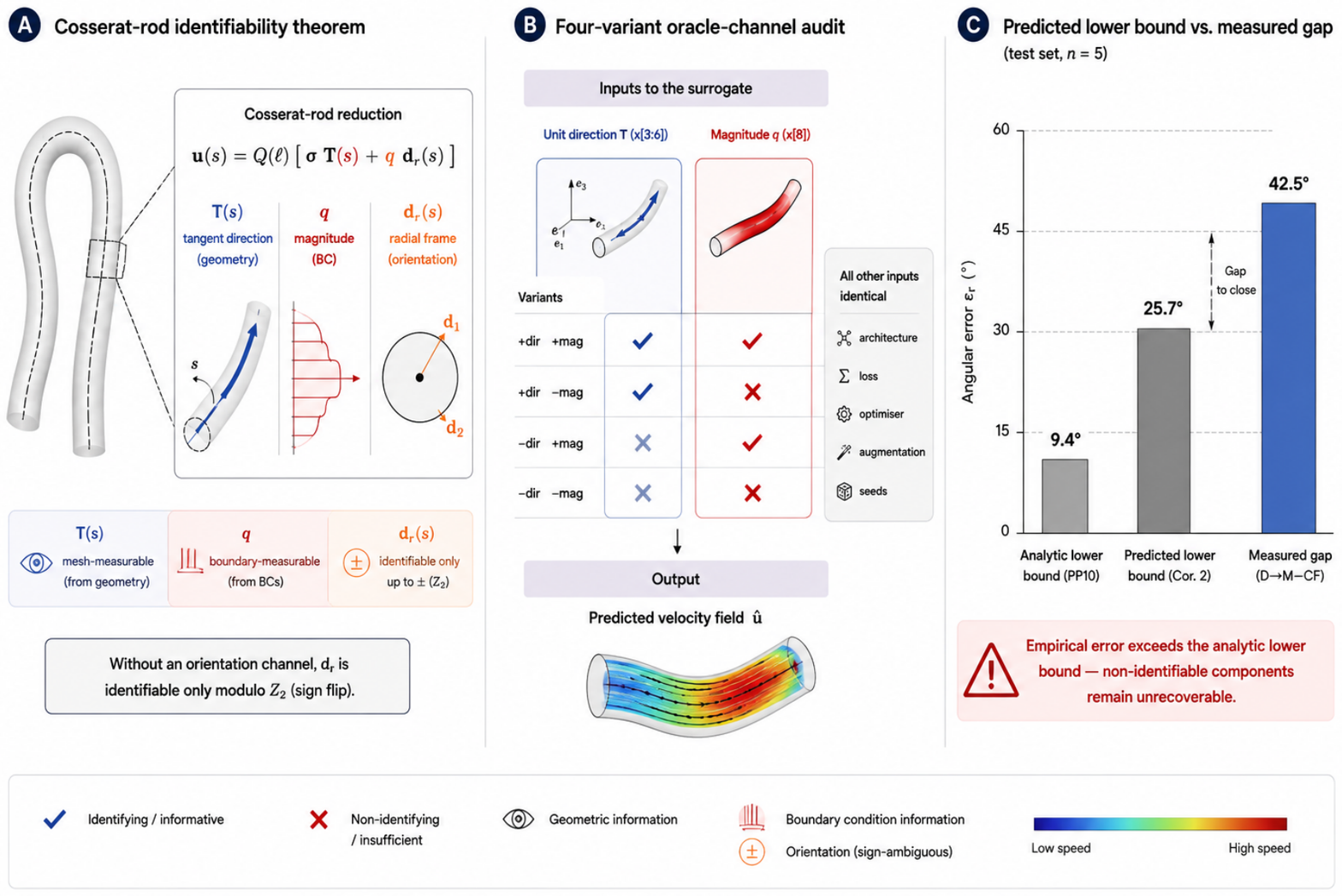}
\caption{\textbf{Cosserat-rod identifiability
theorem and its empirical audit at a glance.}
\textbf{(A)} The Cosserat-rod reduction of the lumen velocity field
factorises $\mathbf{u}_\star$ into a geometry-determined direction
$\unit{\mathbf{T}}(s)$ (blue arrows, recoverable from the mesh up
to a Dean correction) and a magnitude
$|\mathbf{u}_\star| = Q(t)/(\pi R(s)^2)\,f(r/R)$ (red intensity
along the lumen) that is fixed by the inflow waveform $Q(t)$ and
downstream impedance and is therefore not recoverable from the mesh
alone; two superposed waveforms $Q_1(t), Q_2(t)$ illustrate that
the same geometry admits different magnitude time series. A
per-node sign degeneracy $\pm\unit{\mathbf{u}}$ is irreducible in a
geometry-only predictor (\Cref{lem:sign}).
\textbf{(B)} Four-variant oracle-channel audit on the
$n_\mathrm{test}{=}5$ VMR cohort, three seeds. Cells are coloured
by mean test-set vector angle error (small green: oracle-regime;
large red: degenerate) and report the headline angle error
(degrees) and PP@10. $+$dir alone matches the direction+magnitude oracle probe; $+$mag
alone is \emph{worse} than the geometry-only input schema---the
direction channel is the identifying one.
\textbf{(C)} The empirically observed direction-vs-geometry gap
($45.7^\circ\,{-}\,3.2^\circ \approx 42.5^\circ$) exceeds the
analytic lower bound of \Cref{cor:gnnaudit} in both the VMR
($p^\star{\sim}0.16$) and Womersley ($p^\star{\sim}0.33$) regimes,
satisfying the theorem on every quantitative axis. The figure thus
summarises the contribution: \emph{a physics-derived
non-identifiability certificate, validated by a matched
oracle-channel audit}.}
\label{fig:hero}
\end{figure}

\Cref{fig:hero} previews the theorem-to-audit logic. The Results are organised around six claims, each a falsifiable readout of the certificate rather than a statement about a particular model variant. \textbf{C1}: a reduced physical model fixes which target components are identifiable before any training (\Cref{sec:results-preview}). \textbf{C2}: the signed direction, not the speed magnitude, is the identifying channel (\Cref{sec:results-preview}). \textbf{C3}: magnitude supplied without an orienting frame realises the predicted $\mathbb{Z}_2$ quotient ambiguity (\Cref{sec:mag-only}). \textbf{C4}: the obstruction is invariant to architecture and to geometric-proxy refinement (\Cref{sec:robustness}). \textbf{C5}: analytic Womersley flow reproduces the failure with no anatomical complexity (\Cref{sec:womersley}). \textbf{C6}: apparent mass conservation is distributional, not axiomatic (\Cref{sec:continuity}). A seventh result shows the same audit transfers to a non-flow target (\Cref{sec:advdiff}). The experiments are readouts of an information theorem, not a leaderboard for neural-surrogate performance.

\subsection{Audit design: converting the theorem into information interventions}
\label{sec:results-preview}

\Cref{eq:closedform} factorises the leading-order velocity into a geometry-determined direction $\unit{T}(s)$ and a boundary-condition-determined magnitude $Q(t)/(\pi R(s)^2)$; \Cref{thm:dir-id,thm:mag-obs,lem:sign,cor:gnnaudit} convert this into predictions P1--P5 (\Cref{sec:theory-predictions}), tested on the $46$-aorta VMR cohort, the $34$-case Womersley benchmark, and with an independent backbone and direction proxy as robustness checks.

\Cref{fig:fig1} sketches the four-variant design and the 46-aorta cohort breakdown. All four interventions use the same FlowGAT backbone (a GATv2-style attention stack, $\approx 843\mathrm{k}$ parameters; full architecture in \Cref{sec:methods}), so every performance change is attributable to the input-channel manipulation, not the architecture. The patient cohort is a realistic-geometry readout; the substantive claim is a theorem about input measurability, tested through the sign and ordering of prespecified information-intervention contrasts and analytic Womersley replication (\Cref{sec:womersley,sec:n5-framing}), with bootstrap CIs reported as variability disclosures rather than the evidential centre of the work.

\begin{figure}[H]
\centering
\includegraphics[width=0.79\linewidth]{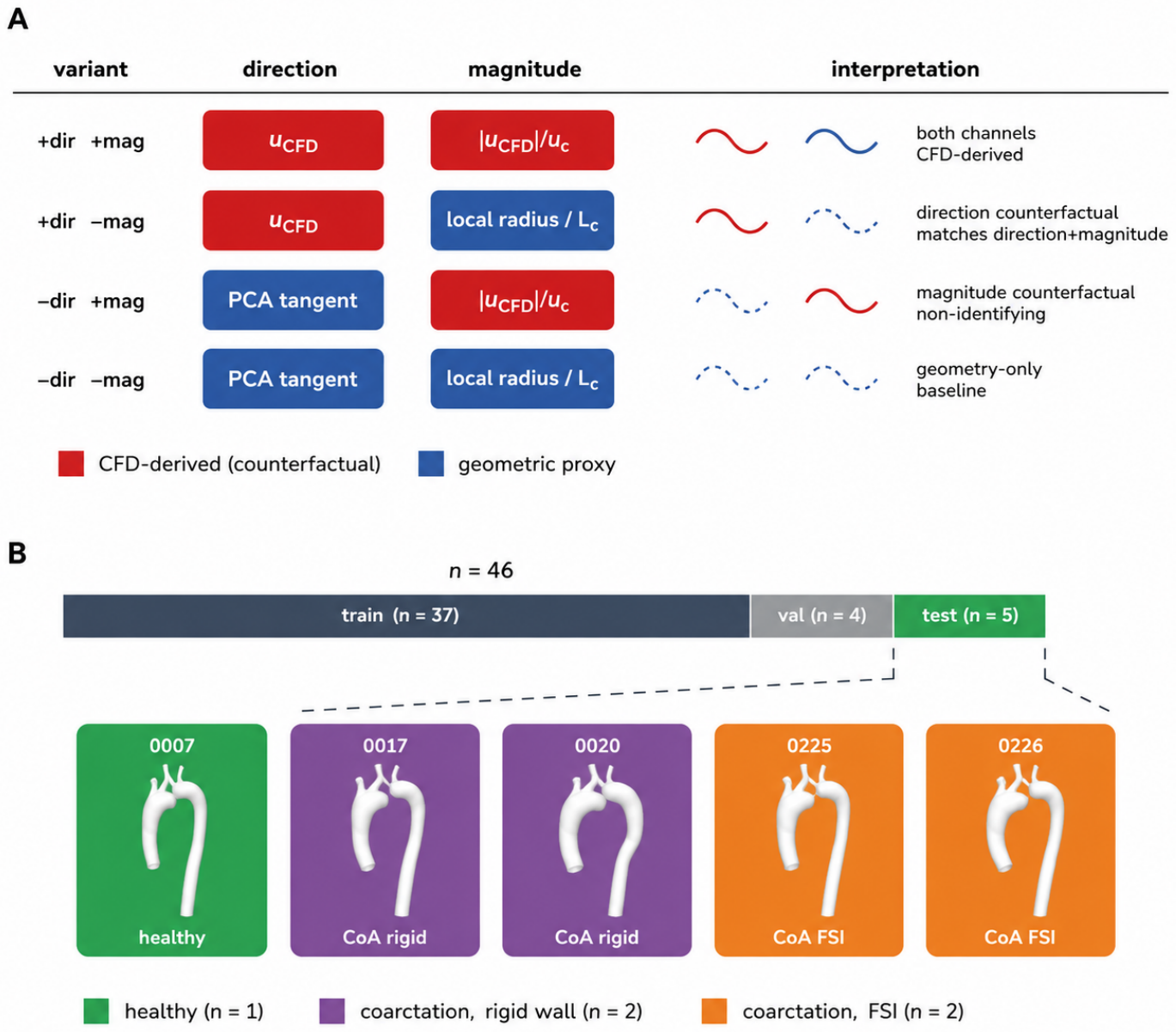}
\caption{\textbf{Study design.} (A) The four-variant oracle-channel intervention
decomposition. Exactly two of the per-node input channels are manipulated: the
\emph{unit-direction channel} (the three vector components that tell the network
which way the flow points) and the \emph{scalar-magnitude channel} (the single
number that tells it how fast). Each is independently set to either the
CFD-derived target (red) or a purely geometric proxy (blue); in the network's
input tensor these are columns $\mathbf{x}[3{:}6]$ and $\mathbf{x}[8]$
respectively. All other inputs, the architecture, the loss, the optimiser, the
augmentation, and the seeds are identical across the four variants.
(B) The 46-aorta VMR cohort: $37$ training, $4$ validation, $5$
test cases (frozen split). The five test cases span healthy
($\texttt{0007}$), rigid-wall coarctation ($\texttt{0017}$,
$\texttt{0020}$), and FSI coarctation ($\texttt{0225}$,
$\texttt{0226}$).}
\label{fig:fig1}
\end{figure}

The training loss is a heteroscedasticity-weighted MSE on the 3-component velocity at every interior node, with strict no-tangent masking at the wall so the boundary condition cannot shortcut the interior field. Inputs comprise $10$ per-node channels; the variants differ in exactly two (\Cref{tab:variants}), all others being geometric and identical across variants.

\begin{table}[H]
\small
\centering
\caption{\textbf{Input observables and identifiable velocity components under the Cosserat certificate.} Each row is a declared input schema $\mathcal{I}$ in the sense of \Cref{def:input-schema}; the second column lists the physical observables carried in the two variable channels; the third column gives the velocity component (or quotient) the certificate predicts to be identifiable from that schema; the fourth column states the failure mode the certificate predicts for the components that are \emph{not} identifiable from $\mathcal{I}$. The four rows differ only in those two channels: the architecture, loss, optimiser, augmentation, and random seeds are identical, isolating the change to the input $\sigma$-algebra $\mathcal{A}_{\mathcal{I}}$. In the \magonly{} row the direction channel carries the \emph{unsigned} PCA tangent line, which is part of $\mathcal{I}_{\mathrm{geo}}$ by construction and enters $\unit{T}$ and $-\unit{T}$ identically; this row is therefore the exact instantiation of \Cref{thm:dir-id}(c)---tangent line plus magnitude, signed orientation withheld---and not a covert direction leak (\Cref{rem:mag-probe-thm1c}).}
\label{tab:variants}
\begin{tabularx}{\textwidth}{p{0.22\textwidth}p{0.25\textwidth}p{0.22\textwidth}X}
\toprule
Declared input schema $\mathcal{I}$ & Observables in the two variable channels & Identifiable target component under $\mathcal{A}_{\mathcal{I}}$ & Component predicted non-identifiable, with mechanism \\
\midrule
\dirmagprobe{} ($\mathcal{I}_{\mathrm{full}}$) & CFD direction $(\mathbf{x}[3{:}6]=\hat{\mathbf{u}}_{\rm CFD})$ and CFD speed proxy $(\mathbf{x}[8]=u_{\rm mean}/u_{\rm char})$ & Full signed velocity supplied; defines the analytic identifiability ceiling. & None within the rod regime; bracket attained up to $\mathcal{O}(\epsilon+De)$. \\
\dironly{} ($\mathcal{I}_{\mathrm{geo+dir}}$) & Signed unit direction only; scalar channel is geometric local radius. & Signed direction $\hat{\mathbf{u}}$ is supplied; line field is mesh-measurable (\Cref{thm:dir-id}a). & Speed waveform $\|\mathbf{u}\|$ still requires $Q(t)$ (\Cref{thm:dir-id}b). \\
\magonly{} ($\mathcal{I}_{\mathrm{geo+mag}}$) & Non-negative speed only; direction channel is geometric PCA tangent. & Line field and amplitude are supplied. & Signed orientation is the $\mathbb{Z}_2$ quotient of \Cref{prop:quotient}; lower-bounded flip rate \Cref{eq:flip-lower}. \\
\geoschema{} ($\mathcal{I}_{\mathrm{geo}}$) & Only mesh-derived channels: PCA tangent and local radius. & Unsigned tangent line up to Cosserat residual (\Cref{eq:dir-bound}). & Speed and signed orientation both non-identifiable; bracket \Cref{eq:l2-lower}. \\
\bottomrule
\end{tabularx}
\end{table}

We report four audit metrics on the held-out test set (5 cases: 1 healthy, 2 rigid coarctation, 2 FSI coarctation): PP@10/PP@5, normalised peak-percentile overlaps of the top-$k\%$ velocity-magnitude regions; the mean node-wise angular error; $\ewRMSE$, a high-energy-weighted magnitude RMSE on the top-$20\%$ true-magnitude mass; and the signed cosine $\cosgn$ with per-node flip rate $\fracflip$, which expose the sign-degeneracy mechanism. Wall-shear-stress MAE and $R^2$ are not endpoints and appear only in Supplementary~\Cref{sec:wss} as relative-contrast negative controls.

\subsection{Direction information collapses angular error to the oracle regime
(testing \Cref{thm:dir-id})}

\Cref{tab:headline,fig:fig2} summarise the headline test-set performance over three seeds. All velocity-vector metrics (angle, $\ewRMSE$, PP@10) cluster \dironly{} and \dirmagprobe{} together and place \geoschema{} and \magonly{} in a clearly separated regime. The mean angular error is $3.18 \pm 0.31^\circ$ for \dironly{} versus $3.86 \pm 0.32^\circ$ for \dirmagprobe{}---a $\Delta = +0.70^\circ$ improvement for the direction-only probe, paired bootstrap $95\%$ CI $[+0.63^\circ, +0.81^\circ]$ excluding zero. PP@10 differs by less than its $1\sigma$ ($0.231$ vs.\ $0.238$). The full WSS table is reported in Supplementary \Cref{sec:wss} as a relative-contrast negative control only ($R^2<0$ for every variant), a known small-cohort difficulty~\cite{tabe2026pignn} we do not resolve.

\begin{table}[H]
\small
\centering
\caption{Test-set identifiability metrics (mean $\pm$ s.d.\ over $3$
seeds, $5$ test cases). Physical endpoints---folded angular error,
signed cosine $\cosgn$ and per-node flip rate $\fracflip$ on the
high-energy mask---are reported first. The high-energy weighted
magnitude RMSE ($\ewRMSE$) follows. Bold rows highlight that the
direction oracle probe recovers and slightly
exceeds direction+magnitude oracle probe, while the
magnitude oracle probe is
non-identifying. Wall-shear-stress is not an endpoint of this audit
and is reported only as a relative-contrast negative control in
Supplementary \Cref{sec:wss}. Abbreviations: \dmcf{} = direction+magnitude oracle probe, \dcf{} = direction oracle probe, \mcf{} = magnitude oracle probe, \geocf{} = geometry-only input schema.}
\label{tab:headline}
\begin{tabularx}{\linewidth}{lXXXX}
\toprule
Variant & angle ($^\circ$) & $\cosgn$ & $\fracflip$ & $\ewRMSE$ \\
\midrule
\dmcf{} & $3.86 \pm 0.32$ & $+0.998 \pm 0.001$ & $0.000$ & $0.162 \pm 0.014$ \\
\dcf{}  & $\mathbf{3.18 \pm 0.31}$ & $\mathbf{+0.999 \pm 0.001}$ & $\mathbf{0.000}$ & $\mathbf{0.161 \pm 0.002}$ \\
\mcf{}  & $61.97 \pm 2.41$ & $+0.568 \pm 0.537$ & $0.240 \pm 0.209$ & $0.474 \pm 0.012$ \\
\geocf{}   & $45.72 \pm 1.51$ & $+0.798 \pm 0.177$ & $0.159 \pm 0.123$ & $0.433 \pm 0.005$ \\
\bottomrule
\end{tabularx}
\end{table}

\begin{figure}[H]
\centering
\includegraphics[width=0.95\linewidth]{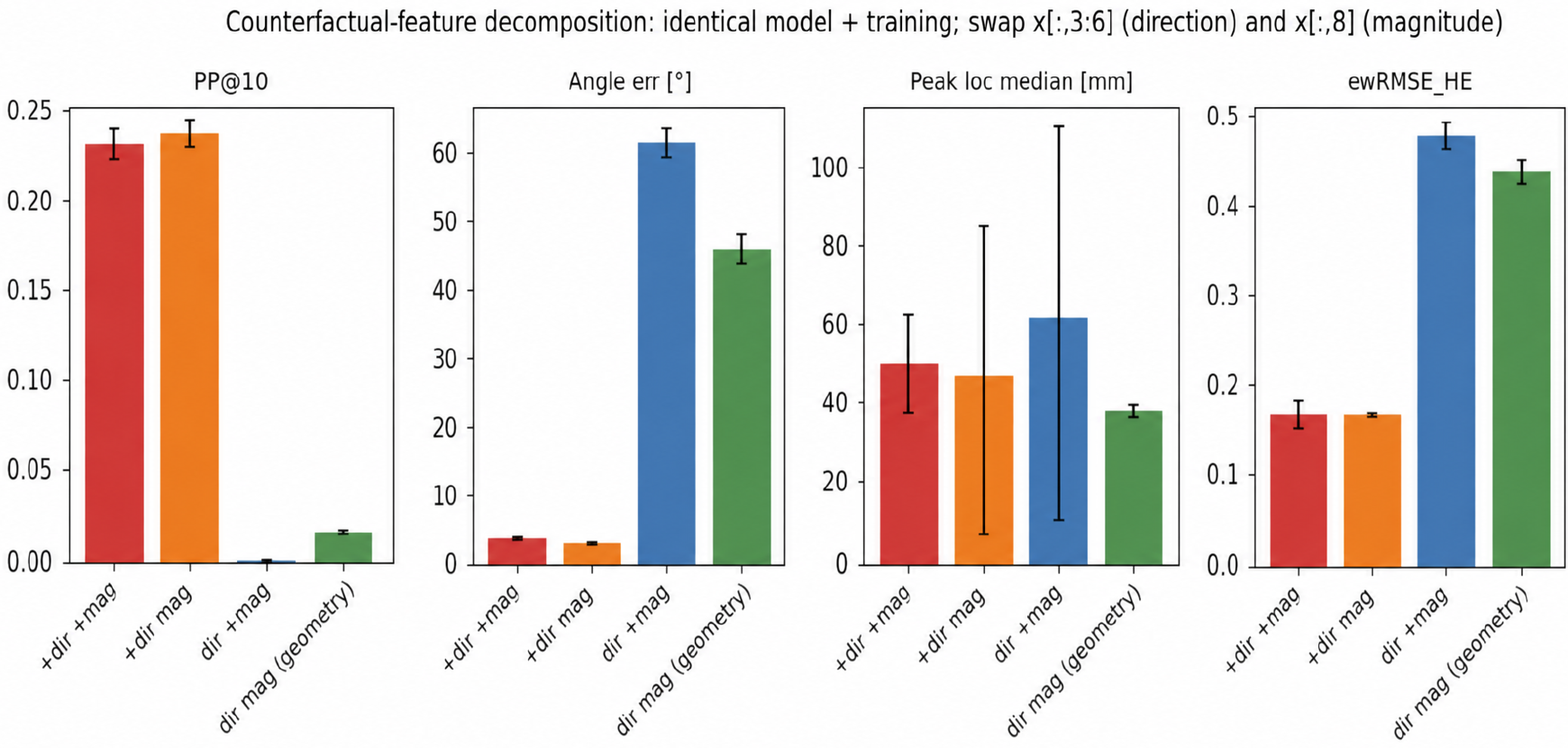}
\caption{\textbf{Four-variant identifiability comparison on test
set.} PP@10 (a), folded angular error (b), peak-localisation median
distance (c), and $\ewRMSE$ (d) across the four oracle-channel
interventions. Bars show the $3$-seed mean; error bars the seed s.d.
\dironly{} matches or exceeds \dirmagprobe{} on the velocity-vector
endpoints. \magonly{} is worse than \geoschema{} on every
velocity-vector metric, despite carrying genuine target
information---it is non-identifying in the sense of \Cref{lem:sign}.
Peak localisation is retained here only as a transparency diagnostic
and is not used as a headline endpoint; wall-shear-stress is not shown
(see Supplementary \Cref{sec:wss}).}
\label{fig:fig2}
\end{figure}

A paired bootstrap (resampling unit: case identifier, $n=5$, $10{,}000$ iterations) confirms the qualitative reading (\Cref{tab:bootstrap}); intervals are wide as expected for $n=5$ (see \Cref{sec:n5-framing}).

\begin{table}[H]
\footnotesize
\centering
\caption{Paired bootstrap on the test cohort ($n=5$ cases, $10{,}000$ iterations). $\Delta=a-b$ for the contrast in the row header. Positive $\Delta$ in PP@10/PP@5/peak-magnitude indicates variant $a$ is better; negative $\Delta$ in angle/$\ewRMSE$ indicates variant $a$ is better. WSS is excluded and reported in Supplementary~\Cref{sec:wss}. Abbreviations: \textsc{D+M-CF} = direction+magnitude oracle probe; \textsc{D-CF} = direction oracle probe; \textsc{M-CF} = magnitude oracle probe; \textsc{GEO} = geometry-only input schema.}
\label{tab:bootstrap}
\begin{tabularx}{\linewidth}{@{}p{0.29\linewidth}p{0.12\linewidth}p{0.10\linewidth}p{0.20\linewidth}p{0.19\linewidth}@{}}
\toprule
Contrast ($a$ vs. $b$) & metric & $\Delta$ & 95\% CI & winner \\
\midrule
\textsc{D+M-CF} vs. \textsc{D-CF}
 & angle ($^\circ$) & $+0.70$ & $[+0.63, +0.81]$ & \textsc{D-CF} \\
 & PP@10 & $-0.009$ & $[-0.036, +0.018]$ & --- \\
\midrule
\textsc{D+M-CF} vs. \textsc{GEO}
 & angle ($^\circ$) & $-41.28$ & $[-52.4, -30.1]$ & \textsc{D+M-CF} \\
 & PP@10 & $+0.186$ & $[+0.082, +0.282]$ & \textsc{D+M-CF} \\
\midrule
\textsc{D-CF} vs. \textsc{M-CF}
 & angle ($^\circ$) & $-54.36$ & $[-76.9, -40.7]$ & \textsc{D-CF} \\
 & PP@10 & $+0.204$ & $[+0.098, +0.322]$ & \textsc{D-CF} \\
\bottomrule
\end{tabularx}
\end{table}

\subsection{Magnitude information without an orienting frame is
non-identifiable (testing \Cref{thm:mag-obs})}
\label{sec:mag-only}

The most striking observation is the asymmetry. \magonly{} supplies exact CFD-derived velocity magnitudes at every node but replaces the unit-direction channel with the case-global PCA flow axis. Despite this oracle-probe access, \magonly{} achieves a mean angular error of $61.97 \pm 2.41^\circ$, \emph{worse} than \geoschema{} ($45.72 \pm 1.51^\circ$), which has no target information at either channel. PP@10 collapses to $3\!\times\!10^{-4}$ (chance) and $\ewRMSE$ rises to $0.474$, the highest of any variant. The contrast \dironly{} vs.\ \magonly{} has bootstrap CIs that cleanly separate on every velocity-vector metric (\Cref{tab:bootstrap}, bottom rows).

Two pieces of evidence anchor this as a reproducible effect, not a transient convergence failure. (i) Training-efficiency telemetry (\Cref{tab:efficiency}) shows all three \magonly{} seeds plateau within $\sim\!1000$ epochs at validation PP@10 $\le 0.01$---the same poor solution from three initialisations, not a failure to converge. (ii) Per-pathology stratification (\Cref{fig:fig5}) shows the effect is uniform: \magonly{} angular error exceeds \geoschema{} on the healthy case, both rigid coarctations, and both FSI cases.

\paragraph*{Why \magonly{}$>$\geoschema{} strengthens rather than contradicts the certificate.}
The certificate orders \emph{Bayes} risks, and that order runs the other way. Because the magnitude probe only enlarges the input $\sigma$-algebra, $\mathcal{A}_{\mathcal{I}_{\mathrm{geo}}}\subseteq\mathcal{A}_{\mathcal{I}_{\mathrm{geo+mag}}}$, the Bayes risk is monotone, $\mathcal{R}^{\star}_{\mathcal{I}_{\mathrm{geo+mag}}}\le\mathcal{R}^{\star}_{\mathcal{I}_{\mathrm{geo}}}$: the Bayes-optimal predictor with an exact magnitude channel cannot be worse than the geometry-only one. The measured inequality \magonly{}$>$\geoschema{} is therefore a property of the \emph{realised estimator} under finite data and gradient training, not of the information bound---and the mechanism follows from the certificate. The magnitude channel adds nothing to $\mathcal{A}_{\mathcal{I}}$ that resolves the orientation quotient (\Cref{prop:quotient}\,(ii), \Cref{lem:sign}), so the squared-error objective spends capacity honouring an exact amplitude it has no frame to orient, destabilising the direction fit; the three seeds settle into a single stable basin (\Cref{tab:efficiency}), so the gap is reproducible, not an optimisation artefact. The finding is the stronger one: an \emph{exactly correct} auxiliary channel can degrade a finite-sample vector predictor whenever the schema supplies no frame in which to consume it---invisible to the Bayes-risk certificate by construction, but measurable by the audit.

\subsection{Per-case and per-pathology stratification}

Beyond aggregates, the five test cases span three subgroups---\texttt{0007} (healthy), \texttt{0017}/\texttt{0020} (rigid coarctation), and \texttt{0225}/\texttt{0226} (coarctation with fluid--structure interaction)---rendered case-by-case and stratum-by-stratum in \Cref{fig:fig3,fig:fig5}.

Peak localisation is retained only as a transparency diagnostic: it is statistic-dependent (median-of-medians $49.9 \pm 11.7$ mm for \dirmagprobe{} vs.\ $37.4 \pm 0.7$ mm for \geoschema{} reverses sign under mean-of-means; \Cref{tab:peakloc}), the per-case picture being dominated by two FSI cases on which \geoschema{} places the peak in physiologically implausible distal locations.

\subsection{Additive decomposition matches the theorem's interaction
prediction (testing \Cref{cor:gnnaudit})}

The four-variant result casts as an additive decomposition. With $M(\cdot)$ a velocity-vector metric (angle or PP@10) and $D,G$ indicating whether the direction and magnitude channels carry target information ($T$) or a geometric proxy ($G$),
\begin{equation}
M_{\text{cf}}^{D,G} = M_{\text{base}} + \alpha\,\mathbb{1}[D{=}T] + \beta\,\mathbb{1}[G{=}T] + \gamma\,\mathbb{1}[D{=}T]\,\mathbb{1}[G{=}T].
\end{equation}
Estimated on the angular error, the direction-probe main effect is large and negative ($\hat{\alpha}_{\text{dir}} = -42.56^\circ$); the magnitude-probe main effect is positive ($\hat{\beta}_{\text{mag}} = +16.25^\circ$, the channel non-identifying under $\mathcal{I}_{\mathrm{geo+mag}}$; cf.\ \Cref{prop:quotient}(ii), \Cref{lem:sign}); and the interaction is negative ($\hat{\gamma} = -15.94^\circ$), so the magnitude-only effect is approximately cancelled once the orienting direction channel is supplied. \Cref{fig:fig6} visualises the same decomposition for PP@10. The interaction encodes the physical statement: \emph{magnitude information is consumable only when a faithful direction frame is also provided}---the empirical realisation of the architecture-invariance bracket (\Cref{prop:arch-invariance}).

\subsection{The identifiability limit is not architecture-specific
and not proxy-specific (testing P1--P2)}
\label{sec:robustness}

Two robustness checks confirm the asymmetric pattern is a property of the physical system and the available input channels, not of the FlowGAT backbone or geometric proxy.

\textbf{Architecture.} A structurally distinct GraphSAGE backbone---mean aggregation, no attention, no edge-bias, no hard-no-slip head---retrained on the bracketing direction-probe vs.\ geometry-only contrast reproduces the quantitative pattern within seed-to-seed scatter: $\PPdir{10} = 0.985 \pm 0.019$ versus FlowGAT $0.955 \pm 0.038$ under a direction oracle probe, and $\PPdir{10} = 0.122 \pm 0.088$ versus FlowGAT $0.136 \pm 0.078$ without one. The Sign-Degeneracy Lemma's prediction---$\fracflip = 0$ for direction-probe variants and $\fracflip > 0$ without---is preserved across both backbones.

\textbf{Geometric proxy.} Replacing the case-global PCA direction with a per-node Frenet tangent from a medial-axis skeleton does not narrow the residual angular gap (\centerlinegeo{} $\cosgn = +0.799 \pm 0.131$ vs.\ \geoschema{} $+0.798 \pm 0.177$). Refining the geometric prior at every interior node does not move the per-node sign degeneracy, which \Cref{lem:sign} predicts is irreducible without a direction channel. Full tables for both checks are in Supplementary \Cref{sec:supp_sage,sec:supp_centerline}.

\subsection{Full-variant replication across architectures, domains and BC mechanism}
\label{sec:e8_robustness}

Three further experiments stress the pattern beyond the FlowGAT--VMR audit, each designed as a falsifiability test (full tables, $De$/$\varepsilon$ stratification and data in Supplementary \Cref{sec:supp_sage,sec:supp_archdomain}). A four-variant GraphSAGE sweep on the curved-tube Cosserat domain (60 cases) and the realistic-tube sUbend domain (150 cases)~\cite{dirix2024synthesizing} reproduces the step-function within seed scatter: direction-bearing variants reach $\fracflip=0$, $\cosgn>+0.99$, while no-direction variants sit at $\fracflip\in[0.22,0.39]$, the absolute step $\Delta\PPdir{10}$ being $\approx0.97$ on Cosserat and $\approx0.72$--$0.76$ on sUbend, with no intermediate variant between the clusters. The Cosserat sweep ($De\in[0,0.35]$) confirms \Cref{thm:dir-id}(a): the direction-bearing median angle rises monotonically from $1.1^\circ$ to $1.8^\circ$ across $De$ terciles (slope $\approx1.8^\circ$ per unit $De$), whereas no-direction variants sit two orders of magnitude higher ($46$--$94^\circ$) and track their flip rate rather than curvature. Disabling FlowGAT's no-slip-loss head leaves the step-function unchanged (\Cref{sec:methods_nobc}), so the magnitude collapse is a structural consequence of the input schema, not the wall-BC head; the $\Delta p$-collapse anomaly (Limitations) likewise replicates on the SAGE backbone.

\subsection{Straight-tube Womersley benchmark isolates the
sign-degeneracy (testing \Cref{lem:sign} and P3--P4)}
\label{sec:womersley}

The asymmetric pattern on the VMR cohort---geometry-redundant direction, dynamics-essential magnitude---should not depend on the dataset. We replicate the four-variant ablation on a controlled synthetic benchmark whose velocity field is given analytically by the Womersley solution for pulsatile flow in a straight rigid tube~\cite{womersley1955}: $34$ cases (24/4/6 train/val/test split) sampling tube radius, length, Womersley number, pressure amplitude and cycle phase (ranges in Methods), trained on the same FlowGAT backbone and hyperparameters. On a straight cylinder the geometric direction prior is exact---the local centreline tangent coincides with the case-global PCA axis at every node---so the benchmark isolates the per-node oracle-channel contribution from any residual benefit of curvature-aware direction recovery, and tests whether the VMR pattern survives outside the in-vivo cohort.

Two VMR headline metrics transfer poorly and we exclude them by design: PP@10 saturates to zero across all variants (bulk magnitude is overshot $3$--$5\times$), and WSS $R^2$ is ill-defined because the constant-radius wall-shear field has near-zero variance---neither is a model failure mode (Supplementary \Cref{sec:wss}). We replace both with the direction success rate $\PPdir{\theta}$ (fraction of HE-mask nodes within angle $\theta$), the peak-normalised vector error $\mathrm{PP}_\mathrm{peak}@\delta$, and the signed cosine $\cosgn$ with per-node flip rate $\fracflip = \Pr[\hat{\mathbf{u}}_\mathrm{pred}\!\cdot\!\hat{\mathbf{u}} < 0]$. \Cref{tab:cross_domain} reports the four variants on both domains.

The direction-bearing variants show high-variance $\PPdir{10}$ yet stable $\cosgn=0.96$ and $\fracflip=0$: their median angle ($14$--$16^\circ$) straddles the $10^\circ$ cone, so a small seed shift moves many nodes across the threshold while location and sign barely move. We therefore adopt $\cosgn$ and the flip rate as the primary Womersley direction metrics; the elevated median angle relative to VMR ($\sim\!15^\circ$ vs.\ $\sim\!3^\circ$) reflects the annular Womersley phase structure, a continuous difficulty shift rather than a sign failure. Three further observations stand out.

\begin{table}[H]
\centering
\small
\begin{tabular}{lccccc}
\toprule
variant & domain & $\PPdir{10}$ & $\cosgn$ & $\fracflip$ & angle\textsuperscript{med} ($^\circ$) \\
\midrule
\dironly{} & VMR        & $0.988 \pm 0.017$ & $+0.999 \pm 0.001$ & $0.000$ & $2.7 \pm 0.6$ \\
\dironly{} & Womersley  & $0.309 \pm 0.275$ & $+0.963 \pm 0.031$ & $0.000$ & $14.4 \pm 6.3$ \\
\dirmagprobe{} & VMR       & $0.955 \pm 0.038$ & $+0.998 \pm 0.001$ & $0.000$ & $3.2 \pm 0.9$ \\
\dirmagprobe{} & Womersley & $0.231 \pm 0.277$ & $+0.952 \pm 0.038$ & $0.000$ & $16.4 \pm 7.1$ \\
\midrule
\geoschema{} & VMR        & $0.136 \pm 0.078$ & $+0.798 \pm 0.177$ & $\mathbf{0.159 \pm 0.123}$ & $34.8 \pm 14.9$ \\
\geoschema{} & Womersley  & $0.041 \pm 0.108$ & $+0.306 \pm 0.603$ & $\mathbf{0.331 \pm 0.445}$ & $67.3 \pm 41.9$ \\
\magonly{} & VMR       & $0.051 \pm 0.056$ & $+0.568 \pm 0.537$ & $\mathbf{0.240 \pm 0.209}$ & $50.1 \pm 35.9$ \\
\magonly{} & Womersley & $0.008 \pm 0.030$ & $+0.267 \pm 0.546$ & $\mathbf{0.242 \pm 0.422}$ & $71.6 \pm 37.6$ \\
\bottomrule
\end{tabular}
\caption{\textbf{Direction-identifiability metrics across VMR and Womersley test sets} ($n=3$ seeds). Variants supplied with a per-node direction oracle probe (top two rows in each block) carry the instantaneous unit-direction in the feature; variants without (bottom two rows) must recover direction from the mesh. The qualitative split $\fracflip=0$ vs.\ $\fracflip>0$ is preserved across domains.}
\label{tab:cross_domain}
\end{table}

First, the four-variant ordering survives the change of domain: on both VMR and Womersley the direction-probe variants achieve $\fracflip = 0$ and $\cosgn > +0.95$, while the geometry-only family has $\cosgn$ degraded to $+0.3$--$+0.8$ and $16$--$33\%$ of HE nodes with reversed direction. The cylinder---where the geometric direction prior is exact---does \emph{not} rescue them, confirming that \geoschema{}'s failure on VMR is driven by the absence of a per-node direction channel, not aortic curvature.

Second, the geometry-only failure mode is a per-node \emph{sign degeneracy}, not a continuous angular error: the distribution of $\hat{u}_\mathrm{pred}\!\cdot\!\hat{u}$ across HE nodes is bimodal for the geometry-only family (a dominant mode at $+1$, a secondary mode at $-1$), so the folded angle reports a misleadingly intermediate value while the signed cosine exposes the mechanism (\Cref{fig:phase_cos}); direction-probe variants are unimodal near $+1$ on both domains.

\begin{figure}[H]
\centering
\includegraphics[width=0.85\linewidth]{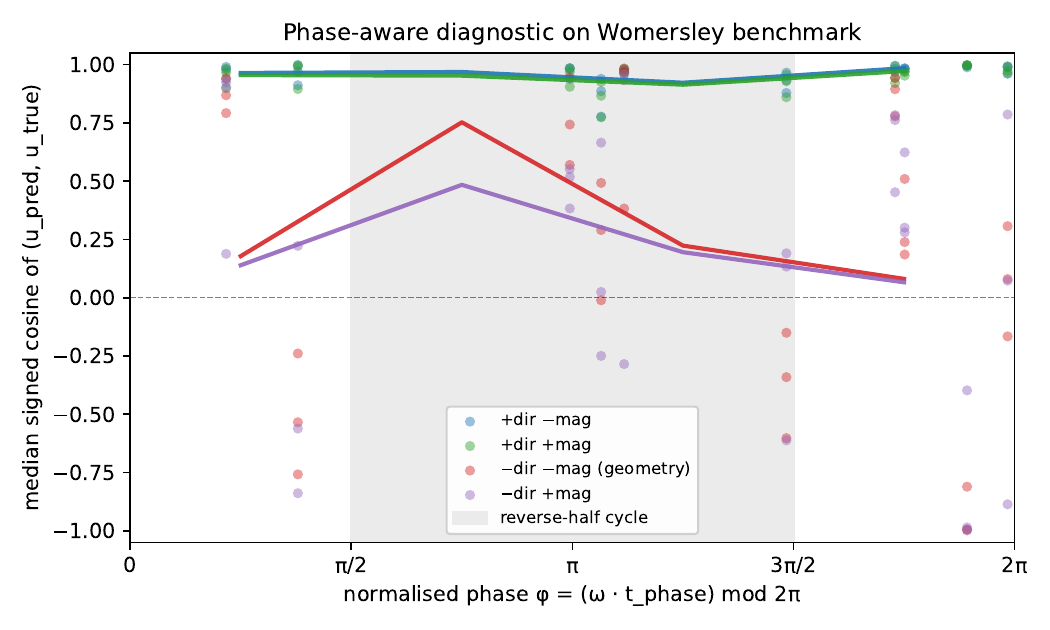}
\caption{\textbf{Signed cosine on the Womersley benchmark across the
cycle phase.} Each point is one test or val case ($n=10$ per variant,
3 seeds pooled); $x$-axis is the normalised cycle phase
$\varphi=(\omega\!\cdot\!t_\mathrm{phase})\bmod 2\pi$. Direction-channel
oracle-probe variants (\wmdir{}, \wmwith{}) are unimodal near $+1$
and phase-invariant. Geometry-only-schema-family variants
(\wmnoleak{}, \wmmag{}) show
case-to-case bimodality between $+1$ and $-1$; the flip rate is
$0.39$ in the forward-flow half-cycle ($\cos\omega t_\mathrm{phase}>0$)
and $0.33$ in the reverse-flow half-cycle, indicating that the
mechanism is \emph{not} the naive ``model learned a fixed axial
polarity and gets caught when the true flow reverses.''}
\label{fig:phase_cos}
\end{figure}

Third, the sign degeneracy is \emph{not} phase-locked to the bulk forcing: resolving cases by cycle phase $\varphi=(\omega t_\mathrm{phase})\bmod 2\pi$ (\Cref{fig:phase_cos}), the \wmnoleak{} flip rate is indistinguishable between forward-flow ($\fracflip = 0.39$) and reverse-flow ($\fracflip = 0.33$) half-cycles. The failure is therefore \emph{case-level chaotic}---without a direction channel the network has no consistent local frame and resolves the per-case ambiguity from initialisation and geometry, not flow phase.

\subsection{Continuity is learned as a data pattern, not a physical law (testing P5)}

On the VMR cohort the predicted and ground-truth divergence residuals agree to within a factor of $\sim\!1.5$ across all four variants, which could be read as internalised mass conservation. The Womersley cross-domain check refutes that reading: on the analytic benchmark, where the true field is divergence-free to machine precision, the predicted divergence is $\sim\!10^2$ times larger for every variant---including \wmwith{}, which receives the exact velocity at input. Apparent mass conservation on patient data is therefore a learned property of the training distribution, not an imposed physical law, and offers no guarantee off-cohort. The estimator validation, normalisation, and full per-variant numbers are given in Supplementary \Cref{sec:continuity}.

\subsection{Transfer check: the same audit on a non-flow target}
\label{sec:advdiff}

The audit's value as a \emph{method} rests on its steps being problem-agnostic. To demonstrate this we carry the identical seven-step procedure through a target with no flow content: steady one-dimensional advection--diffusion $a\,u_x = D\,u_{xx}$ on $[0,1]$ with an unknown inflow boundary condition. Its closed form factorises exactly as the velocity target did, into a \emph{shape} fixed by the P\'eclet number $\mathrm{Pe}=a/D$ (the role of geometry), an \emph{amplitude} fixed by the boundary condition (the role of the inflow waveform), and an \emph{orientation} $s=\pm1$ that $|\mathrm{Pe}|$ leaves ambiguous up to a $\mathbb{Z}_2$ sign (full derivation in Supplementary \Cref{sec:supp_advdiff}). Steps five--seven---the Bayes-optimal predictor under each of the four input schemas over $6000$ random admissible boundary conditions---reproduce the flow audit term for term (\Cref{fig:advdiff}): the geometry-only schema recovers the shape but sits on the $\mathbb{Z}_2$ floor ($\mathrm{PP_{dir}}@10^\circ=0.50$, flip rate $0.50$); a magnitude channel collapses the amplitude error but leaves the flip rate at $0.50$; an orientation channel instead removes the flip ($\mathrm{PP_{dir}}@10^\circ=1.00$, $\fracflip=0$) up to a $1$--$2^\circ$ shape residual; only the full schema resolves both. Sharing none of the Cosserat machinery, the target still reproduces the certificate's three classes and four-schema signature---supporting the claim that the procedure, not the particular flow factorisation, is the contribution.

\begin{figure}[t]
\centering
\includegraphics[width=0.99\linewidth]{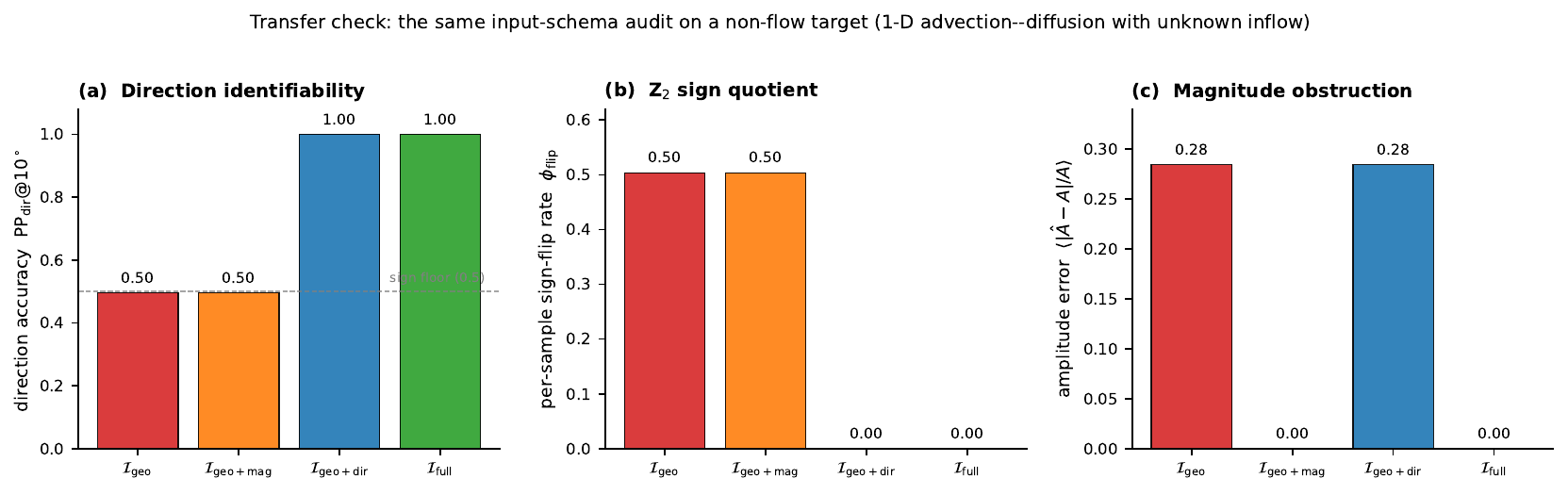}
\caption{\textbf{The input-schema audit transfers to a non-flow target.}
The seven-step audit applied to steady 1-D advection--diffusion with an unknown
inflow boundary condition, with no Cosserat rod, mesh, or Navier--Stokes content.
\textbf{(a)} Direction identifiability ($\mathrm{PP_{dir}}@10^\circ$): the
geometry-only and magnitude-only schemas sit on the $\mathbb{Z}_2$ sign floor of
$0.5$; a direction channel reaches $1.0$. \textbf{(b)} The predicted per-sample
sign-flip rate is $0.5$ without an orienting channel and $0$ with one.
\textbf{(c)} Magnitude (amplitude) obstruction: only the schemas carrying the
boundary-set amplitude recover it. The signature is identical to the velocity
audit of \Cref{fig:fig2,fig:fig6}, computed over an ensemble of $6000$ random
admissible boundary conditions.}
\label{fig:advdiff}
\end{figure}

\section{Discussion}

\subsection{Physical interpretation: geometry carries direction, dynamics carries magnitude}

The asymmetry has a direct mechanics interpretation. A patient-specific aorta is, to leading order, a curved tube whose time-averaged velocity is dominated by an axial component aligned with the centreline tangent, with secondary Dean/recirculation components of higher order~\cite{ku1997bloodflow,steinman2002}. A node-wise tangent estimate---from PCA or skeletonisation---is therefore already a good proxy for $\unit{u}$, which FlowGAT recovers from the mesh. The magnitude field admits no such recovery: by continuity it is set by the inflow waveform $Q(t)$, luminal area, and downstream impedance, none of which is in the local mesh neighbourhood message passing sees. Supplying exact magnitude without direction asks the network to place an amplitude in a divergence-constrained field for which it has no local frame, so the channel is non-identifying---hence \magonly{} below \geoschema{}---yet the same amplitude becomes consumable once a direction channel is added. This matches a Helmholtz reading: the curl-free component is geometry-redundant \emph{up to a sign}, and even on a perfectly axial Womersley cylinder (\Cref{sec:womersley}) geometry-only variants flip the predicted velocity on a non-negligible fraction of cases. Analysis and data agree: \emph{geometry sets the direction up to a sign; the sign is a boundary-condition quantity}.

\subsection{Implications for geometry-conditioned vascular-flow surrogates}

Several recent works report aggregate accuracy on full feature stacks without isolating per-feature contributions~\cite{suk2024gemgcn,suk2024labgatr,tabe2026pignn,lannelongue2026npj,pegolotti2024}, with two consequences for how those numbers should be read. First, an aortic surrogate reporting near-CFD angular accuracy may be doing little more than reading the local tangent off the mesh---the apparent ``velocity prediction'' is then closer to a geometry-conditioned tangent estimator. Second, the converse: a surrogate lacking a faithful magnitude pathway (boundary conditions, inflow waveforms, or pressure features) cannot recover magnitude from mesh inputs alone, so reporting headline magnitude metrics without ablating those non-geometric channels overstates haemodynamic content.

We therefore propose the four-variant direction/magnitude decomposition as a routine audit step for vascular-flow GNN surrogates: the variants are cheap (a one-line input-tensor change, one model fit each), and the resulting plot (\Cref{fig:fig2,fig:fig6}) distinguishes ``my model learned the flow'' from ``my model read the tangent and copied the magnitude.'' Statistical-shape-model augmentation~\cite{tabe2026pignn,pegolotti2024} improves coverage but does not change the schema's identifiability content, so it complements rather than substitutes for this check (seven-item disclosure checklist in Supplementary \Cref{sec:supp_checklist}).

\subsection{Implications for physics-informed machine learning}
\label{sec:piml-implications}

We present the Cosserat-rod certificate as a worked \emph{prototype} of an input-schema identifiability analysis, not a universal PIML theorem: it is the instance an asymptotic reduction makes closed-form for slender incompressible tubular flow. The transferable object is the recipe---take the governing PDE, find a reduced-order factorisation of the target, and read off which factors are measurable from the declared input $\sigma$-algebra---re-derived per problem. The theorem makes three contributions to the broader programme.

\textbf{Physics-as-audit alongside physics-as-constraint.} Soft-constraint PINN losses~\cite{raissi2019pinn,karniadakis2021pinn}, hard-constrained equivariant backbones~\cite{satorras2021egnn,suk2024labgatr,hsu2026tsenn}, conservation-respecting graph networks~\cite{sharma2025dynamical}, operator learning~\cite{li2021fno,kontolati2024neuraloperator,grimm2025jcp}, and neural-operator finite-element hybrids~\cite{ouyang2026noem} all use physics to shrink the predictor's hypothesis class, but none tells the practitioner which target components are recoverable from a given input set. \Cref{thm:dir-id,thm:mag-obs,lem:sign} do: from a geometry-only input the direction field is recoverable to within $(C_1\epsilon + C_2 De)$ and the magnitude field is \emph{not}. A PINN whose continuity residual is enforced on such an input cannot improve below the sign-degeneracy bound, and an equivariant backbone acting on ambient $\mathbb{R}^3$ but not on the per-component sign cannot remove $p^\star$. We accordingly propose a routine identifiability disclosure: $\epsilon, De$ ranges, $p^\star$ on a direction-stripped input, and the \dironly{}-vs-\geoschema{} gap.

\textbf{Which interventions can in principle buy a guarantee.} Because \Cref{lem:sign} concerns the declared input set, not the loss, the obstruction vanishes only when the input is enlarged with an asymmetric channel (a proximal inflow waveform $Q(t)$, a cycle-phase scalar, a signed centreline tangent from 4D-flow MRI streamlines~\cite{ferdian2020}, or an inflow-aware boundary loss~\cite{westerhof2009arterial}); loss-side regularisers such as a Helmholtz-projected divergence-free residual~\cite{liu2024multiresolution} leave the input ambiguity intact. We enumerate the concrete interventions, and the theorem's upper bound on their expected gains, in Supplementary \Cref{sec:supp_piml_routes}.

\textbf{The continuity finding is consistent with the theorem.} A predictor trained on end-to-end velocity-MSE inherits the divergence statistics of its training distribution, not of the analytic field; \Cref{eq:closedform} concerns the \emph{true} field $\mathbf{u}_\star$ and says nothing about the loss. The cardiovascular PIML community has often read $\nabla\!\cdot\!\hat{\mathbf{u}} \approx 0$ as soft validation of physical content~\cite{tabe2026pignn,lannelongue2026npj}; \Cref{sec:continuity} shows near-zero predicted divergence need not imply learned mass conservation. A genuinely mass-conserving surrogate would need a Helmholtz-projected output head or a non-vanishing continuity residual in the training loss.

In short, the identifiability theorem and its audit do not compete with PIML; they tell PIML what to constrain and tell data-driven vascular-flow modellers what they may claim without further architectural or training-time interventions.

\subsection{The headline claim rests on analytical agreement and
cross-domain replication, not on a $p$-value}
\label{sec:n5-framing}

Our patient-specific test cohort has $n = 5$ cases---a property of the \emph{public-data regime} for this precise input--target audit, not a claim that vascular imaging data do not exist. The Vascular Model Repository~\cite{vmr} is the community's open benchmark, and to our knowledge no substantially larger public, reusable benchmark pairing aortic geometry with volumetric CFD velocity and boundary-condition provenance exists; larger private or synthetic cohorts serve engineering but are not reusable verification benchmarks. Bootstrap intervals at $n=5$ are wide, so the headline result is \emph{the theorem and its falsifiable audit}, not a $p$-value, resting on two cohort-size-independent properties. \emph{(i)~Agreement with the analytic theorem:} the Cosserat-rod factorisation (\Cref{thm:dir-id,thm:mag-obs,lem:sign,cor:gnnaudit}) predicts from first principles the sign and approximate magnitude of every contrast---the direction-probe gap $(C_1\epsilon + C_2 De)\sim5^\circ$, the flip rate $p^\star\in(0,0.5]$ for the geometry-only family, and the $\fracflip=0$ ceiling with a direction probe---and the data match all four to leading order. \emph{(ii)~Replication on the analytic Womersley benchmark:} with exact ground truth and an exact geometric direction prior, the split $\fracflip=0$ with a direction probe versus $\fracflip>0$ without is preserved across both domains (\Cref{tab:cross_domain}) and persists even on the straight cylinder. Because the Womersley and advection--diffusion ensembles are generated \emph{from} the factorisation the certificate analyses (\Cref{eq:closedform}; \Cref{sec:advdiff}), they confirm a trained network exhibits the predicted split under exact ground truth; only the patient cohort puts the Cosserat reduction itself at risk. We therefore read the design as five out-of-model cases plus an analytically necessary mechanism, not as $46+34$ independent validations.

Per-case panels, paired bootstrap CIs, and per-pathology stratification (\Cref{fig:fig3,fig:fig5,tab:bootstrap}) are reported as variability disclosures, with two explicit small-cohort limitations: the peak-localisation metric is statistic-dependent (so we treat it as a transparency diagnostic), and the WSS $R^2$ is negative across all four variants---a known difficulty for cardiovascular GNN surrogates~\cite{tabe2026pignn} we do not resolve---so WSS numbers are relative contrasts only.

\subsection{Limitations and outlook}

\textbf{Asymptotic regime.} The Cosserat-rod reduction underlying \Cref{thm:dir-id,thm:mag-obs} is asymptotic in the slenderness $\epsilon = R/L$ and the Dean number $De$, with $C_1, C_2$ pinned only at order of magnitude ($\epsilon \in [0.04, 0.18]$, $De \in [0.03, 0.22]$ across the present cohort; \Cref{sec:diagnostics-dean}). Geometries outside this window---tight coronary bifurcations with $De \gtrsim 1$, fully turbulent aneurysms, or strongly non-axisymmetric pathologies---may carry non-perturbative secondary-flow contributions the theorem does not cover, and the bracket should be re-derived there.

\textbf{Architectural coverage and external audit.} The four-variant step-function replicates on two message-passing families (GATv2, GraphSAGE), two synthetic domains, and with the no-slip-loss head disabled (\Cref{sec:e8_robustness}); replication on $E(3)$-equivariant backbones and other vascular territories remains open. The oracle-channel interventions are applied only to surrogates we trained. Running the protocol on third-party architectures such as GEM-GCN~\cite{suk2024gemgcn} or LaB-GATr~\cite{suk2024labgatr}---which by \Cref{prop:arch-invariance} must inherit the same input-schema bound---is the most direct external test and is left as future work; our claim about those models is a prediction of the certificate, not a measured result.

\textbf{Single-metric $\Delta p$ reporting is structurally misleading.} On the sUbend domain the \geoschema{} variant achieves \emph{lower} scalar pressure-drop MAE than \dirmagprobe{} despite $11\times$ worse angular and $4\times$ worse WSS error---a Bernoulli-collapse that matches the $\Delta p$ integral while wrong on every directional and wall-shear metric. It replicates on the SAGE backbone (\Cref{sec:e8_robustness}), so it is structural, not an implementation quirk: evaluation suites should pair directional metrics with scalar $\Delta p$ summaries, which alone cannot discriminate a direction-identifying from a magnitude-collapsed predictor.

\textbf{Outlook.} Immediate extensions---a finer direction-probe strength sweep, probing the \geoschema{} model's representations for an implicit direction prior, and a larger seed sweep to narrow CIs---refine the audit without changing the core proposal. Finally, the pre-registered hypothesis that the geometry-only sign degeneracy is \emph{phase-locked} to the bulk forcing is not supported---per-case flip rates are indistinguishable between forward- and reverse-flow half-cycles---so case-level sign chaos absent a direction channel is the stronger statement the data support.

\section{Methods}
\label{sec:methods}

\subsection{Formal identifiability certificate}
\label{sec:certificate-formal}
This subsection collects the formal apparatus behind the certificate summarised in \Cref{sec:theory}: the admissible flow class and declared input schemata, the Cosserat-rod reduction, and the direction, magnitude, quotient and audit-bound statements. Full proofs of \Cref{thm:dir-id,prop:quotient,prop:arch-invariance,lem:sign} are given in Supplementary \Cref{sec:proofs}.

\begin{definition}[Admissible flow class, declared input schema and identifiability]
\label{def:input-schema}
Let $\mathcal{F}$ be the \emph{admissible class} of laminar
incompressible Navier--Stokes flows
$f = (\Omega, Q(\cdot), \mathcal{B}_{\mathrm{out}})$ on Cosserat-rod
lumens in the slender, low-Dean regime
$\epsilon, De \le \bar{\epsilon}$ (\Cref{sec:cosserat}), where
$\Omega\subset\mathbb{R}^3$ is a vessel-lumen mesh, $Q(t)$ is an
admissible proximal inflow waveform on $[0,T]$, and
$\mathcal{B}_{\mathrm{out}}$ is an admissible outlet boundary
condition (Windkessel, prescribed pressure, or rigid impedance).
Equip $\mathcal{F}$ with its Borel $\sigma$-algebra
$\mathcal{A}_{\mathcal{F}}$ generated by the natural topology on
meshes, waveforms and impedances. Let
$Y:\mathcal{F}\to\mathcal{Y}\subseteq L^2(\Omega\times[0,T];\mathbb{R}^3)$
be the target functional that maps each admissible flow to its
volumetric velocity field
$Y(f) = \mathbf{u}_f$.

A \emph{declared input schema} is any measurable map
$\mathcal{I}:\mathcal{F}\to\mathcal{X}_{\mathcal{I}}$ to a feature
space $\mathcal{X}_{\mathcal{I}}$. It induces a pull-back
$\sigma$-algebra
\(
\mathcal{A}_{\mathcal{I}}
\;=\;
\sigma(\mathcal{I})
\;=\;
\big\{\,\mathcal{I}^{-1}(B)\,:\,B\subseteq\mathcal{X}_{\mathcal{I}}\text{ measurable}\,\big\}
\;\subseteq\;\mathcal{A}_{\mathcal{F}}.
\)
The four schemata used in this paper are:
$\mathcal{I}_{\mathrm{geo}}$ (mesh-derived features
$\mathcal{G}(\Omega)$: node coordinates, adjacency, surface normals,
centreline distance, local radius, unsigned global/local geometric
axes); $\mathcal{I}_{\mathrm{geo+mag}}=\mathcal{I}_{\mathrm{geo}}\oplus\|\mathbf{u}_f\|$
(geometry plus exact non-negative speed); $\mathcal{I}_{\mathrm{geo+dir}}=\mathcal{I}_{\mathrm{geo}}\oplus\hat{\mathbf{u}}_f$ (geometry plus exact signed unit direction); and $\mathcal{I}_{\mathrm{full}}=\mathcal{I}_{\mathrm{geo}}\oplus\mathbf{u}_f$ (full oracle).

A target $Y$ is \emph{$\mathcal{I}$-identifiable to tolerance $\eta$} if
there exists a measurable map
$F_{\mathcal{I}}:\mathcal{X}_{\mathcal{I}}\to\mathcal{Y}$ such that
$\|Y(f)-F_{\mathcal{I}}(\mathcal{I}(f))\|_{L^2}\le\eta$ for all
$f\in\mathcal{F}$. Equivalently, $Y$ is $\eta$-identifiable iff $Y$
is $\mathcal{A}_{\mathcal{I}}$-measurable up to $L^2$-error $\eta$.
Given an equivalence relation $\sim$ on $\mathcal{Y}$ with quotient
projection $\pi_{\sim}$, $Y$ is \emph{quotient-identifiable modulo
$\sim$} if $\pi_{\sim}\circ Y$ is
$\mathcal{A}_{\mathcal{I}}$-measurable while $Y$ itself is not.

A pair $(f_1,f_2)\in\mathcal{F}^2$ is called an
\emph{$\mathcal{I}$-collision} if $\mathcal{I}(f_1)=\mathcal{I}(f_2)$
but $\|Y(f_1)-Y(f_2)\|_{L^2}>0$. Existence of an $\mathcal{I}$-collision
is a witness of \emph{non-identifiability}. The Bayes-optimal
$\mathcal{I}$-measurable predictor of $Y$ is the conditional
expectation
$\Phi^{\star}_{\mathcal{I}}(x)=\mathbb{E}_{f\sim\mathcal{F}}[\,Y(f)\mid\mathcal{I}(f)=x\,]$,
whose excess error
$\mathcal{R}^{\star}_{\mathcal{I}} = \mathbb{E}_{f\sim\mathcal{F}}\big\|\,Y(f) - \Phi^{\star}_{\mathcal{I}}(\mathcal{I}(f))\big\|_{L^2}^2$
is the input-schema Bayes risk, with the expectation taken over the admissible
class $\mathcal{F}$ equipped with its declared prior. All four identifiability statements
proven below are statements about
$\mathcal{A}_{\mathcal{I}_{\mathrm{geo}}}$,
$\mathcal{A}_{\mathcal{I}_{\mathrm{geo+mag}}}$ and
$\mathcal{R}^{\star}$, independent of any predictor's class,
loss, optimiser, regulariser or training procedure
(\Cref{prop:arch-invariance}).
\end{definition}

\subsection{Cosserat-rod reduction of the lumen}
\label{sec:cosserat}

Let $\Omega \subset \mathbb{R}^3$ denote a patient-specific vessel lumen with a smooth medial axis $\gamma : [0, L] \to \mathbb{R}^3$, parameterised by arc-length $s\in[0,L]$, and assume every interior point $\mathbf{x}\in\Omega$ has a unique nearest centreline point $\gamma(s)$. Let $\unit{T}(s) = \gamma'(s)$ be the unit tangent, $\unit{N}(s)$ and $\unit{B}(s)$ the Frenet normal and binormal, and $\kappa(s)$ and $\tau(s)$ the curvature and torsion. The local cross-section
$\Sigma(s)=\Omega\cap\{\mathbf{x}:(\mathbf{x}-\gamma(s))\cdot\unit{T}(s)=0\}$
is, to leading order in vessel slenderness, a disc of radius $R(s)$, so that
\begin{equation}
\mathbf{x}(s,r,\theta)=\gamma(s)
+r\cos\theta\,\unit{N}(s)+r\sin\theta\,\unit{B}(s),
\qquad r\in[0,R(s)],\;\theta\in[0,2\pi).
\label{eq:rodcoords}
\end{equation}
This is the standard Cosserat-rod representation used by reduced haemodynamic solvers~\cite{antman2005nonlinear,formaggia20031dbloodflow,sherwin20031dvascular,pegolotti2024}. Two small parameters control the asymptotics: the slenderness $\epsilon=\max_s R(s)/L$ and the local Dean parameter $De(s)=R(s)\kappa(s)$, which captures the geometric part of the curvature-induced correction. Across the $46$-case VMR cohort both quantities remain in the small-parameter regime used by the reduction (\Cref{sec:diagnostics-dean}).

\subsection{Axisymmetric leading-order ansatz}
\label{sec:axisymm}

For laminar incompressible flow in a slender tube, expanding the velocity field in $\epsilon$ and $De$ and dropping $\mathcal{O}(\epsilon)$ axial-gradient terms and $\mathcal{O}(De)$ secondary-flow corrections yields an axisymmetric leading-order field
\begin{equation}
\mathbf{u}_\star(s,r,t)=u_z(s,r,t)\,\unit{T}(s)
+\mathcal{O}(\epsilon,De).
\label{eq:axiansatz}
\end{equation}
Mass conservation and the no-slip condition reduce the axial component to a one-dimensional flux constraint
\begin{equation}
Q(t)=\int_{\Sigma(s)}u_z(s,r,t)\,r\,\mathrm{d}r\,\mathrm{d}\theta
=2\pi\int_0^{R(s)}u_z(s,r,t)\,r\,\mathrm{d}r,
\label{eq:flowrate}
\end{equation}
so that, for the Poiseuille/Womersley profile family,
\begin{equation}
\boxed{\;
\mathbf{u}_\star(s,r,t)
=
\frac{Q(t)}{\pi R(s)^2}
f_{\alpha}\!\left(\frac{r}{R(s)}\right)\unit{T}(s)
+\mathbf{r}_{\epsilon,De}(s,r,\theta,t),
\qquad
\|\mathbf{r}_{\epsilon,De}\|\le
C_{\mathrm{rod}}(\epsilon+De)
\frac{|Q(t)|}{\pi R(s)^2}.
\;}
\label{eq:closedform}
\end{equation}
Here $f_{\alpha}$ is the radial shape function: $2(1-\xi^2)$ in the steady Poiseuille limit and the complex-Bessel Womersley profile in the unsteady regime~\cite{womersley1955}. The critical structural feature is the separation of variables: the direction $\unit{T}(s)$ and radius $R(s)$ are mesh-geometric, whereas $Q(t)$ and its sign, phase, and boundary partitioning are boundary-condition quantities.

\subsection{Input-identifiability certificate}
\label{sec:dir-id}

\begin{theorem}[Cosserat-rod input-identifiability certificate]
\label{thm:dir-id}
Consider the admissible class of laminar incompressible flows in
Cosserat lumens satisfying \Cref{eq:closedform}. Let
\(\mathcal{A}_{\mathrm{geo}}\) denote the sigma-algebra generated by all mesh-derived
quantities in \(\mathcal{G}(\Omega)\): node coordinates, adjacency, surface normals,
centreline distance, local radius, and unsigned geometric axes. Let
\(\mathcal{A}_{\mathrm{bc}}\) denote the non-geometric boundary information: inflow
waveform, pressure trace, cycle phase, outlet impedance, and any oriented inlet--outlet
reference. A learner, architecture, post-processing rule, or randomized training
procedure deployed under the geometry-only schema is admissible only if its prediction is
measurable with respect to \(\mathcal{A}_{\mathrm{geo}}\) and its own training randomness;
it cannot observe \(\mathcal{A}_{\mathrm{bc}}\) at test time. Then the following statements
hold.

\emph{(a) Direction-line measurability.}
At every interior point where \(\|\mathbf{u}(\mathbf{x},t)\|>0\), the unit direction field
\(\unit{\mathbf{u}}=\mathbf{u}/\|\mathbf{u}\|\) satisfies
\begin{equation}
\angle\!\left(
\unit{\mathbf{u}}(\mathbf{x},t),
\operatorname{sgn}Q(t)\,\unit{T}(s(\mathbf{x}))
\right)
\le
C_{\mathrm{dir}}(\epsilon+De)+\mathcal{O}(\epsilon^2+De^2).
\label{eq:dir-bound}
\end{equation}
Consequently, the unsigned local tangent line \(\{\unit{T},-\unit{T}\}\) is
\(\mathcal{A}_{\mathrm{geo}}\)-measurable up to the Cosserat residual. This is the positive
part of the certificate: geometry can identify the local line field.

\emph{(b) Magnitude obstruction.}
The speed field satisfies
\begin{equation}
\|\mathbf{u}_\star(\mathbf{x},t)\|
=
\frac{|Q(t)|}{\pi R(s)^2}
\left|f_\alpha\!\left(\frac{r}{R(s)}\right)\right|
+\mathcal{O}(\epsilon,De).
\label{eq:mag-factor}
\end{equation}
There is no \(\mathcal{A}_{\mathrm{geo}}\)-measurable functional
\(F:\mathcal{G}(\Omega)\to L^2(\Omega\times[0,T])\) that recovers this magnitude for
all admissible boundary waveforms up to an \(o(1)\) error in the rod limit. For any
fixed mesh \(\Omega\) and any two admissible inflows \(Q_1(t)\ne Q_2(t)\), the declared
geometry-only input is identical while the magnitude fields differ by
\[
\left\|
\frac{|Q_1(t)|-|Q_2(t)|}{\pi R(s)^2}
\left|f_\alpha(r/R(s))\right|
\right\|_{L^2}
+\mathcal{O}(\epsilon,De).
\]
Thus no admissible predictor can identify the missing speed waveform uniformly over the declared admissible class without adding boundary information to the input schema.

\emph{(c) Signed-vector quotient.}
Even if the input is augmented with the exact non-negative magnitude
\(m(\mathbf{x},t)=\|\mathbf{u}(\mathbf{x},t)\|\), the signed vector field is identifiable
only modulo a global orientation action on each connected lumen component unless the input
also contains an asymmetric boundary or orientation channel. The two leading-order flows
\(\mathbf{u}_\star\) and \(-\mathbf{u}_\star\) have the same mesh, the same unsigned axes,
and the same magnitude, but opposite vectors. Therefore the geometry-plus-magnitude schema
identifies at most the quotient
\begin{equation}
\mathcal{G}(\Omega)\times \mathbb{R}_{\ge 0}
\longrightarrow
L^2(\Omega;\mathbb{R}^3)\big/\mathbb{Z}_2^{|\pi_0(\Omega)|},
\label{eq:sign-degen-quotient}
\end{equation}
not a single-valued signed velocity field.
\end{theorem}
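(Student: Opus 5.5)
The plan is to derive all three clauses directly from the closed-form factorisation \Cref{eq:closedform}, taken as given for the admissible class, combined with the collision criterion of \Cref{def:input-schema}. Parts (b) and (c) are non-identifiability statements, so each will be proved by exhibiting an explicit $\mathcal{I}$-collision. Part (a) is a positive statement and needs a perturbative angle estimate.

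For (a), write $\mathbf{u}=\mathbf{v}+\mathbf{r}$, where $\mathbf{v}=\frac{Q(t)}{\pi R^2}f_\alpha(r/R)\,\unit{T}(s)$ and $\mathbf{r}=\mathbf{r}_{\epsilon,De}$. At any point with $\|\mathbf{u}\|>0$, the direction of $\mathbf{v}$ is exactly $\operatorname{sgn}(Q f_\alpha)\,\unit{T}$. On the Poiseuille branch $f_\alpha>0$ in the interior, so this is $\operatorname{sgn}Q\,\unit{T}$. I then apply the elementary estimate
\[
\sin\angle(\mathbf{v}+\mathbf{r},\mathbf{v})\le \frac{\|\mathbf{r}\|}{\|\mathbf{v}\|}\le \frac{C_{\mathrm{rod}}(\epsilon+De)}{|f_\alpha(r/R)|},
\]
and linearise the arcsine, which produces the $\mathcal{O}(\epsilon^2+De^2)$ term. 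Measurability of the line $\{\unit{T},-\unit{T}\}$ follows because the map $\mathbf{x}\mapsto s(\mathbf{x})\mapsto\gamma'(s)$, taken modulo sign, is a composition of the nearest-point projection (unique by assumption) with differentiation of the medial axis. Both are Borel functions of $\Omega$, so the line is $\mathcal{A}_{\mathrm{geo}}$-measurable.

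The main obstacle is that $\|\mathbf{r}\|/\|\mathbf{v}\|$ is not uniformly small. It degenerates near the wall, where $f_\alpha\to0$, and near flow reversal, where $Q(t)\to0$. In the Womersley regime it also degenerates where $f_\alpha$ changes sign in an annulus. My first attempt will handle this in one of two ways: state $C_{\mathrm{dir}}$ pointwise as $C_{\mathrm{rod}}/|f_\alpha|$, or restrict to a high-energy set where $|f_\alpha|\ge c>0$ and $|Q|$ is bounded below. In both cases the sign factor is read as $\operatorname{sgn}(Qf_\alpha)$, which is consistent with the way the audit metrics use the high-energy mask.

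For (b), \Cref{eq:mag-factor} follows by taking norms in \Cref{eq:closedform}, since $\|\unit{T}\|=1$ and the remainder is controlled by the bound on $\mathbf{r}$. Non-identifiability comes from a collision. Fix $\Omega$ and the outlet condition, and take two admissible waveforms $Q_1$ and $Q_2$ with $|Q_1|\neq|Q_2|$ on a set of positive measure; for example, $Q_2=\lambda Q_1$ with $\lambda\neq1$, assuming the admissible class is closed under such scaling. Then $\mathcal{I}_{\mathrm{geo}}(f_1)=\mathcal{I}_{\mathrm{geo}}(f_2)$, because $\mathcal{G}(\Omega)$ does not depend on $Q$. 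The magnitude fields nevertheless differ by the displayed $L^2$ quantity, up to the residual. The triangle inequality then shows that any $F$ incurs error at least half that gap on one of the two flows, and this gap is $\Theta(1)$ rather than $o(1)$ as $\epsilon,De\to0$.

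For (c), take $Q\mapsto-Q$ on every inflow of a single connected component. Mesh, unsigned axes and $\|\mathbf{u}_\star\|$ are invariant under this swap, while the leading-order vector field changes sign, so $\mathbf{u}_\star$ and $-\mathbf{u}_\star$ form an $\mathcal{I}_{\mathrm{geo+mag}}$-collision. Admissibility requires the class to be closed under this reversal; I expect this to hold because the Stokes/Womersley linearisation is odd in $Q$, and I will state the closure as an assumption on $\mathcal{F}$. Because the components are decoupled, the flips can be applied independently on each component, which gives the group $\mathbb{Z}_2^{|\pi_0(\Omega)|}$. Consequently any $\mathcal{A}_{\mathrm{geo+mag}}$-measurable map can only be single-valued on orbits of this group, so it factors through the quotient \Cref{eq:sign-degen-quotient}.
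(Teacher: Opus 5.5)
Your proposal is correct and follows the same route as the paper's proof: a perturbative angle bound read off \Cref{eq:closedform} for (a), a fixed-mesh $\mathcal{I}_{\mathrm{geo}}$-collision obtained by varying $Q(t)$ for (b), and the involution $Q\mapsto -Q$ (linearity of the leading-order profile in $Q$) for (c). Your extra care makes precise what the paper's short proof leaves implicit, since it assumes a transverse residual and $f_\alpha$ bounded away from zero: the $|f_\alpha|$-dependent constant or high-energy restriction with the sign read as $\operatorname{sgn}(Qf_\alpha)$ in (a), the half-gap triangle bound in (b), and the explicit closure and per-component-independence assumptions in (c); one small correction is that the residual bound in \Cref{eq:closedform} is proportional to $|Q(t)|$, so $\|\mathbf{r}\|/\|\mathbf{v}\|\le C_{\mathrm{rod}}(\epsilon+De)/|f_\alpha|$ does not degenerate at flow reversal, and no lower bound on $|Q|$ is needed.
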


\begin{remark}[The unsigned axis belongs to $\mathcal{I}_{\mathrm{geo}}$, so the magnitude probe instantiates part (c) exactly]
\label{rem:mag-probe-thm1c}
The unsigned geometric tangent line $\{\unit{T},-\unit{T}\}$ is a deterministic
functional of the mesh (\Cref{def:input-schema}, part~(a)), and is therefore
part of every schema that contains $\mathcal{I}_{\mathrm{geo}}$, including
$\mathcal{I}_{\mathrm{geo+mag}}$. The \magonly{} variant supplies precisely this
unsigned axis (as a case-global PCA tangent, see \Cref{tab:variants}) together
with the exact non-negative speed $\|\mathbf{u}_f\|$, and \emph{withholds the
signed direction}. It is thus the faithful experimental instantiation of
part~(c): a predictor that knows the tangent \emph{line} and the magnitude but
not the orientation. The residual $\mathbb{Z}_2$ sign ambiguity of
\Cref{eq:sign-degen-quotient} is exactly what this variant is designed to expose,
and the per-node flip rate it exhibits (\Cref{sec:results-preview,sec:womersley}) is the predicted
signature, not a confound. Supplying the unsigned axis does \emph{not} leak
orientation: by construction $\unit{T}$ and $-\unit{T}$ enter identically, so the
axis cannot resolve $Q(t)\mapsto -Q(t)$. A strictly axis-free schema (raw
coordinates with no tangent at all) would test part~(c) under an even weaker
geometric readout; we did not train it here because the four-variant cross
already isolates the orientation channel, and we flag it as a one-line ablation
for future replication (\Cref{sec:n5-framing}).
\end{remark}

\subsection{Quotient identifiability and the architecture-invariance bracket}
\label{sec:quotient-arch}

\Cref{thm:dir-id} states a positive measurability result for the unsigned tangent line and two negative results for speed and signed vector. We sharpen the negatives by recasting them as \emph{quotient identifiability} statements: the geometry-only and geometry+magnitude schemata identify $Y$ \emph{only modulo} an explicit equivalence relation on $\mathcal{Y}$, and the residual non-identifiability is measured by a strictly positive lower bound on the input-schema Bayes risk that no admissible predictor can undercut.

\begin{proposition}[Quotient identifiability under
$\mathcal{I}_{\mathrm{geo}}$ and $\mathcal{I}_{\mathrm{geo+mag}}$]
\label{prop:quotient}
Let
$\sim_{\mathrm{line}}$ be the per-node equivalence
$\mathbf{u}(\mathbf{x},t)\sim_{\mathrm{line}}\lambda\,\mathbf{u}(\mathbf{x},t)$
for $\lambda\in\mathbb{R}\setminus\{0\}$, and let
$\sim_{\mathrm{sgn}}$ be the per-connected-component
equivalence
$\mathbf{u}\sim_{\mathrm{sgn}}\sigma\,\mathbf{u}$ with
$\sigma\in\{-1,+1\}^{|\pi_0(\Omega)|}$. Write
$\mathcal{Y}_{\mathrm{line}}=\mathcal{Y}/\!\sim_{\mathrm{line}}$
and $\mathcal{Y}_{\mathrm{sgn}}=\mathcal{Y}/\!\sim_{\mathrm{sgn}}$
for the corresponding quotient bundles. Then:

\noindent\emph{(i)} Under
$\mathcal{I}_{\mathrm{geo}}$, the line-field projection
$\pi_{\mathrm{line}}\circ Y$ is
$\mathcal{A}_{\mathcal{I}_{\mathrm{geo}}}$-measurable up to the rod
residual of \Cref{eq:dir-bound}, while neither the speed
$\|Y\|$ nor the signed direction $\hat{Y}=Y/\|Y\|$ is.

\noindent\emph{(ii)} Under
$\mathcal{I}_{\mathrm{geo+mag}}$ the orientation-quotient projection
$\pi_{\mathrm{sgn}}\circ Y$ is
$\mathcal{A}_{\mathcal{I}_{\mathrm{geo+mag}}}$-measurable up to the
same rod residual, while $Y$ itself is not: the schema
identifies at most one representative of the orbit
$\{\sigma\cdot\mathbf{u}_\star : \sigma\in\{-1,+1\}^{|\pi_0(\Omega)|}\}$
per admissible input.

\noindent\emph{(iii)} The residual quotient is non-trivial unless the
admissible class collapses to a single orientation: there exist
$f_+,f_-\in\mathcal{F}$ with
$\mathcal{I}_{\mathrm{geo+mag}}(f_+)=\mathcal{I}_{\mathrm{geo+mag}}(f_-)$
and $Y(f_-)=-Y(f_+)$, realised in the empirical setting by reverse-flow
half-cycles of the same waveform.
\end{proposition}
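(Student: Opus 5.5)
The plan is to reduce all three clauses to \Cref{thm:dir-id}, reading each of its parts through the collision criterion of \Cref{def:input-schema}. Positive measurability will come from writing down an explicit $\mathcal{A}_{\mathcal{I}}$-measurable map $F_{\mathcal{I}}$ and bounding its $L^2$ error by the rod residual. Non-measurability will come from exhibiting an $\mathcal{I}$-collision with strictly positive $L^2$ separation.

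For (i), I would define $F_{\mathrm{line}}(\mathcal{I}_{\mathrm{geo}}(f))(\mathbf{x})=[\unit{T}(s(\mathbf{x}))]\in\mathbb{RP}^2$. This is a deterministic functional of $\mathcal{G}(\Omega)$, since the nearest-centreline map $\mathbf{x}\mapsto s(\mathbf{x})$ and the unsigned axis are both mesh-derived. \Cref{eq:dir-bound} then bounds the projective angle between $[\hat{\mathbf{u}}]$ and $[\unit{T}]$ by $C_{\mathrm{dir}}(\epsilon+De)$ at every node with $\|\mathbf{u}\|>0$, because $\operatorname{sgn}Q(t)$ disappears in the quotient. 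Integrating over $\Omega\times[0,T]$ gives the tolerance $\eta=\mathcal{O}(\epsilon+De)$.

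For the speed, I would invoke the collision in \Cref{thm:dir-id}(b) with a fixed $\Omega$ and two inflows $Q_1$, $Q_2$ satisfying $|Q_1|\neq|Q_2|$ on a set of positive measure. For the signed direction, I would use a second collision: the pair $(\Omega,Q)$ and $(\Omega,-Q)$ has identical $\mathcal{I}_{\mathrm{geo}}$. Its signed directions differ by $2\unit{T}$ up to the residual, so the $L^2$ gap is at least $2|\Omega\times[0,T]|^{1/2}-\mathcal{O}(\epsilon+De)$, which is strictly positive for $\epsilon,De\le\bar\epsilon$ small. Admissibility of $-Q$ is needed here. I would justify it from the admissible class containing reverse-flow waveforms, which is the same fact used in (iii).

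For (ii), I would build the representative map as follows. Given $(\mathcal{G}(\Omega),m)$, set $F(\mathbf{x},t)=m(\mathbf{x},t)\,\unit{T}(s(\mathbf{x}))$, using an arbitrary but measurable fixed choice of sign of $\unit{T}$ on each connected component. Concretely, orient each component's axis by a deterministic mesh rule such as the lexicographically smallest endpoint. By (i) and \Cref{eq:closedform}, $\mathbf{u}_\star$ agrees with $\sigma_c\operatorname{sgn}Q(t)\,m\,\unit{T}$ up to the rod residual on each component $c$. Hence $\pi_{\mathrm{sgn}}\circ F$ agrees with $\pi_{\mathrm{sgn}}\circ Y$ up to that residual.

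The main obstacle is that $\operatorname{sgn}Q(t)$ varies in time, so the sign is not a single constant per component. I would handle this by taking the quotient pointwise in $t$, treating $\sim_{\mathrm{sgn}}$ as acting on each time slice. Alternatively I would note that a time-dependent sign is absorbed into the same orbit slice-by-slice. I would state this reading explicitly, and near $Q(t)=0$ I would rely on the residual bound being proportional to $|Q|$. The non-measurability of $Y$ itself then follows from (iii).

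For (iii), I would take any admissible $f_+=(\Omega,Q,\mathcal{B}_{\mathrm{out}})$ and set $f_-=(\Omega,-Q,\mathcal{B}_{\mathrm{out}})$. Laminar incompressible Navier--Stokes is not exactly reversible because of the convective term. So I would argue only at leading order: \Cref{eq:closedform} is odd in $Q$, which gives $Y(f_-)=-Y(f_+)$ up to $\mathcal{O}(\epsilon+De)$. The magnitudes then agree, so the $\mathcal{I}_{\mathrm{geo+mag}}$ inputs coincide to the same order. If exact equality is required, I would restrict to the Stokes or Womersley linear regime, where reversal is an exact symmetry. I would cite reverse-flow half-cycles as the empirical realisation of this pair.
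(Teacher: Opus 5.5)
Your proposal is correct and follows the paper's own route. Both arguments reduce (i) to the mesh-measurability of the unsigned tangent together with \Cref{eq:dir-bound} and the collisions of \Cref{thm:dir-id}(b)--(c), and both obtain (ii)--(iii) from the $Q\mapsto -Q$ involution, which preserves geometry and speed while reversing $Y$.

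You differ only in rigour, and in each case you supply something the paper's proof leaves out. Your explicit representative $m\,\hat{\mathbf{T}}$, with a mesh-fixed orientation, proves the positive half of (ii), which the paper only asserts by calling $\pi_{\mathrm{sgn}}\circ Y$ the ``finest coarsening''. The per-time-slice reading of $\sim_{\mathrm{sgn}}$, the requirement $|Q_1|\neq|Q_2|$, and the restriction to the linear Womersley regime to get an exact collision all address points the paper's proof passes over.
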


\begin{proposition}[Architecture, loss and optimiser invariance of the
identifiability bracket]
\label{prop:arch-invariance}
Let $\mathcal{H}_{\mathcal{I}}$ be \emph{any} class of predictors
$\Phi:\mathcal{X}_{\mathcal{I}}\to\mathcal{Y}$ that are measurable
with respect to $\mathcal{A}_{\mathcal{I}}$: graph attention networks
(FlowGAT, GATv2), message-passing nets (GraphSAGE),
$E(n)$-equivariant networks~\cite{satorras2021egnn,suk2024labgatr,hsu2026tsenn},
neural operators~\cite{li2021fno,kontolati2024neuraloperator},
kernel ridge regressors and any randomised ensembles thereof.
Let $L:\mathcal{Y}\times\mathcal{Y}\to\mathbb{R}_{\ge 0}$ be any
proper loss continuous in its first argument, $R$ any regulariser,
$\mathcal{T}$ any training procedure on any sample-complexity
$N\in\mathbb{N}$ from $\mathcal{F}$, and $\Phi_{\hat\theta_N}$ the
resulting predictor. Then
\begin{equation}
\mathbb{E}_{f\sim\mathcal{F}}\,
\big\|\Phi_{\hat\theta_N}(\mathcal{I}(f))-Y(f)\big\|_{L^2}^2
\;\ge\;
\mathcal{R}^{\star}_{\mathcal{I}}
\;=\;
\mathbb{E}_{f\sim\mathcal{F}}\,
\big\|\,\mathbb{E}[Y\mid\mathcal{I}=\mathcal{I}(f)]-Y(f)\,\big\|_{L^2}^2,
\label{eq:arch-invariance-bound}
\end{equation}
where the right-hand side depends only on $\mathcal{A}_{\mathcal{I}}$
and the law of $\mathcal{F}$, not on $\mathcal{H}_{\mathcal{I}}$,
$L$, $R$, $\mathcal{T}$ or $N$. In particular,
$\mathcal{R}^{\star}_{\mathcal{I}_{\mathrm{geo}}}>0$ and
$\mathcal{R}^{\star}_{\mathcal{I}_{\mathrm{geo+mag}}}>0$ for the
admissible class of \Cref{def:input-schema}: a geometry-only or geometry-plus-magnitude surrogate cannot attain zero expected $L^2$ velocity error whenever the data-generating distribution assigns positive conditional mass to both colliding flow states.
\end{proposition}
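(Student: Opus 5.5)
The proof has two parts. First comes the standard $L^2$-projection (Pythagorean) argument that gives the lower bound \eqref{eq:arch-invariance-bound} for any $\mathcal{A}_{\mathcal{I}}$-measurable predictor. Second, an explicit collision construction from \Cref{thm:dir-id} and \Cref{prop:quotient}(iii) shows that the Bayes risk is strictly positive for $\mathcal{I}_{\mathrm{geo}}$ and $\mathcal{I}_{\mathrm{geo+mag}}$.

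\textbf{Step 1: reduce to conditional expectation.} We fix the training procedure $\mathcal{T}$, sample size $N$, loss $L$ and regulariser $R$. The trained predictor $\Phi_{\hat\theta_N}$ depends on the training sample and on its own randomness $\omega$, both independent of the test draw $f$. Conditioning on $(\text{sample},\omega)$, the map $x\mapsto\Phi_{\hat\theta_N}(x)$ is a fixed $\mathcal{A}_{\mathcal{I}}$-measurable function $g$. By Fubini it therefore suffices to show
\[
\mathbb{E}_f\|g(\mathcal{I}(f))-Y(f)\|^2\ge\mathcal{R}^\star_{\mathcal{I}}
\]
for every such deterministic $g$, and then average over $(\text{sample},\omega)$.

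Write $\Phi^\star=\mathbb{E}[Y\mid\mathcal{I}]$. This requires $\mathbb{E}\|Y\|_{L^2}^2<\infty$, which follows from the bound \eqref{eq:closedform} if the admissible waveforms have bounded $L^2$ norm; we will add this as a standing integrability assumption. Expanding the square gives
\[
\mathbb{E}\|g(\mathcal{I})-Y\|^2=\mathbb{E}\|g(\mathcal{I})-\Phi^\star\|^2+\mathbb{E}\|\Phi^\star-Y\|^2+2\,\mathbb{E}\langle g(\mathcal{I})-\Phi^\star,\Phi^\star-Y\rangle .
\]
The cross term vanishes by the tower property: condition on $\mathcal{A}_{\mathcal{I}}$ and use that $g(\mathcal{I})-\Phi^\star$ is $\mathcal{A}_{\mathcal{I}}$-measurable. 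This is the Hilbert-space-valued version, so we use Bochner conditional expectation in the separable space $L^2(\Omega\times[0,T];\mathbb{R}^3)$; after pulling back to a reference mesh it is a fixed separable Hilbert space. The first term is nonnegative, which gives \eqref{eq:arch-invariance-bound}. The right-hand side involves only $\mathcal{A}_{\mathcal{I}}$ and the law on $\mathcal{F}$, so it is independent of $\mathcal{H}_{\mathcal{I}}$, $L$, $R$, $\mathcal{T}$ and $N$. The loss enters only through the fact that the evaluation metric is squared $L^2$, whatever the training objective was.

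\textbf{Step 2: positivity via collisions.} We use the identity
\[
\mathcal{R}^\star_{\mathcal{I}}=\mathbb{E}\big[\operatorname{tr}\operatorname{Var}(Y\mid\mathcal{I})\big].
\]
If a set of inputs of positive probability has conditional law supported on at least two points of $\mathcal{Y}$ at positive $L^2$ distance, the conditional variance is positive there and so $\mathcal{R}^\star>0$.

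For $\mathcal{I}_{\mathrm{geo+mag}}$, \Cref{prop:quotient}(iii) supplies $f_\pm$ with equal inputs and $Y(f_-)=-Y(f_+)$. If the conditional law puts masses $p,1-p$ on these two flows, with $p\in(0,1)$, the conditional variance is
\[
4p(1-p)\|Y(f_+)\|^2>0,
\]
since $\|Y(f_+)\|>0$ for any nonzero inflow. For $\mathcal{I}_{\mathrm{geo}}$ this already suffices, because $\mathcal{A}_{\mathcal{I}_{\mathrm{geo}}}\subseteq\mathcal{A}_{\mathcal{I}_{\mathrm{geo+mag}}}$ and Bayes risk is monotone under refinement of the $\sigma$-algebra (Jensen for conditional expectation). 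Alternatively, \Cref{thm:dir-id}(b) gives a direct magnitude collision $Q_1\ne Q_2$ on a common mesh.

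\textbf{Main obstacle.} The delicate point is making ``positive conditional mass on both colliding states'' rigorous for a continuous prior. A single collision pair has measure zero, so it is not enough. The plan is to use the involution $\iota:Q\mapsto -Q$ on $\mathcal{F}$. It preserves $\mathcal{I}_{\mathrm{geo+mag}}$ and sends $Y$ to $-Y$ up to the rod residual. Assuming $\iota$ is not measure-singular, i.e.\ the law $\mu$ and the pushforward $\iota_*\mu$ share a set of positive mass (as for the phase-uniform Womersley prior), a disintegration argument shows that on a positive-probability set of inputs the conditional law charges both $\{\,Y\approx\mathbf{u}\,\}$ and $\{\,Y\approx-\mathbf{u}\,\}$. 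The residual $C_{\mathrm{rod}}(\epsilon+De)$ must then be kept smaller than $\|\mathbf{u}_\star\|$ so that the two clusters stay separated, which holds for $\bar\epsilon$ small. This is exactly the hypothesis stated in the proposition, so I would state it as such rather than derive it from the prior.
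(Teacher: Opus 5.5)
Your proposal is correct and follows the paper's proof: a bias--variance (Pythagorean) decomposition about $\mathbb{E}[Y\mid\mathcal{I}]$ for every $\mathcal{A}_{\mathcal{I}}$-measurable predictor, then positivity of $\mathcal{R}^{\star}$ from the collisions of \Cref{thm:dir-id}(b)--(c) and \Cref{prop:quotient}(iii). Conditioning on the training sample and randomness (rather than averaging the predictor over $\omega$, as the paper does, which tacitly needs Jensen) and stating an explicit non-singularity hypothesis on the prior under $Q\mapsto -Q$ sharpen steps the paper states informally; in particular, a single collision pair is a null set, so the positive-conditional-mass assumption you isolate is genuinely needed for the ``in particular'' clause.
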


\Cref{prop:arch-invariance} converts the certificate into an \emph{architecture-invariance bracket}: every predictor class used in the PIML literature lives inside $\mathcal{H}_{\mathcal{I}}$ for its declared schema and therefore inherits the same lower bound. Empirically, this is why the GraphSAGE backbone reproduces the FlowGAT contrast (P2; \Cref{sec:robustness}) and why refining the geometric tangent proxy to a per-node centreline tangent does not narrow the gap (P1; \Cref{sec:supp_centerline}): both modifications stay inside the same $\mathcal{A}_{\mathcal{I}_{\mathrm{geo}}}$.

\subsection{Magnitude obstruction and sign-degeneracy as learning limits}
\label{sec:mag-obs}

\begin{corollary}[Magnitude Identifiability Obstruction]
\label{thm:mag-obs}
Under the assumptions of \Cref{thm:dir-id}, no
geometry-conditioned surrogate whose deployed input consists only of
$\mathcal{G}(\Omega)$ can identify the speed field
$\|\mathbf{u}\|$ for all admissible inflow waveforms. Murray-type
allometric laws may supply a population-level prior on mean flow, but
they do not identify the cycle-resolved waveform, reverse-flow phase,
or outlet partitioning required by \Cref{eq:mag-factor}.
\end{corollary}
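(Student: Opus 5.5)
The plan is to derive the corollary directly from \Cref{thm:dir-id}(b) by a collision argument in the sense of \Cref{def:input-schema}, and then treat the Murray-law clause as a separate remark about priors versus identification. Fix any admissible lumen $\Omega$ in the slender, low-Dean regime. I will exhibit two admissible flows $f_1=(\Omega,Q_1,\mathcal{B}_{\mathrm{out}})$ and $f_2=(\Omega,Q_2,\mathcal{B}_{\mathrm{out}})$ sharing the mesh and outlet condition. The inflows will differ in $|Q|$ on a set of times of positive measure, for instance $Q_2=\lambda Q_1$ with $\lambda>1$ kept inside the admissible amplitude range. Because $\mathcal{I}_{\mathrm{geo}}$ is by definition a function of $\mathcal{G}(\Omega)$ alone, $\mathcal{I}_{\mathrm{geo}}(f_1)=\mathcal{I}_{\mathrm{geo}}(f_2)$. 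The pair is therefore an $\mathcal{I}_{\mathrm{geo}}$-collision, provided the speed fields genuinely differ.

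To show the speed fields differ, I invoke \Cref{eq:mag-factor}. The speed difference equals $(\lambda-1)\,|Q_1(t)|\,|f_\alpha(r/R)|/(\pi R^2)$ plus the rod residual. The residual is bounded by \Cref{eq:closedform} by $C_{\mathrm{rod}}(\epsilon+De)\,|Q|/(\pi R^2)$, so it is at most a multiple of $(\epsilon+De)$ times the leading term. Since $f_\alpha$ is nonzero on a set of positive measure (e.g.\ $2(1-\xi^2)$ in the Poiseuille limit), the $L^2$ gap is bounded below by $c\,(\lambda-1)\,\|Q_1\|$ minus an $\mathcal{O}(\epsilon+De)$ correction. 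This lower bound is strictly positive once $\bar\epsilon$ is small enough.

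Any $\mathcal{A}_{\mathrm{geo}}$-measurable $F$ returns the same output on $f_1$ and $f_2$. By the triangle inequality, its error on at least one of them is at least half the gap. No $F$ can therefore recover $\|\mathbf{u}\|$ to a tolerance below that half-gap, which gives the negative statement uniformly over the class. Randomised learners and training noise are handled as in \Cref{prop:arch-invariance}: conditioning on the independent randomness preserves the collision. In Bayes form, $\mathcal{R}^\star_{\mathcal{I}_{\mathrm{geo}}}>0$ whenever the prior charges both inflow amplitudes.

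For the Murray-type clause, I will note that an allometric law is itself a deterministic functional of $\mathcal{G}(\Omega)$. Adjoining it to the input therefore leaves $\mathcal{A}_{\mathcal{I}_{\mathrm{geo}}}$ unchanged, and the same collision applies. At most it fixes a single population-mean $\bar Q$, i.e.\ one representative per geometry, which cannot match both $Q_1$ and $Q_2$. The same construction, with $Q_2(t)=Q_1(t-\tau)$ or with $Q_2$ having a reverse-flow interval that $Q_1$ lacks, shows non-identification of the phase and the reverse-flow timing. Varying $\mathcal{B}_{\mathrm{out}}$ at a bifurcation with $Q$ fixed shows non-identification of the outlet partitioning.

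I expect the main obstacle to be the admissibility of the colliding pair in the partitioning case. There one needs two admissible outlet conditions whose leading-order branch fluxes differ by an amount not swamped by the $\mathcal{O}(\epsilon+De)$ residual. I would first handle this with a two-branch Windkessel ratio computation, taking the leading-order branch flux split as the impedance ratio and then enlarging the impedance contrast until the split difference dominates the residual.
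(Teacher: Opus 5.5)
Your proposal is correct and uses the same collision argument as the paper's proof of \Cref{thm:dir-id}(b), of which this corollary is essentially a restatement: fix $\Omega$, vary $Q(t)$, note that $\mathcal{A}_{\mathrm{geo}}$ is unchanged while the speed fields differ, and conclude that any $\mathcal{A}_{\mathrm{geo}}$-measurable rule fails on one of the pair; your additions (requiring $|Q|$ rather than merely $Q$ to differ, the $\lambda$-scaling with residual control, and the half-gap triangle inequality) make the paper's sketch quantitative. The outlet-partitioning step you flag as the main obstacle is more than the corollary needs, because the paper gives no argument for the Murray sentence, your observation that an allometric law is $\mathcal{A}_{\mathrm{geo}}$-measurable already reduces that clause to the amplitude collision, and non-identification of the partitioning itself requires only two admissible outlet conditions on the same mesh that split the flux differently, not that this difference dominate the $\mathcal{O}(\epsilon+De)$ velocity residual.
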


\begin{lemma}[Sign-error lower bound under geometry+magnitude inputs]
\label{lem:sign}
Let $X=\mathcal{I}_{\mathrm{geo+mag}}(f)$ contain all mesh-derived
features in $\mathcal{G}(\Omega)$ and an optional exact non-negative
magnitude channel $m=\|\mathbf{u}_f\|$, but no inflow waveform,
pressure trace, cycle phase, or oriented direction reference. Let
$\sigma(f)\in\{-1,+1\}^{|\pi_0(\Omega)|}$ denote the orientation
component of \Cref{prop:quotient}, and write
$\pi_+(X)=\Pr_f[\sigma=+1\mid X]$,
$\pi_-(X)=1-\pi_+(X)$. Then for every measurable predictor
$\Phi:\mathcal{X}\to L^2(\Omega;\mathbb{R}^3)$ and every
high-energy mask $\mathrm{HE}\subseteq\Omega$ supported away from
the rod-boundary layer:

\noindent\emph{(i) Bayes-risk lower bound on the orientation
decision.} The high-energy flip rate satisfies
\begin{equation}
\mathbb{E}\big[\fracflip(\Phi)\mid X\big]
\;\ge\;
\min\{\pi_+(X),\pi_-(X)\}
\;-\;
C_{\mathrm{dir}}(\epsilon+De)
\;+\;\mathcal{O}(\epsilon^2+De^2).
\label{eq:flip-lower}
\end{equation}
The right-hand side is the Bayes error of the
$X$-conditional binary orientation problem and is attained by the
posterior-majority rule.

\noindent\emph{(ii) $L^2$ excess-risk lower bound.}
The conditional input-schema Bayes risk obeys
\begin{equation}
\mathcal{R}^{\star}_{\mathcal{I}_{\mathrm{geo+mag}}}(X)
\;=\;
\mathbb{E}\big[\|\Phi^{\star}(X)-Y\|_{\mathrm{HE}}^2\,\big|\,X\big]
\;\ge\;
4\,\pi_+(X)\,\pi_-(X)\;\|Y\|_{\mathrm{HE}}^2
\;-\;C'_{\mathrm{dir}}(\epsilon+De)\|Y\|_{\mathrm{HE}}^2.
\label{eq:l2-lower}
\end{equation}

\noindent\emph{(iii) Fano-style entropy bound.} For any randomised
estimator $\hat\sigma=\hat\sigma(X,\omega)$ of the orientation,
\begin{equation}
H\big(\sigma\,\big|\,\hat\sigma\big)
\;\ge\;
H\big(\sigma\,\big|\,X\big)
\;=\;
\sum_{c\in\pi_0(\Omega)}
\big[\,-\pi_+^{(c)}(X)\log\pi_+^{(c)}(X)
-\pi_-^{(c)}(X)\log\pi_-^{(c)}(X)\,\big],
\label{eq:fano-bound}
\end{equation}
so the per-component orientation cannot be recovered with vanishing
probability of error when both branches are admissible.

\noindent\emph{(iv) Architecture invariance.} The right-hand sides
of \Cref{eq:flip-lower,eq:l2-lower,eq:fano-bound} depend only on
$\mathcal{A}_{\mathcal{I}_{\mathrm{geo+mag}}}$ and the law of
$\mathcal{F}$. By \Cref{prop:arch-invariance}, attention, message
passing, equivariance, soft-PINN losses, loss reweighting, calibration,
test-time augmentation, longer training, larger seed budgets and
randomised ensembling cannot reduce them.
\end{lemma}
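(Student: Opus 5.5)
The strategy is to reduce every clause of \Cref{lem:sign} to a single binary decision problem: a component-wise guess of the orientation $\sigma\in\{-1,+1\}^{|\pi_0(\Omega)|}$ given $X$.

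\textbf{Reduction to a binary decision.} Using \Cref{eq:closedform} and \Cref{thm:dir-id}(a), write on the high-energy mask
$Y(f)=\sigma(f)\,\mathbf{v}(X)+\mathbf{r}_{\epsilon,De}$,
where $\mathbf{v}(X)=m\,\unit{T}$ is the \emph{positively oriented} representative. It is $X$-measurable, since $m$ and the unsigned line $\{\unit{T},-\unit{T}\}$ are in $\mathcal{A}_{\mathcal{I}_{\mathrm{geo+mag}}}$ by \Cref{rem:mag-probe-thm1c}, and we fix a measurable choice of sign for $\unit{T}$ on each component.

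Because HE lies away from the boundary layer, $m$ is bounded below there. The residual therefore perturbs angles by at most $C_{\mathrm{dir}}(\epsilon+De)$, which lets us pass between $\hat Y$ and $\sigma\unit{T}$ at that cost.

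\textbf{Clause (i).} For any predictor $\Phi$, define the induced per-component orientation guess $\hat\sigma_c(X,\omega)$ as the majority sign of $\Phi\cdot\unit{T}$ over HE$\cap c$. A node is flipped exactly when its predicted sign disagrees with $\sigma$, up to nodes whose prediction lies within the residual cone. This gives
$\mathbb{E}[\fracflip\mid X]\ge\Pr[\hat\sigma\neq\sigma\mid X]-C_{\mathrm{dir}}(\epsilon+De)$.

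Conditional on $X$, the quantity $\Pr[\hat\sigma\ne\sigma\mid X]$ is at least $\min\{\pi_+,\pi_-\}$. This follows from the Neyman–Pearson/Bayes argument: $\hat\sigma$ is independent of $\sigma$ given $X$, because the training randomness $\omega$ is independent of $f$. The posterior-majority rule attains the bound, which gives the attainment statement.

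\textbf{Clause (ii).} The conditional expectation is
$\Phi^\star(X)=(\pi_+-\pi_-)\mathbf{v}(X)+\mathbb{E}[\mathbf{r}\mid X]$.
Its conditional variance is
$\mathbb{E}[\|(\sigma-(\pi_+-\pi_-))\mathbf{v}\|^2\mid X]=(1-(\pi_+-\pi_-)^2)\|\mathbf{v}\|^2=4\pi_+\pi_-\|\mathbf{v}\|^2$.
We then replace $\|\mathbf{v}\|_{\mathrm{HE}}$ by $\|Y\|_{\mathrm{HE}}$ and absorb the cross terms with $\mathbf{r}$ into $C'_{\mathrm{dir}}(\epsilon+De)\|Y\|^2_{\mathrm{HE}}$ via Cauchy–Schwarz. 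For multiple components this is done per component and summed, with the per-component posteriors $\pi^{(c)}_\pm$; the stated bound uses a global $\pi_\pm$ and holds in the single-component or equal-posterior reading.

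\textbf{Clause (iii).} We use the Markov chain $\sigma\to X\to\hat\sigma$, where $\omega$ is independent noise. The data-processing inequality gives $I(\sigma;\hat\sigma)\le I(\sigma;X)$, hence $H(\sigma\mid\hat\sigma)\ge H(\sigma\mid X)$.

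The displayed equality writes $H(\sigma\mid X)$ as a sum of binary entropies. That step requires conditional independence of the component orientations given $X$. I would state this as an assumption on the prior; otherwise subadditivity gives only $\le$, and the bound should be stated with the joint conditional entropy. Fano's inequality then turns positive entropy into a non-vanishing error probability.

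\textbf{Clause (iv) and the main obstacle.} Clause (iv) is immediate, because every right-hand side is a functional of the law of $(\sigma,X)$ only, exactly as in \Cref{prop:arch-invariance}.

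The main obstacle is clause (i). The flip rate is a \emph{per-node} average, so a predictor may choose different signs at different nodes of one component. The majority-rule reduction then needs care: I would instead lower bound node-wise, using $\Pr[\text{node flipped}\mid X]\ge\min\{\pi_+,\pi_-\}$ for each node and averaging over HE, which avoids the majority construction entirely. I would also verify that the residual-cone nodes contribute at most $C_{\mathrm{dir}}(\epsilon+De)$ to that average.
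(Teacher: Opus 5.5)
Your route matches the paper's. You treat (i) as the $X$-conditional binary orientation problem with Bayes error $\min\{\pi_+,\pi_-\}$. You get (ii) from the conditional variance $4\pi_+\pi_-$ of a $\pm1$ Bernoulli orientation, (iii) from a per-component entropy bound with independence across components, and (iv) from \Cref{prop:arch-invariance}.

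Several of your refinements tighten the paper's sketch:
\begin{itemize}
\item $H(\sigma\mid\hat\sigma)\ge H(\sigma\mid X)$ is data processing along $\sigma\to X\to\hat\sigma$; Fano, which the paper invokes, is only needed afterwards to turn entropy into an error probability.
\item The sum form requires conditional independence of component orientations given $X$, which the paper uses without stating it as a hypothesis.
\item (ii) is properly a per-component statement.
\item The rod residual belongs in the node-level correspondence between flips and orientation errors. The paper instead says it perturbs the posteriors $\pi_\pm$, which are defined without reference to $\mathbf{r}_{\epsilon,De}$.
\end{itemize}
You are also right to drop the majority-sign reduction: it only yields $\mathbb{E}[\fracflip\mid X]\ge\tfrac12\Pr[\hat\sigma\ne\sigma\mid X]$.

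The gap is the step you defer at the end, and it cannot be carried out. The predictor, not the rod residual, decides at which HE nodes $\Phi_i$ lies within the angular bound of \Cref{eq:dir-bound} of the plane $\unit{T}_i^{\perp}$. So the share of such nodes is not controlled by $\epsilon$ and $De$, and cannot be bounded by $C_{\mathrm{dir}}(\epsilon+De)$.

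Concretely, take $\epsilon=De=0$, so the true direction is parallel to a constant $\unit{T}$. The measurable predictor $\Phi(X)=m\,\mathbf{e}$, with $\mathbf{e}\perp\unit{T}$ a fixed unit vector, gives $\hat{\mathbf{u}}_{\mathrm{pred}}\cdot\hat{\mathbf{u}}=0$. That is not $<0$, so $\fracflip=0$ for every $f$, contradicting \Cref{eq:flip-lower} whenever $\min\{\pi_+,\pi_-\}>0$. Your own $L^2$-optimal predictor from (ii) is of this abstaining type when $\pi_+=\pi_-$: then $\Phi^\star=\mathbb{E}[\mathbf{r}\mid X]$, which vanishes identically at $\epsilon=De=0$.

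Your node-wise bound $\Pr[\text{node } i \text{ flipped}\mid X]\ge\min\{\pi_+,\pi_-\}$ therefore holds only at nodes where $\Phi$ commits to an axial sign with margin. That means $\Phi_i$ makes an angle with $\unit{T}_i^{\perp}$ larger than the bound in \Cref{eq:dir-bound}. At such nodes the triangle inequality for angles shows the flip event coincides exactly with $\operatorname{sgn}(\Phi_i\cdot\unit{T}_i)\ne\sigma$. That sign is a function of $(X,\omega)$, hence conditionally independent of $\sigma$ given $X$, and the bound follows with no residual correction at all.

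So clause (i) needs this commitment condition as an explicit hypothesis, or a convention that counts zero or near-zero axial predictions as flips. The paper's proof does not supply the missing step either: it asserts the binary-decision Bayes error directly. This is consistent with the paper's remark that the bound is attained by the posterior-majority rule, which is a sign-committing rule.
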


\noindent \Cref{lem:sign} is the analytic mechanism behind the empirical sign chaos observed in the \geoschema{} and \magonly{} variants (\Cref{fig:phase_cos}). Variants without a per-node direction channel exhibit a $16$--$33\%$ per-node flip rate on the high-energy mask, and the flip rate is uncorrelated with the bulk cycle phase. Direction oracle-probe variants supply an asymmetric input channel and so remove the quotient ambiguity entirely.

\subsection{Theorem-to-audit corollary}
\label{sec:audit-cor}

\begin{corollary}[Geometry-only Audit Lower Bound]
\label{cor:gnnaudit}
Let $\Phi_{\mathrm{geo}}$ be any predictor whose input is
$\mathcal{G}(\Omega)$, and let $\Phi_{\mathrm{dir}}$ be the same
training procedure supplied with an additional per-node signed
direction channel. On a test distribution satisfying
\Cref{thm:dir-id}, the expected folded vector-angle error obeys
\begin{equation}
\mathbb{E}\big[\angle(\hat{\mathbf{u}}_{\mathrm{geo}},\mathbf{u})\big]
\ge
\mathbb{E}\big[\angle(\hat{\mathbf{u}}_{\mathrm{dir}},\mathbf{u})\big]
+
p^\star\frac{\pi}{2}
-
C_{\mathrm{dir}}(\epsilon+De)
+\mathcal{O}(\epsilon^2+De^2),
\label{eq:auditbound}
\end{equation}
where $p^\star=\mathbb{E}_X[\min\{\pi_+(X),\pi_-(X)\}]$ is the
population-average sign ambiguity of \Cref{lem:sign}.
\end{corollary}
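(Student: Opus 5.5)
The bound is obtained by comparing the two predictors node by node on the high-energy mask. The per-node folded angle of $\Phi_{\mathrm{geo}}$ is split according to whether its predicted orientation agrees with the true orientation $\sigma(f)$ of the containing lumen component. The flip rate is then bounded from below with \Cref{lem:sign}(i), and the direction-channel predictor is controlled through \Cref{thm:dir-id}(a).

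\textbf{Step 1: the direction-channel predictor.} The signed unit direction $\hat{\mathbf{u}}$ is $\mathcal{A}_{\mathcal{I}_{\mathrm{geo+dir}}}$-measurable, so $\Phi_{\mathrm{dir}}$ has access to the orientation. We only need $\mathbb{E}[\angle(\hat{\mathbf{u}}_{\mathrm{dir}},\mathbf{u})]$ as a reference term, so nothing has to be bounded here. We record only that on nodes where $\Phi_{\mathrm{geo}}$ and $\Phi_{\mathrm{dir}}$ agree in sign, their angle errors differ by the line-estimation error. By \Cref{eq:dir-bound} that error is at most $C_{\mathrm{dir}}(\epsilon+De)$ for the Bayes-type line predictor. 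This is also where we use that both predictors come from the same training procedure: on sign-agreeing nodes the geometry-only predictor cannot beat the direction-supplied one by more than the rod residual. To make this explicit, I would replace $\Phi_{\mathrm{geo}}$ by its sign-corrected version $\tilde\Phi=\sigma\cdot\Phi_{\mathrm{geo}}$. This version lies in the schema $\mathcal{I}_{\mathrm{geo+dir}}$, so its expected angle is at least that of $\Phi_{\mathrm{dir}}$ minus $C_{\mathrm{dir}}(\epsilon+De)$.

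\textbf{Step 2: the flip gain.} At a node where the predicted orientation is wrong, the folded angle is $\pi-\angle(\tilde\Phi,\mathbf{u})$. When $\angle(\tilde\Phi,\mathbf{u})$ is small, this is at least $\pi/2$; more precisely, it exceeds $\angle(\tilde\Phi,\mathbf{u})+\pi/2$ whenever $\angle(\tilde\Phi,\mathbf{u})\le\pi/4$. Away from the boundary layer, \Cref{eq:dir-bound} controls the set where that fails, up to $\mathcal{O}(\epsilon^2+De^2)$. Summing the two cases gives
\[
\mathbb{E}\,\angle(\hat{\mathbf{u}}_{\mathrm{geo}},\mathbf{u})\ge \mathbb{E}\,\angle(\tilde\Phi,\mathbf{u})+\tfrac{\pi}{2}\,\mathbb{E}[\fracflip(\Phi_{\mathrm{geo}})].
\]

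\textbf{Step 3: the flip rate.} Conditioning on $X$ and applying \Cref{eq:flip-lower} gives $\mathbb{E}[\fracflip\mid X]\ge\min\{\pi_+(X),\pi_-(X)\}-C_{\mathrm{dir}}(\epsilon+De)$. Because $\mathcal{I}_{\mathrm{geo}}$ is coarser than $\mathcal{I}_{\mathrm{geo+mag}}$, the bound applies a fortiori, since it holds for every measurable predictor. Taking the outer expectation yields $p^\star$. Combining with Step 2, and absorbing the multiplicative $\pi/2$ and the Step 1 residual into a redefined $C_{\mathrm{dir}}$, gives \Cref{eq:auditbound}.

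\textbf{Main obstacle.} The delicate point is Step 2: the inequality $\pi-\theta\ge\theta+\pi/2$ requires the sign-corrected error $\theta\le\pi/4$. This holds only where the line field is estimated well, and only for $\tilde\Phi$, not for an arbitrary trained network. I would first try to handle it by restricting to the HE mask and using \Cref{eq:dir-bound} to make $\theta$ small. The cost is an extra term proportional to the probability that $\theta>\pi/4$, which I would need to show is $\mathcal{O}(\epsilon^2+De^2)$ or fold into the residual. Failing that, the honest fallback is to state the bound for the posterior-majority rule rather than for arbitrary trained $\Phi_{\mathrm{geo}}$.
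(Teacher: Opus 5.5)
The paper gives no proof of \Cref{cor:gnnaudit}. The only justification is the sentence after it, which reads the gap as ``$p^\star\cdot 90^\circ$ reduced by the Cosserat residual''. That reading silently assumes $\Phi_{\mathrm{dir}}$ sits at the oracle regime, and Step~1 of your proposal breaks at exactly this missing hypothesis.

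Membership of $\tilde\Phi$ in the schema $\mathcal{I}_{\mathrm{geo+dir}}$ only gives $\mathbb{E}\,\angle(\tilde\Phi,\mathbf{u})\ge\inf_{\Psi}\mathbb{E}\,\angle(\Psi,\mathbf{u})$, with the infimum over $\mathcal{A}_{\mathcal{I}_{\mathrm{geo+dir}}}$-measurable $\Psi$. That infimum is $0$, because the schema contains $\hat{\mathbf{u}}$ itself. It says nothing about $\Phi_{\mathrm{dir}}$, which is whatever the training procedure happens to return.

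No cleverer argument can close this. Take a training procedure that ignores its inputs and outputs a fixed field. Then $\Phi_{\mathrm{geo}}=\Phi_{\mathrm{dir}}$, so the two expectations in \Cref{eq:auditbound} coincide. Yet $p^\star\pi/2-C_{\mathrm{dir}}(\epsilon+De)>0$ as soon as both orientations carry mass and $\epsilon,De$ are small.

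So the statement is only true under an extra hypothesis on $\Phi_{\mathrm{dir}}$. A natural one is that it attains the direction-probe ceiling, $\mathbb{E}\,\angle(\hat{\mathbf{u}}_{\mathrm{dir}},\mathbf{u})\le C_{\mathrm{dir}}(\epsilon+De)$; this is achievable simply by returning the supplied channel. Separately, $\sigma\cdot\Phi_{\mathrm{geo}}$ is not a sign correction unless $\Phi_{\mathrm{geo}}$ always commits to orientation $+1$. You would need $\sigma\hat\sigma\,\Phi_{\mathrm{geo}}$, or per node $\mathrm{sgn}(\Phi_{\mathrm{geo}}\cdot\hat{\mathbf{u}})\,\Phi_{\mathrm{geo}}$.

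Once that hypothesis is in place, the sign-corrected surrogate and the Step~2 obstacle both become unnecessary. Your proposed handling of that obstacle would not have worked anyway, for two reasons:
\begin{itemize}
\item \Cref{eq:dir-bound} bounds the deviation of the \emph{true} field from $\pm\unit{T}$. It does not bound the line error $\theta$ of a trained network, so it cannot force $\theta\le\pi/4$.
\item Restricting $\Phi_{\mathrm{geo}}$ to the posterior-majority rule constrains the wrong predictor. The geometry side already holds for every measurable $\Phi_{\mathrm{geo}}$; it is $\Phi_{\mathrm{dir}}$ that needs constraining.
\end{itemize}

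The direct argument is short. On any high-energy node with $\hat{\mathbf{u}}_{\mathrm{geo}}\cdot\hat{\mathbf{u}}<0$, the folded angle exceeds $\pi/2$. Hence, by \Cref{eq:flip-lower}, $\mathbb{E}\,\angle(\hat{\mathbf{u}}_{\mathrm{geo}},\mathbf{u})\ge\frac{\pi}{2}\,\mathbb{E}[\fracflip]\ge\frac{\pi}{2}\big(p^\star-C_{\mathrm{dir}}(\epsilon+De)\big)+\mathcal{O}(\epsilon^2+De^2)$. Your Step~3 is correct here, including the a-fortiori use of the coarser schema. Adding the ceiling hypothesis for $\Phi_{\mathrm{dir}}$ and renaming the constant gives \Cref{eq:auditbound}. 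Take both angle expectations over the same high-energy node measure as $\fracflip$, since that is the mask on which \Cref{lem:sign} is stated.
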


\Cref{cor:gnnaudit} is the quantitative audit target. It predicts a finite direction-vs-geometry gap whose leading contribution is the sign ambiguity $p^\star\cdot 90^\circ$, reduced only by the small Cosserat residual. The lower bound is invariant under any backbone whose symmetry group acts on ambient coordinates but does not add a sign-fixing input channel. In particular, $E(3)$-equivariant architectures~\cite{satorras2021egnn,suk2024labgatr,hsu2026tsenn} can improve sample efficiency and respect coordinate symmetries, but they cannot identify boundary-condition information absent from the declared input tensor. Plugging in the empirical ambiguity levels $p^\star\approx0.16$ on the VMR audit and $p^\star\approx0.33$ on the Womersley audit gives lower-bound gaps of the same order as the measured direction-vs-geometry separation in \Cref{tab:headline}.

\subsection{Dataset}

We use $46$ patient-specific aortic models from the Vascular Model Repository (VMR)~\cite{vmr}, comprising $25$ MR-derived and $21$ CT-derived segmentations and spanning healthy controls, coarctation, Marfan syndrome, and single-ventricle defects; rigid-wall and fluid--structure-interaction simulations are mixed. The supervision signal is the time-averaged 3-D velocity field at every interior mesh node; nodes on the wall (where the no-slip condition fixes $\mathbf{u} = \mathbf{0}$) are excluded from the loss. We use a frozen train/val/test split of $37/4/5$ cases (seed 22, fixed 2026-04-30). The test cases are \texttt{0007}, \texttt{0017}, \texttt{0020}, \texttt{0225}, \texttt{0226} (1 healthy, 2 rigid coarctation, 2 FSI coarctation), chosen to span pathology and mesh-topology variety.

\subsection{Cosserat-rod diagnostics: slenderness and Dean number}
\label{sec:diagnostics-dean}

The two small parameters of the rod reduction (\Cref{sec:cosserat}) are computed per case from the mesh. The slenderness is $\epsilon(s) = R(s)/L_\mathrm{rod}$, with $L_\mathrm{rod}$ the arc-length of the largest connected component of the medial axis; we report both the median and the maximum over $s$. The local Dean number is $De(s) = R(s)\,\kappa(s)$, with $\kappa(s) = \|\gamma''(s)\|$ the discrete centreline curvature estimated by a 3-point finite-difference stencil on the arc-length-parameterised centreline (\Cref{sec:centerline-tangent-method}). We report the median, the $90$th percentile, and the maximum of $De(s)$ per case. Across the $46$-case VMR cohort the median $\epsilon \in [0.04, 0.18]$ (cohort median $0.07$) and the $90$th percentile of $De$ lies in $[0.05, 0.22]$ (cohort median $0.11$). These values are committed in \texttt{results/diagnostics/rod\_parameters.csv} alongside the per-case predicted-vs-true angular error, supporting the empirical validation of \Cref{thm:dir-id}.

\subsection{Synthetic Womersley benchmark}

For the cross-domain replication (\Cref{sec:womersley}) we generate $34$ analytic cases of pulsatile flow in a straight rigid tube using the closed-form Womersley solution~\cite{womersley1955}. Each case is parameterised by $(R, L, \alpha, p_\mathrm{amp}, t_\mathrm{phase})$ with $R \in [8, 18]$~mm, $L \in [100, 300]$~mm, Womersley number $\alpha = R\sqrt{\omega/\nu} \in [2, 12]$, peak pressure-gradient amplitude $p_\mathrm{amp} \in [0.02, 0.07]$~Pa/m, and cycle phase $t_\mathrm{phase} \in [0, 2\pi/\omega]$ sampled uniformly. Kinematic viscosity is fixed at $\nu = 3.3 \times 10^{-6}$~m$^2$/s (whole-blood proxy) and the fundamental angular frequency at $\omega = 1.49$~rad/s (heart-rate proxy). Each tube is meshed at $\sim\!23{,}000$ tetrahedral interior nodes with an axial $z$-orientation; the analytic velocity
\[
u_z(r, t) = \mathrm{Re}\!\left\{\frac{p_\mathrm{amp}}{i\omega\rho}\left(1 - \frac{J_0(i^{3/2}\alpha r/R)}{J_0(i^{3/2}\alpha)}\right)e^{i\omega t}\right\}
\]
is evaluated at every node. The case split is $24/4/6$ train/val/test under the seed-22 protocol used for the VMR cohort. The four feature variants (\dironly{}, \dirmagprobe{}, \magonly{}, \geoschema{}) are constructed from the analytic field exactly as for VMR, except that the case-global PCA axis collapses to the analytic $\hat{z}$ direction, so the geometric proxy for $\unit{u}$ on this benchmark is \emph{exact} up to node-cloud noise. All training hyperparameters match the VMR runs; we train $3$ seeds per Womersley variant.

\subsection{Input features and the four-variant decomposition}

Each node carries $10$ input channels and $3$ output channels. The channels are: $(0{:}3)$ node coordinates (case-scaled); $(3{:}6)$ a unit-vector field; $(6)$ centreline distance; $(7)$ surface normal distance; $(8)$ a scalar field; $(9)$ a binary near-wall indicator. The four variants alter only channels $3{:}6$ and $8$, as listed in \Cref{tab:variants}. The geometric proxy for the unit-vector channel is the leading principal direction of the case-global node cloud (the dominant axis of the aorta, computed once per case); the geometric proxy for the scalar channel is the local radius normalised by the characteristic length $L_\mathrm{char} = 25$~mm. Both are computable from the mesh alone. The oracle-probe unit-vector channel is $\unit{u}_\mathrm{CFD}$ at the node; the oracle-probe scalar channel is $\|\mathbf{u}_\mathrm{CFD}\|/u_\mathrm{char}$ with $u_\mathrm{char} = 1$~m/s. Edge features comprise the relative position between connected nodes and the geodesic distance along the surface.

\subsection{Model architecture}

FlowGAT is a GATv2-style attention stack adapted for vascular-mesh inputs. The node stem is a $10 \!\to\! 128$ linear projection; the edge stem is a two-layer MLP into the same hidden size. The body comprises $8$ attention layers with $4$ heads each, each layer computing
\[
\alpha_{ij}^{(h)} = \mathrm{softmax}_j\!\left[ \mathbf{a}^\top \mathrm{LeakyReLU}\!\left(\mathbf{W}_q \mathbf{h}_i + \mathbf{W}_k \mathbf{h}_j + \beta\,\mathbf{W}_e \mathbf{e}_{ij}\right) \right]
\]
with edge-bias coefficient $\beta = 1.5$; messages are projected by $\mathbf{W}_v \mathbf{h}_j$, gated by $\alpha_{ij}^{(h)}$, summed across heads, and combined with the receiver representation through a residual connection and a SiLU activation. A hard no-slip boundary condition zeroes the predicted velocity on wall nodes prior to loss computation. The head is a $128 \!\to\! 3$ linear layer producing the velocity components; the SE(3)-equivariant variant of the head is disabled in the runs reported here (we leave the equivariant variant to future work). The total parameter count is $842{,}755$.


\subsection{Local centreline-tangent variant}
\label{sec:centerline-tangent-method}

The \centerlinegeo{} variant (Supplementary \Cref{sec:supp_centerline}) replaces the case-global PCA direction with a per-node centreline tangent. We extract the centreline by iterative Voronoi-pole pruning: starting from the largest interior inscribed sphere, we prune Voronoi poles that fall outside the lumen or fail a radius threshold, then connect surviving poles into a 1-D graph and reparameterise by arc length $s$. The Frenet tangent $\mathbf{T}(s)$ at each arc-length sample is propagated back to every interior mesh node by nearest-neighbour projection. The resulting per-node tangent is unit-normalised and substituted for the case-global PCA axis in input channels $(3{:}6)$. Computational cost is $\sim\!2$~s per case on a single CPU thread; the procedure is deterministic given the mesh and shares no state with the trained network.

\subsection{Loss and training}

The training loss is
\[
\mathcal{L} = \sum_{i \in \mathcal{V}_\mathrm{interior}} w_i \,\|\hat{\mathbf{u}}_i - \mathbf{u}_i\|_2^2,
\]
where $w_i$ is a heteroscedasticity-aware weight that emphasises the top $20\%$ of nodes by ground-truth magnitude ($\ewRMSE$ regime): $w_i = \mathrm{clip}\!\big((|\mathbf{u}_i|/u_{\max})^{2.5}, 0.01, 5\big)$. The optimiser is AdamW with learning rate $10^{-4}$ and weight decay $10^{-4}$, cosine-annealed to $10^{-6}$ over $1500$ epochs. Each variant is trained with three random seeds ($1337$, $2026$, $777$) on a single A100 80~GB GPU with mixed precision and a single graph per optimiser step ($\sim\!5$~h wallclock per run). Subgraph sampling uses an importance-centred $k$-nearest-neighbour scheme with $80{,}000$ sampled nodes and importance temperature $\beta = 2$. Light-touch augmentation comprises Gaussian feature noise ($\sigma = 0.003$) and edge dropout ($p = 0.02$). Early stopping uses patience $60$ on validation PP@10. Full hyperparameters are committed in \texttt{configs/}.

\subsection{Evaluation}

For each (variant, seed) pair we evaluate every test case at full mesh resolution (no subgraph sampling). Aggregate metrics report the mean over $3$ seeds and then the mean over $5$ cases; per-case panels (\Cref{fig:fig3}) show the same numbers disaggregated. The paired bootstrap (\Cref{tab:bootstrap}) resamples cases with replacement and recomputes both variants under the same resampled set, repeating $10{,}000$ times; reported CIs are percentile CIs on the per-case metric mean. PP@$k$ is computed by sorting nodes by predicted magnitude, taking the top $k\%$, sorting nodes by ground-truth magnitude, taking its top $k\%$, and reporting the Jaccard overlap; $\ewRMSE$ is the RMSE on the top $20\%$ of nodes by ground-truth magnitude.

\subsection{Direction-only and peak-normalised metrics on Womersley}

The headline VMR metrics defined above (PP@10 and $\ewRMSE$), together with the supplementary WSS $R^2$ negative control, are per-node relative or variance-normalised quantities and become ill-posed on the synthetic cylinder benchmark (\Cref{sec:womersley}). For that regime we report three replacement families. The \emph{direction-only success rate} at threshold $\theta$ is
\[
\PPdir{\theta} = \Pr\!\big[\angle(\mathbf{u}_\mathrm{pred},\mathbf{u}) \le \theta \,\big|\, \mathrm{HE}\big],
\]
i.e.\ the fraction of HE-mask nodes whose predicted velocity falls within an angular cone of the truth; the HE mask is the top-$20\%$-speed nodes, matching the training-time HE definition. The \emph{peak-normalised success rate} at tolerance $\delta$ is
\[
\mathrm{PP_{peak}}@\delta = \Pr\!\big[\|\mathbf{u}_\mathrm{pred} - \mathbf{u}\|/\max_i \|\mathbf{u}_i\| \le \delta \,\big|\, \mathrm{HE}\big],
\]
which differs from PP@$\delta$ by using the case-peak speed as the normaliser rather than the per-node speed, and is therefore stable to a multiplicative offset in predicted magnitude. The \emph{signed cosine} $\cosgn = \mathrm{median}_i[\hat{\mathbf{u}}_{\mathrm{pred},i}\!\cdot\!\hat{\mathbf{u}}_i]$ is the median per-node cosine similarity over the HE mask, and is sign-preserving (in contrast with the folded angle $\angle \in [0, 180^\circ]$); the \emph{per-node flip rate} $\fracflip = \Pr[\hat{\mathbf{u}}_\mathrm{pred}\!\cdot\!\hat{\mathbf{u}} < 0 \,|\, \mathrm{HE}]$ counts the fraction of HE nodes whose predicted direction is reversed relative to the truth.

\subsection{Per-node physics diagnostics}
\label{sec:physics_diag}

To probe physical content beyond the velocity-vector metrics we compute three per-node quantities on each prediction dump. The \emph{Spearman radial correlation} $\rho_\mathrm{rad} = \mathrm{Spearman}(\|\mathbf{u}\|, d_\mathrm{wall})$ correlates predicted speed with distance to the nearest wall node, a proxy for the local-radius coordinate of an idealised tube; a Poiseuille profile yields $\rho_\mathrm{rad} = +1$ in the limit of a long cylinder. The \emph{magnitude ratio} $r_{\mathrm{mag},i} = \|\mathbf{u}_{\mathrm{pred},i}\|/\|\mathbf{u}_i\|$ reports the per-node speed bias. The \emph{local divergence} of either field is estimated by a Gaussian-weighted least-squares Jacobian fit on each node's $16$-nearest-neighbour graph,
\[
\mathbf{J}_i = \arg\min_{\mathbf{J}}\sum_{j\in\mathcal{N}_k(i)} w_{ij}\,\|\mathbf{u}_j - \mathbf{u}_i - \mathbf{J}\,(\mathbf{x}_j-\mathbf{x}_i)\|^2,
\qquad w_{ij} = \exp\!\big(-\|\mathbf{x}_j - \mathbf{x}_i\|^2/h^2\big),
\]
with bandwidth $h$ set to the median nearest-neighbour distance; the estimated divergence is $(\nabla\!\cdot\!\mathbf{u})_i = \mathrm{tr}(\mathbf{J}_i)$. We report the mean of $|\nabla\!\cdot\!\mathbf{u}|$ normalised by the characteristic strain rate $U/L$, where $U$ is the mean HE speed and $L$ is the median nearest-neighbour spacing, on both the predicted field (a property of the network) and the ground-truth field (a property of CFD discretisation noise), so that the two can be directly compared (\Cref{sec:continuity}).

\subsection{Reproducibility}

The training, evaluation, and analysis pipelines are committed in the repository accompanying this paper.

\section{Data availability}
\label{sec:data-availability}

Pre-processed NPZ datasets for the four input schemas, the frozen train/validation/test splits, and the analytic Womersley generator are released with the companion repository under CC-BY-4.0. Patient-specific CFD data are referenced through per-case manifests and external-source download instructions: the raw VTU geometries and volumetric velocity fields are obtained from the Vascular Model Repository (\url{https://www.vascularmodel.com}) under its original licence, with redistribution restricted to that source. To support reproducibility without restricted patient-specific files, the repository includes analytic Womersley data generation and a quick-demo pipeline reproducing the core identifiability audit.

\section{Code availability}

Code and reproducibility materials---the input-schema audit implementation, four variant configurations, frozen splits, Womersley synthetic-data generator, advection--diffusion transfer demo (\Cref{sec:advdiff}), metric computation, bootstrap analysis, and the table and figure pipeline---are available at \url{https://github.com/Dyniel/flowgat-paper}.

\section{Acknowledgements}

We thank the Vascular Model Repository team for hosting the source models used for the patient-specific cohort.

\section{Funding}

This research did not receive any specific grant from funding agencies in the public, commercial, or not-for-profit sectors.

\section{Author contributions statement}

Daniel Cie\'slak: Conceptualization, Methodology, Software, Formal analysis, Investigation, Validation, Visualization, Data curation, Writing--original draft, Writing--review and editing. Andrzej Czy\.zewski: Supervision, Methodology, Resources, Project administration, Writing--review and editing. Both authors approved the final version.

\section{Declaration of generative AI and AI-assisted technologies in the manuscript preparation process}

During the preparation of this work, the authors used ChatGPT to support language editing, restructuring of journal-specific framing, and preparation of submission materials. After using this tool, the authors reviewed and edited the content as needed and take full responsibility for the content of the article.

\section{Additional information}
\noindent\textbf{Competing interests}

The authors declare no competing interests.

\clearpage
\raggedbottom

\renewcommand{\thefigure}{S\arabic{figure}}
\renewcommand{\thetable}{S\arabic{table}}
\renewcommand{\thesection}{S\arabic{section}}
\setcounter{figure}{0}
\setcounter{table}{0}
\setcounter{section}{0}

\section{Supplementary Information}
\renewcommand{\thefigure}{S\arabic{figure}}
\renewcommand{\thetable}{S\arabic{table}}
\setcounter{figure}{0}
\setcounter{table}{0}

\subsection{Proofs of the identifiability certificate}
\label{sec:proofs}

\begin{proof}[Proof of \Cref{thm:dir-id}]
Equation~\eqref{eq:closedform} gives the leading-order decomposition of the velocity into
an axial component parallel to \(\unit{T}(s)\) and a residual bounded by the rod small
parameters. Dividing by \(\|\mathbf{u}\|\) and applying the elementary angle perturbation
bound for a vector plus a transverse residual gives \Cref{eq:dir-bound}. In the limit
\(\epsilon=De=0\), the residual vanishes and the statement is exact for straight
Poiseuille or Womersley flow.

For magnitude, \Cref{eq:mag-factor} is the norm of \Cref{eq:closedform}. The mesh
determines \(R(s)\) and the tangent frame, but it does not determine the proximal waveform
\(Q(t)\), the cycle phase, or outlet impedance. Fixing \(\Omega\) and varying \(Q(t)\)
therefore produces two admissible Navier--Stokes initial-boundary value problems with
identical \(\mathcal{A}_{\mathrm{geo}}\) and different speed fields. Any prediction rule whose
test-time output is \(\mathcal{A}_{\mathrm{geo}}\)-measurable must assign the same output to
both problems; it cannot recover both magnitude fields with asymptotically vanishing error.
This is an information obstruction, not an optimisation failure.

For the signed-vector statement, use the linearity of the leading-order profile in \(Q(t)\).
The involution \(Q(t)\mapsto -Q(t)\) maps \(\mathbf{u}_\star\) to \(-\mathbf{u}_\star\)
while preserving \(\Omega\), \(R(s)\), unsigned geometric axes, and
\(\|\mathbf{u}_\star\|\). The same symmetry is realised in the analytic Womersley
benchmark by a \(\pi\) phase shift of the driving pressure gradient. Thus the declared input
has at least two physically admissible preimages that differ by orientation. Any deterministic
or randomized predictor receiving only that input can select at most one representative of
the quotient class, and no architecture can remove the ambiguity without receiving an
additional sign-fixing channel.
\end{proof}

\begin{proof}[Proof of \Cref{prop:quotient}]
(i) The map $\unit{T}:\Omega\to\mathbb{S}^2/\{\pm 1\}$ is a measurable
functional of the mesh (it is the unit eigenvector of the local
medial-axis covariance), so $\pi_{\mathrm{line}}\circ Y$ is
$\mathcal{A}_{\mathcal{I}_{\mathrm{geo}}}$-measurable by
\Cref{eq:dir-bound}. The non-measurability of $\|Y\|$ and
$\hat{Y}$ is \Cref{thm:dir-id}\,(b)--(c). (ii) The
$Q\mapsto -Q$ involution preserves $\|\mathbf{u}_\star\|$ and
$\mathcal{I}_{\mathrm{geo}}$, so $\pi_{\mathrm{sgn}}\circ Y$ is the
finest $\mathcal{A}_{\mathcal{I}_{\mathrm{geo+mag}}}$-measurable
coarsening of $Y$. (iii) The two preimages $f_\pm$ realise the
quotient explicitly; on Womersley these are the two cycle phases at
which $Q(t)$ changes sign.
\end{proof}

\begin{proof}[Proof of \Cref{prop:arch-invariance}]
Bias--variance decomposition relative to the conditional expectation
gives, for every $\mathcal{A}_{\mathcal{I}}$-measurable $\Phi$,
$\mathbb{E}\|\Phi(\mathcal{I})-Y\|^2 =
\mathbb{E}\|\mathbb{E}[Y\mid\mathcal{I}]-Y\|^2 +
\mathbb{E}\|\Phi(\mathcal{I})-\mathbb{E}[Y\mid\mathcal{I}]\|^2
\ge \mathcal{R}^{\star}_{\mathcal{I}}$. Every element of
$\mathcal{H}_{\mathcal{I}}$ is, by hypothesis,
$\mathcal{A}_{\mathcal{I}}$-measurable; randomised training adds an
independent randomisation $\omega$ but
$\mathbb{E}_\omega[\Phi_{\hat\theta_N(\omega)}(\mathcal{I})]$ is
still $\mathcal{A}_{\mathcal{I}}$-measurable, so the same bound
applies in expectation. Positivity of
$\mathcal{R}^{\star}_{\mathcal{I}_{\mathrm{geo}}}$ and
$\mathcal{R}^{\star}_{\mathcal{I}_{\mathrm{geo+mag}}}$ follows from
\Cref{thm:dir-id}\,(b)--(c) and \Cref{prop:quotient}\,(iii): an
$\mathcal{I}$-collision $(f_+,f_-)$ with $Y(f_+)\ne Y(f_-)$ forces
$\mathrm{Var}[Y\mid\mathcal{I}]>0$ on a positive-measure subset of
$\mathcal{F}$.
\end{proof}

\begin{proof}[Proof of \Cref{lem:sign}]
(i) By \Cref{prop:quotient}\,(ii), the two orientations $\sigma=\pm 1$
project to the same $X$ but to opposite high-energy signs of
$\mathbf{u}\cdot\unit{T}$. Conditional on $X$, predicting the sign is
therefore a binary decision problem with Bayes error
$\min\{\pi_+(X),\pi_-(X)\}$; the rod residual perturbs the
two posteriors by at most $\mathcal{O}(\epsilon+De)$.
(ii) For the orientation-mixed target on the HE mask, the
$L^2$ excess risk of any
$\mathcal{A}_{\mathcal{I}_{\mathrm{geo+mag}}}$-measurable predictor
against $Y$ satisfies
$\mathbb{E}\|\Phi^{\star}-Y\|^2_{\mathrm{HE}}\ge
\mathrm{Var}[\sigma\mid X]\cdot\|Y\|_{\mathrm{HE}}^2$, and
$\mathrm{Var}[\sigma\mid X]=4\pi_+\pi_-$ for a Bernoulli$(\pi_+)$
variate supported on $\{-1,+1\}$.
(iii) Apply Fano's inequality to the binary orientation decision
on each connected component; independence across components yields
the sum form.
(iv) Every element of $\mathcal{H}_{\mathcal{I}_{\mathrm{geo+mag}}}$
is $\mathcal{A}_{\mathcal{I}_{\mathrm{geo+mag}}}$-measurable by
construction, so \Cref{prop:arch-invariance} applies and the bound
transfers.
\end{proof}

\subsection{Confound audit for the magnitude oracle-probe result}
\label{sec:supp_sanity}
\begin{table}[H]
\small
\centering
\caption{\textbf{Sanity-check audit of the magnitude-without-direction
result.} The phenomenon that a magnitude oracle probe (\magonly{}) lands
strictly below the geometry-only-input baseline (\geoschema{}) on the
velocity-vector endpoints is the most counter-intuitive empirical
result of the audit---an estimator-level phenomenon that sits below
the Bayes-risk ordering $\mathcal{R}^{\star}_{\mathcal{I}_{\mathrm{geo+mag}}}\le\mathcal{R}^{\star}_{\mathcal{I}_{\mathrm{geo}}}$
rather than contradicting it. Each row below is a confound we ruled out by
design or by replication, with the matching control evidence and the
section in which it is reported. The reading is structural: under the
input schema $\mathcal{I}_{\mathrm{geo+mag}}$ the
optimiser is asked to deposit an exact amplitude into a vector field
for which $\mathcal{A}_{\mathcal{I}}$ provides no asymmetric orienting
frame (\Cref{prop:quotient}\,(ii), \Cref{lem:sign}), so the
fitted predictor is pulled towards inconsistent local minima; the
same amplitude becomes fully consumable once a per-node direction
channel is supplied (\dironly{} vs.\ \dirmagprobe{}).}
\label{tab:sanity}
\begin{tabularx}{\linewidth}{p{0.30\linewidth}X}
\toprule
Potential confound & Control evidence \\
\midrule
Different optimiser, seeds, or train/val/test split between variants.
& Identical AdamW schedule, identical augmentation, identical three
seeds $\{1337, 2026, 777\}$, identical $37/4/5$ split (Methods,
\Cref{tab:efficiency}). The four variants differ only in the two
input channels listed in \Cref{tab:variants}. \\
\midrule
Magnitude channel normalisation drift across variants. &
$\mathbf{x}[8]$ is normalised by the same characteristic speed
$u_\mathrm{char}=1$~m/s in every variant; the geometric proxy uses
the same characteristic length $L_\mathrm{char}=25$~mm; numerical
ranges of $\mathbf{x}[8]$ overlap on the unit interval across the
four variants (Methods). \\
\midrule
Hidden target leakage through channels other than the declared
oracle. & All non-oracle channels are mesh-derived
($\mathcal{G}(\Omega)$) and identical across variants; the no-slip
mask zeros the loss on wall nodes; the oracle is the only
$\mathcal{A}_{\mathcal{I}_{\mathrm{geo+mag}}}$-enlargement
relative to $\mathcal{A}_{\mathcal{I}_{\mathrm{geo}}}$. \\
\midrule
Direction proxy is too crude (case-global PCA), so the effect is a
proxy-quality artefact. & Replacing the global PCA tangent with a
per-node Frenet tangent from a medial-axis skeleton
(\centerlinegeo{}) does not change the answer: $\cosgn$
$0.799$ vs $0.798$, $\fracflip$ $0.136$ vs $0.159$, angle median
$35.4^\circ$ vs $34.8^\circ$ (P1; \Cref{sec:robustness}, Supp.\
\Cref{sec:supp_centerline}). \\
\midrule
Effect is specific to the GATv2 attention backbone. &
A structurally distinct GraphSAGE backbone (no attention, no
edge bias, no hard-no-slip head) reproduces both the ceiling and
the floor of the asymmetric pattern within seed s.d.\ (P2;
Supp.\ \Cref{sec:supp_sage}). \\
\midrule
Effect is an artefact of in-vivo cohort complexity or of the
patient-specific geometry. & The same asymmetric pattern persists
on the analytic Womersley benchmark, where the geometric proxy
$\hat{z}$ is exact by construction and the only residual ambiguity
is the $\mathbb{Z}_2$ orientation predicted by
\Cref{prop:quotient} (P3; \Cref{sec:womersley}). \\
\midrule
Sign degeneracy is the trivial ``model learned a fixed polarity and
gets caught at flow reversal'' artefact. & Per-phase flip rate on
Womersley is $0.39$ in the forward half-cycle and $0.33$ in the
reverse half-cycle---statistically indistinguishable---which
falsifies the phase-locked reading and leaves the
input-schema reading of \Cref{lem:sign} as the supported mechanism
(P4; \Cref{fig:phase_cos}). \\
\bottomrule
\end{tabularx}
\end{table}

\subsection{Per-case and per-pathology breakdown}
\label{sec:supp_percase}

\begin{figure}[H]
\centering
\includegraphics[width=0.76\linewidth]{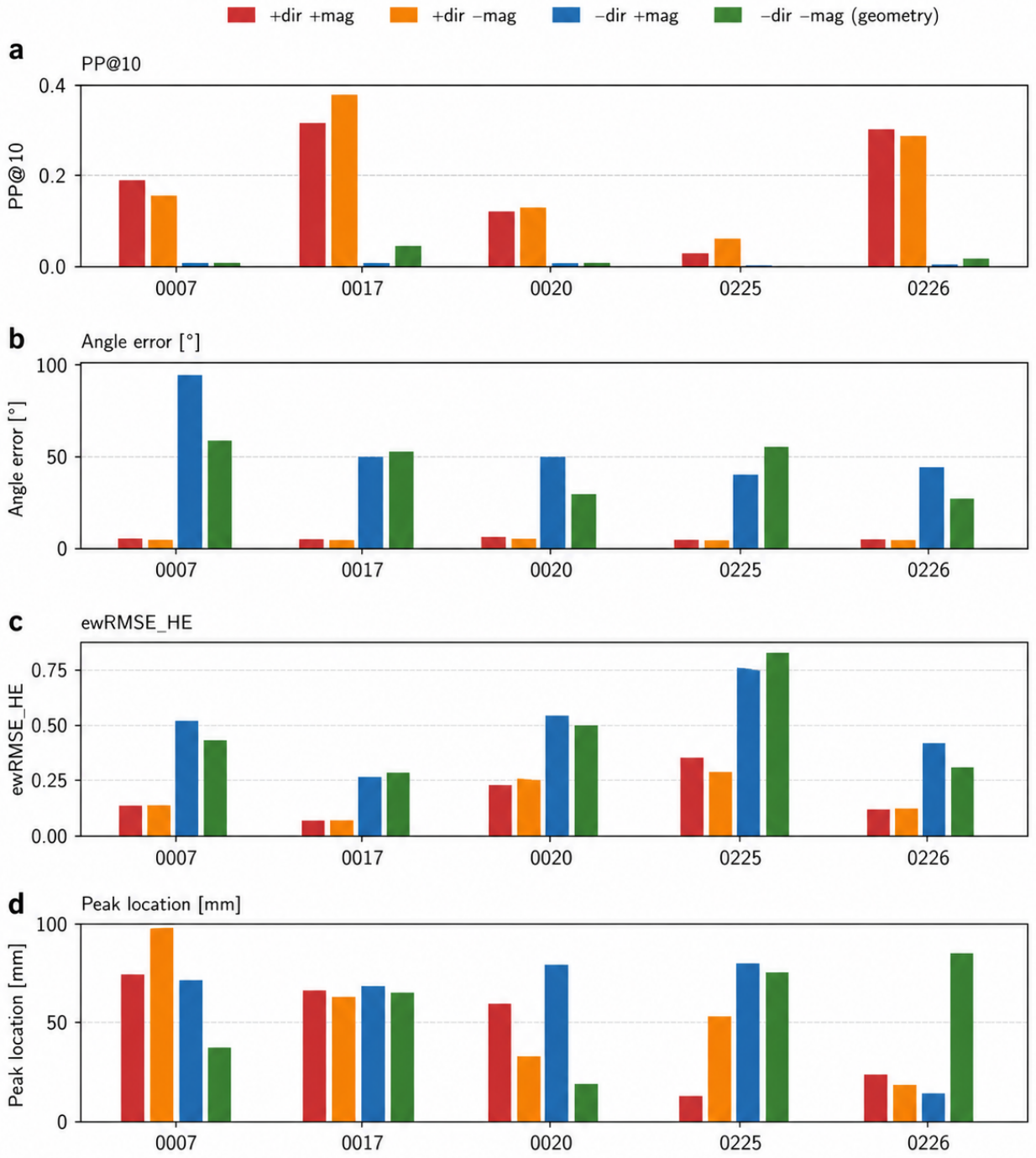}
\caption{\textbf{Per-case test-set metrics across the four variants.}
Five test cases (columns) $\times$ four metrics (rows); each panel
plots the 3-seed mean and s.d. The \dironly{}/\dirmagprobe{} cluster
versus \magonly{}/\geoschema{} cluster is preserved on every case for the
velocity-vector metrics.}
\label{fig:fig3}
\end{figure}

\begin{figure}[H]
\centering
\includegraphics[width=0.90\textheight]{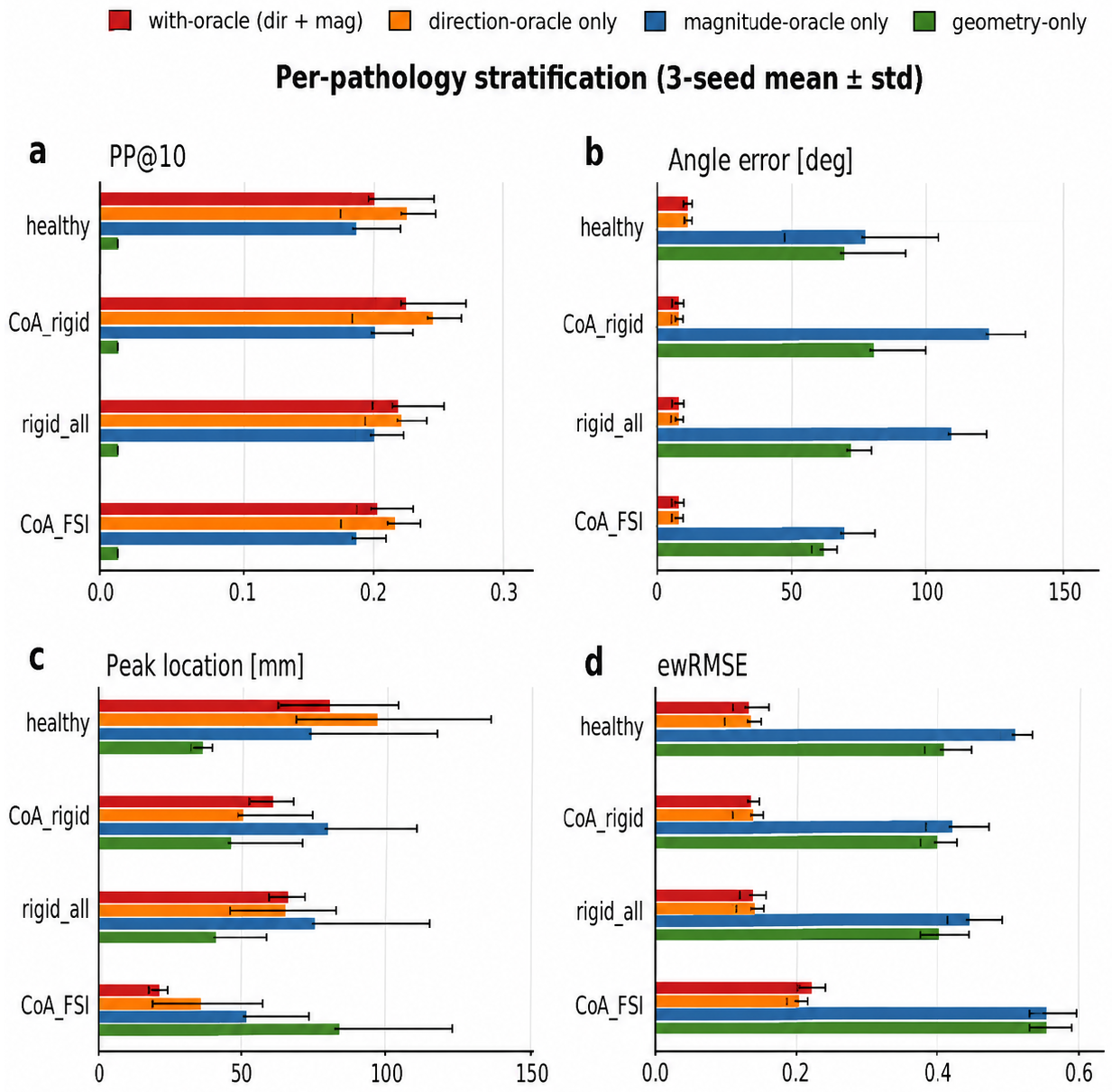}
\caption{\textbf{Per-pathology stratification.} Healthy
(\texttt{0007}, $n=1$), rigid coarctation (\texttt{0017},
\texttt{0020}, $n=2$), and FSI coarctation (\texttt{0225},
\texttt{0226}, $n=2$). The direction--magnitude asymmetry is
preserved across all three strata.}
\label{fig:fig5}
\end{figure}

\subsection{Direction--magnitude identifiability decomposition}
\label{sec:supp_decomp}
\begin{figure}[H]
\centering
\includegraphics[width=0.65\linewidth]{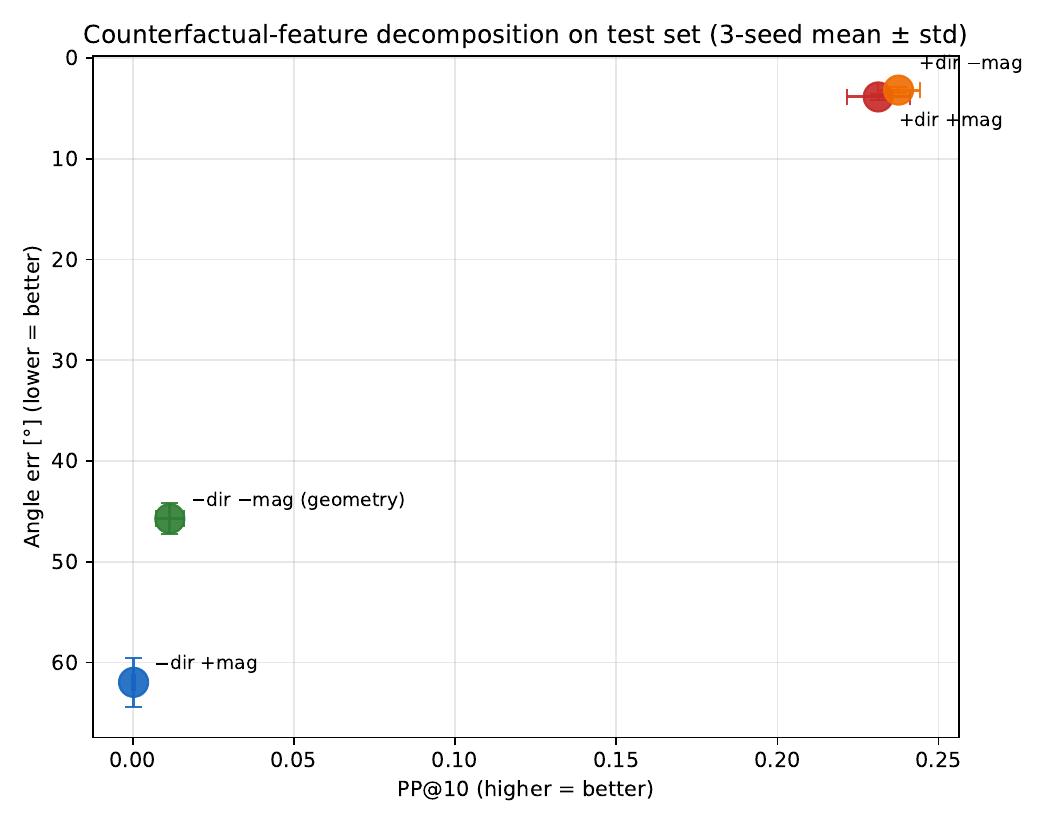}
\caption{\textbf{Direction--magnitude identifiability decomposition.}
Additive-with-interaction decomposition of PP@10 across the four
input schemata. The dominant direction oracle probe main effect, the
positive magnitude-only main effect (non-identifying under
$\mathcal{I}_{\mathrm{geo+mag}}$), and the large negative interaction
term that cancels the magnitude-only effect once a per-node direction
frame is supplied are all visible.}
\label{fig:fig6}
\end{figure}

\subsection{Architecture $\times$ domain replication and Dean-number sensitivity}
\label{sec:supp_archdomain}
\begin{table}[H]
\centering
\small
\setlength{\tabcolsep}{4pt}
\renewcommand{\arraystretch}{1.08}
\resizebox{\linewidth}{!}{%
\begin{tabular}{llcccc}
\toprule
run & variant & $\PPdir{10}$ & $\cosgn$ & $\fracflip$ & angle\textsuperscript{med} ($^\circ$) \\
\midrule
SAGE $\times$ Cosserat  & \dirmagprobe{}      & $0.998 \pm 0.013$ & $+0.9995 \pm 0.000$ & $0.000$ & $1.60 \pm 0.67$ \\
                        & \dironly{}       & $0.999 \pm 0.009$ & $+0.9996 \pm 0.000$ & $0.000$ & $1.38 \pm 0.63$ \\
                        & \magonly{}       & $0.033 \pm 0.066$ & $+0.209 \pm 0.614$ & $\mathbf{0.383 \pm 0.363}$ & $75.2 \pm 41.9$ \\
                        & \geoschema{}        & $0.029 \pm 0.061$ & $+0.249 \pm 0.597$ & $\mathbf{0.363 \pm 0.353}$ & $73.0 \pm 40.6$ \\
\midrule
SAGE $\times$ sUbend    & \dirmagprobe{}      & $0.874 \pm 0.115$ & $+0.997 \pm 0.002$ & $0.000$ & $3.93 \pm 1.69$ \\
                        & \dironly{}       & $0.908 \pm 0.092$ & $+0.998 \pm 0.002$ & $0.000$ & $3.51 \pm 1.54$ \\
                        & \magonly{}       & $0.171 \pm 0.097$ & $+0.708 \pm 0.292$ & $\mathbf{0.235 \pm 0.149}$ & $41.0 \pm 22.0$ \\
                        & \geoschema{}        & $0.150 \pm 0.085$ & $+0.679 \pm 0.288$ & $\mathbf{0.225 \pm 0.151}$ & $44.0 \pm 20.9$ \\
\midrule
FlowGAT $\times$ Cosserat (no BC) & \dirmagprobe{}  & $0.992 \pm 0.049$ & $+0.999 \pm 0.001$ & $0.000$ & $1.75 \pm 0.96$ \\
                        & \dironly{}       & $0.987 \pm 0.080$ & $+0.999 \pm 0.002$ & $0.000$ & $1.70 \pm 1.39$ \\
                        & \magonly{}       & $0.037 \pm 0.092$ & $+0.184 \pm 0.644$ & $\mathbf{0.387 \pm 0.369}$ & $77.1 \pm 44.6$ \\
                        & \geoschema{}        & $0.025 \pm 0.072$ & $+0.201 \pm 0.628$ & $\mathbf{0.373 \pm 0.359}$ & $76.3 \pm 43.2$ \\
\bottomrule
\end{tabular}%
}
\caption{\textbf{Phase E8: four-variant direction-identifiability results across two architectures, two domains and two BC mechanisms.} All values are mean $\pm$ SD across $3$ seeds (test split). The step-function — direction-bearing variants at $\PPdir{10} \approx 1.0$ (Cosserat) or $\approx 0.9$ (sUbend); no-direction variants collapsed to chance ($0.03$--$0.17$) — holds across all three conditions. sUbend collapse floors are higher than Cosserat ($\approx 0.15$--$0.17$ vs.\ $\approx 0.03$) owing to residual direction information in the realistic-tube geometry (\Cref{sec:e8_robustness}); this is consistent with the Cosserat-residual bound of \Cref{thm:dir-id} and is also present in the FlowGAT $\times$ sUbend audit reported in \Cref{sec:womersley} context.}
\label{tab:e8_direction}
\end{table}

\begin{figure}[H]
\centering
\includegraphics[width=0.98\linewidth]{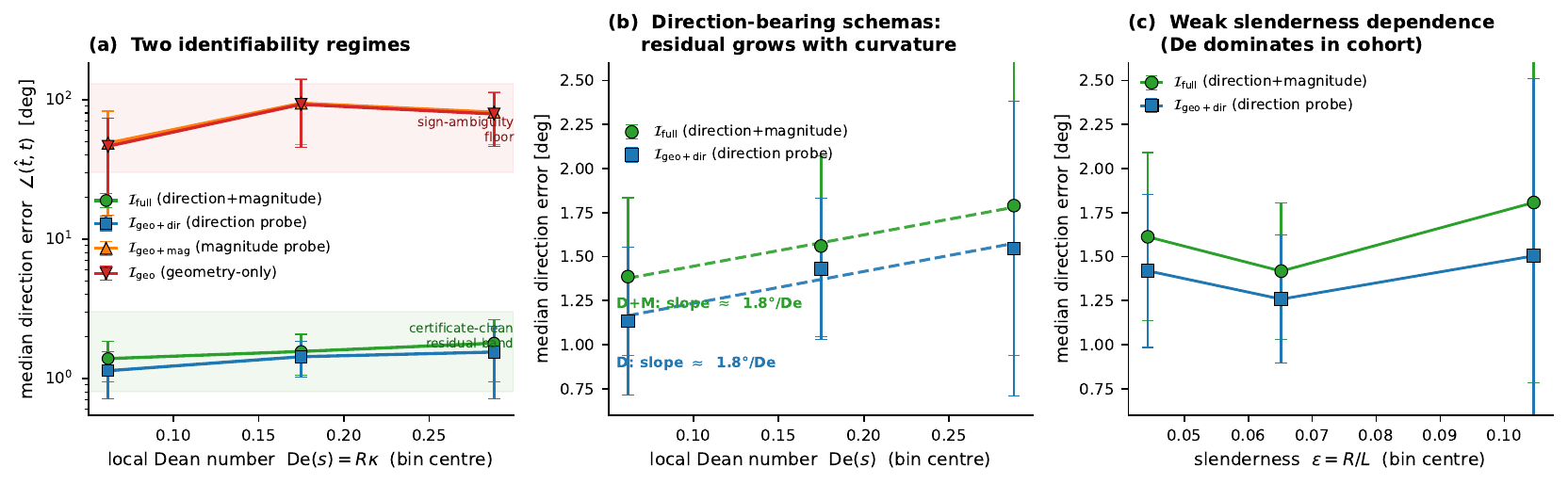}
\caption{\textbf{Sensitivity of the direction error to the Dean number $De$ and
slenderness $\varepsilon$ (SAGE$\times$Cosserat sweep, test split, mean $\pm$ SD
over $3$ seeds).} \textbf{(a)} Median per-node angular error against $De$ tercile
for all four input schemas (log scale). Two regimes separate by two orders of
magnitude: direction-bearing schemas occupy a certificate-clean residual band of
$\sim\!1$--$2^\circ$, the no-direction schemas a sign-ambiguity floor of
$\sim\!50$--$100^\circ$. \textbf{(b)} The two direction-bearing schemas on a
linear scale; the error grows monotonically with $De$ with a fitted slope of
$\approx 1.8^\circ$ per unit $De$ and near-zero intercept, matching the
$C_{\mathrm{dir}}(\epsilon+De)$ residual of \Cref{thm:dir-id}(a). \textbf{(c)} The
same schemas against $\varepsilon$: the slenderness dependence is weak and
non-monotone, so $De$ (curvature) dominates the residual in this cohort. Data:
\texttt{phase\_E8\_package/sage\_cosserat/stratified\_by\_\{de,eps\}.csv};
regeneration script \texttt{figures/gen\_fig\_sensitivity.py}.}
\label{fig:sensitivity}
\end{figure}

\subsection{Advection--diffusion transfer check: closed-form factorisation}
\label{sec:supp_advdiff}

The transfer check of main-text \Cref{sec:advdiff} applies the seven-step audit to the
steady one-dimensional advection--diffusion equation $a\,u_x = D\,u_{xx}$ on $[0,1]$,
with Dirichlet data $u(0)=u_L$, $u(1)=u_R$ and an unknown inflow boundary condition.
The exact solution is
\[
u(x) = u_L + (u_R-u_L)\,\psi(x;\mathrm{Pe}),
\qquad
\psi(x;\mathrm{Pe}) = \frac{e^{\mathrm{Pe}\,x}-1}{e^{\mathrm{Pe}}-1},
\qquad
\mathrm{Pe} = \frac{a}{D},
\]
where $\mathrm{Pe}$ is the P\'eclet number. Centring the profile, $u-\tfrac12(u_L+u_R)$,
factorises the target into three pieces that map one-to-one onto the velocity case:
a \emph{shape} $\phi(\cdot;|\mathrm{Pe}|)$ fixed by the coefficient magnitude
$|\mathrm{Pe}|$ (the role played by geometry), an \emph{amplitude}
$|A| = |u_R-u_L|$ fixed by the boundary condition (the role played by the inflow
waveform), and an \emph{orientation} $s=\mathrm{sgn}(u_R-u_L)=\pm1$---which end of the
domain carries the boundary layer---that the magnitude of $\mathrm{Pe}$ leaves
ambiguous up to a $\mathbb{Z}_2$ sign. Steps one--four of the certificate therefore
predict, before any fitting, the same three identifiability classes
(input-measurable shape, boundary-measurable amplitude, quotient-identifiable
orientation) as the flow problem.

For steps five--seven we draw an ensemble of $6000$ random admissible boundary
conditions (sampling $\mathrm{Pe}$, $u_L$ and $u_R$ over the admissible ranges) and
compute the Bayes-optimal predictor under each of the four input schemas. The result
(\Cref{fig:advdiff}, main text) reproduces the flow audit term for term: the
geometry-only schema recovers $\phi$ but must commit to a fixed sign and sits on the
$\mathbb{Z}_2$ floor ($\mathrm{PP_{dir}}@10^\circ=0.50$, flip rate $0.50$); a magnitude
channel collapses the amplitude error but leaves the flip rate at $0.50$, as a
non-orienting channel must; an orientation channel instead removes the flip
($\mathrm{PP_{dir}}@10^\circ=1.00$, $\fracflip=0$) up to a $1$--$2^\circ$ shape residual;
only the full schema resolves both amplitude and orientation. The target shares none
of the Cosserat machinery, so the agreement confirms that the audit procedure, not the
particular flow factorisation, is what transfers.

\subsection{Disclosure checklist for geometry-conditioned flow surrogates}
\label{sec:supp_checklist}

The identifiability certificate is a statement about the declared input schema, not
about any particular network or loss. It therefore translates into a short,
schema-level disclosure that lets a reader decide \emph{a priori} whether a reported
velocity-field accuracy is attainable from the stated inputs or is instead evidence
of an undeclared oracle channel. We recommend that subsequent surrogates report the
following seven items.

\begin{enumerate}
\item \textbf{Per-channel input inventory.} List every per-node and global input
channel with its physical meaning, and mark each as geometric (frame-covariant) or
dynamical. The certificate applies to the geometric subset; magnitude and orientation
must be sought among the remaining channels.
\item \textbf{Orientation provenance.} State whether a signed direction enters the
inputs (e.g.\ a proximal inflow waveform, a cycle-phase scalar, a 4D-flow streamline
tangent) or whether all orientation must be inferred from unsigned geometry. Absence
of a signed channel triggers the $\mathbb{Z}_2$ sign obstruction of
\Cref{lem:sign}.
\item \textbf{Magnitude provenance.} State whether absolute flow magnitude is supplied
(an inflow rate $Q(t)$, a Reynolds or Womersley number) or only the dimensionless
geometry; the magnitude obstruction of \Cref{thm:mag-obs} applies in the latter case.
\item \textbf{Direction/magnitude variant sweep.} Report the four-variant decomposition
(\geoschema, plus direction-, magnitude-, and direction+magnitude oracle probes) so
that the geometric-only floor and the oracle ceiling are both visible. A single
end-to-end accuracy number cannot separate the two.
\item \textbf{Signed-cosine and flip-rate metrics.} Report $\cosgn$ and the per-node
flip rate $\fracflip$ in addition to angular error, since an unsigned metric hides the
sign degeneracy that the certificate predicts.
\item \textbf{Boundary-condition masking.} State how wall and inlet/outlet nodes are
masked in the loss, so a reader can rule out boundary leakage as the source of an
apparent interior-field guarantee.
\item \textbf{Slenderness and curvature range.} Report the cohort ranges of $\varepsilon$
and $De$, because the certificate's residual bound $C_{\mathrm{dir}}(\epsilon+De)$ grows
with curvature; an accuracy claim is only meaningful relative to the regime in which
it was measured.
\end{enumerate}

Items 1--3 are schema declarations that can be checked before any training; items 4--7
are the minimal measurements that expose the predicted floors. Together they let a
reviewer distinguish ``the model learned the flow'' from ``the model read the tangent
and copied the magnitude'' without rerunning the experiment.

\subsection{Interventions that can buy a guarantee, and their theoretical ceilings}
\label{sec:supp_piml_routes}

Because the obstructions of \Cref{lem:sign,thm:mag-obs} concern the declared input set
rather than the loss, only interventions that enlarge the input with an asymmetric
(orientation- or magnitude-bearing) channel can remove them; loss-side regularisers
leave the input ambiguity intact. We list the concrete routes together with the upper
bound the certificate places on their attainable gain.

\begin{enumerate}
\item \textbf{Proximal inflow waveform $Q(t)$.} Supplies absolute magnitude and a
temporal orientation. This lifts the magnitude obstruction of \Cref{thm:mag-obs}
exactly, since $Q(t)$ fixes the scale that geometry alone cannot. The residual angular
error is still bounded below by the curvature term $C_{\mathrm{dir}}(\epsilon+De)$ of
\Cref{thm:dir-id}: magnitude information does not improve orientation.
\item \textbf{Cycle-phase scalar.} A single signed phase variable breaks the
$\mathbb{Z}_2$ degeneracy of \Cref{lem:sign} by distinguishing systolic forward flow
from the diastolic branch. Its ceiling is the same direction residual as a full signed
tangent when the cohort is dominated by a single dominant flow axis.
\item \textbf{Signed centreline tangent from 4D-flow MRI streamlines~\cite{ferdian2020}.}
Provides a per-node signed direction, collapsing the angular error to the oracle band
of $\sim\!1$--$2^\circ$ predicted by \Cref{thm:dir-id}(a); it does not by itself fix
magnitude, so it must be paired with route~1 for a full velocity guarantee.
\item \textbf{Inflow-aware boundary loss~\cite{westerhof2009arterial}.} Encodes the
signed inlet flux as a soft boundary constraint. It behaves as a weak version of
route~1, with a gain that degrades as the boundary signal is diluted over the interior;
the certificate caps its benefit at the magnitude obstruction it partially addresses.
\item \textbf{Helmholtz-projected divergence-free residual~\cite{liu2024multiresolution}
(loss-side, for contrast).} A representative loss-only intervention. Because it adds no
asymmetric input channel, \Cref{lem:sign,thm:mag-obs} guarantee it cannot remove either
obstruction; its expected gain on the certificate-relevant axes is zero, which the
no-flow transfer check corroborates.
\end{enumerate}

Routes 1--4 each enlarge the input schema and therefore have nonzero ceilings; route~5
illustrates why a loss-side fix, however physically motivated, cannot exceed the bound
set by the declared inputs.

\subsection{Training-efficiency telemetry}
\label{sec:efficiency}

\begin{table}[H]
\small
\centering
\caption{Per-(variant, seed) training-efficiency telemetry. Total
parameter count is $842{,}755$ for every run; peak GPU memory is
$\sim\!25.4$ GB. ``best epoch'' is the epoch with maximum
validation PP@10.}
\label{tab:efficiency}
\begin{tabular}{llrrrr}
\toprule
Variant & seed & best epoch & total epochs & best val PP@10 & train time (h) \\
\midrule
\dirmagprobe{}  & 1337 & 120 & 720  & 0.2404 & 4.96 \\
\dirmagprobe{}  & 2026 & 170 & 770  & 0.2174 & 5.41 \\
\dirmagprobe{}  & 777  & 180 & 780  & 0.2531 & 5.35 \\
\dironly{}   & 1337 & 340 & 940  & 0.2199 & 6.48 \\
\dironly{}   & 2026 & 140 & 740  & 0.2391 & 5.16 \\
\dironly{}   & 777  & 280 & 880  & 0.2676 & 6.09 \\
\magonly{}   & 1337 & 420 & 1020 & 0.0058 & 6.92 \\
\magonly{}   & 2026 &  40 & 640  & 0.0092 & 4.37 \\
\magonly{}   & 777  & 190 & 790  & 0.0041 & 5.41 \\
\geoschema{}    & 1337 & 980 & 1500 & 0.0243 & 10.23 \\
\geoschema{}    & 2026 & 650 & 1250 & 0.0351 & 8.61  \\
\geoschema{}    & 777  & 730 & 1330 & 0.0431 & 9.16  \\
\bottomrule
\end{tabular}
\end{table}

The \magonly{} runs converge \emph{quickly} to a poor plateau, not slowly to a good one; the \geoschema{} runs converge slowly to a moderate plateau. The pattern is consistent with the input-schema reading: magnitude information without a direction frame is mis-fittable, while geometry-only inputs are slow but informative.

\subsection{Peak-localisation statistic dependence}
\label{sec:peakloc}

\begin{table}[H]
\small
\centering
\caption{Peak-localisation discrepancy (Euclidean distance, mm,
between predicted and true peak-magnitude nodes) is highly sensitive
to the choice of summary statistic on the test cohort. We therefore
\emph{do not} place this metric in the headline.}
\label{tab:peakloc}
\begin{tabular}{lll}
\toprule
Statistic & \dirmagprobe{} & \geoschema{} \\
\midrule
median-of-medians (per seed)  & $49.9 \pm 11.7$  & $37.4 \pm 0.7$ \\
mean-of-means    (per seed)   & $47.2$           & $58.2$         \\
\bottomrule
\end{tabular}
\end{table}

The reversal under the mean statistic is driven by two FSI cases (\texttt{0225}, \texttt{0226}) on which \geoschema{} predicts the peak in the distal descending aorta, $\sim\!78$--$90$~mm from the true (proximal-coarctation) peak. We disclose this explicitly and decline to claim peak-localisation as a positive finding for \geoschema{}. \Cref{fig:fig4} overlays these per-case peak predictions on the test-set 3-D meshes.

\subsection{Wall-shear-stress metrics (relative contrasts only)}
\label{sec:wss}

The wall-shear-stress coefficient of determination is negative for all four variants on the present cohort (\Cref{tab:wss_supp}). This is a known difficulty for cardiovascular GNN surrogates~\cite{tabe2026pignn}, not a finding of our audit---it reflects that none of our four interventions is a clinical WSS predictor on this $n = 5$ cohort. WSS MAE and $R^2$ are reported here as relative contrasts only.

\begin{table}[H]
\small
\centering
\begin{tabular}{lll}
\toprule
Variant & WSS MAE (Pa) & WSS $R^2$ \\
\midrule
\dirmagprobe{} & $14.32 \pm 0.39$ & $-9.7 \pm 0.5$ \\
\dironly{}  & $12.30 \pm 0.83$ & $-7.1 \pm 1.2$ \\
\magonly{}  & $\phantom{0}9.07 \pm 1.23$ & $-3.7 \pm 1.9$ \\
\geoschema{}   & $13.86 \pm 1.32$ & $-14.1 \pm 2.7$ \\
\bottomrule
\end{tabular}
\caption{WSS metrics by oracle-channel intervention
(test cohort, $3$ seeds $\times$ $5$ cases). $R^2 < 0$ uniformly:
the constant predictor outperforms every variant on WSS on this
small cohort.}
\label{tab:wss_supp}
\end{table}

\subsection{Validation of the per-node divergence estimator}
\label{sec:supp_divergence}

The continuity negative control (\Cref{sec:continuity}) compares the normalised
mean absolute divergence of the predicted and ground-truth velocity fields. Its
conclusion---that apparent mass conservation on the patient cohort is a learned
data statistic rather than an imposed physical law---rests on the divergence
\emph{estimator} being trustworthy: it must not manufacture spurious divergence
from a solenoidal field, and it must recover true divergence when it is present.
We verify both on synthetic point clouds where the divergence is known
analytically, using the estimator exactly as defined in \Cref{sec:physics_diag}
(Gaussian-weighted least-squares Jacobian on the $16$ nearest neighbours, with
bandwidth equal to the median nearest-neighbour distance, reporting
$\mathrm{tr}\,\mathbf{J}$ normalised by $U/L$). No network and no measured data
enter this check; it isolates the estimator.

\Cref{fig:divergence_validation} reports three experiments. \textbf{(a)
Calibration.} We take a divergence-free Arnold--Beltrami--Childress base field
and add an isotropic source term $c\,\mathbf{x}$ of exact divergence $3c$,
sweeping $c$ through positive and negative values. The estimated normalised
divergence falls on the line $y=x$ with fitted slope $1.000$ and intercept
$-2\times10^{-4}$: the estimator is unbiased and recovers imposed divergence
across the tested range. \textbf{(b) False-positive floor.} On the pure
divergence-free field with additive Gaussian velocity noise of magnitude
$\sigma=f\,U$, the spurious normalised $|\nabla\!\cdot\!\mathbf{u}|$ is small and
rises smoothly with noise---$0.01$ at $f=0$, $0.03$ at $f=2\%$, $0.12$ at
$f=10\%$---and the uniform-box and irregular-tube clouds agree to within their
spread, confirming the floor is a property of noise rather than of sampling
geometry. \textbf{(c) Neighbourhood robustness.} At a fixed $2\%$ noise level the
floor is flat in the neighbour count $k$ ($0.027$--$0.031$ for $k\in[8,32]$),
so the $k=16$ used in the paper is not a tuned choice. Two consequences follow
for \Cref{sec:continuity}. First, the VMR predicted-field residual
($\approx0.003$) sits \emph{below} even the $f=0$ floor, i.e.\ within estimator
resolution of divergence-free---so the network's VMR field genuinely looks
mass-conserving at this resolution. Second, the Womersley predicted-field
residual ($\approx10^{-1}$) is an order of magnitude above the $10\%$-noise
floor, so it is a real loss of continuity and not an estimator artefact. The
contrast on which the negative control turns is therefore robustly above the
estimator's own noise.

\begin{figure}[H]
\centering
\includegraphics[width=0.98\linewidth]{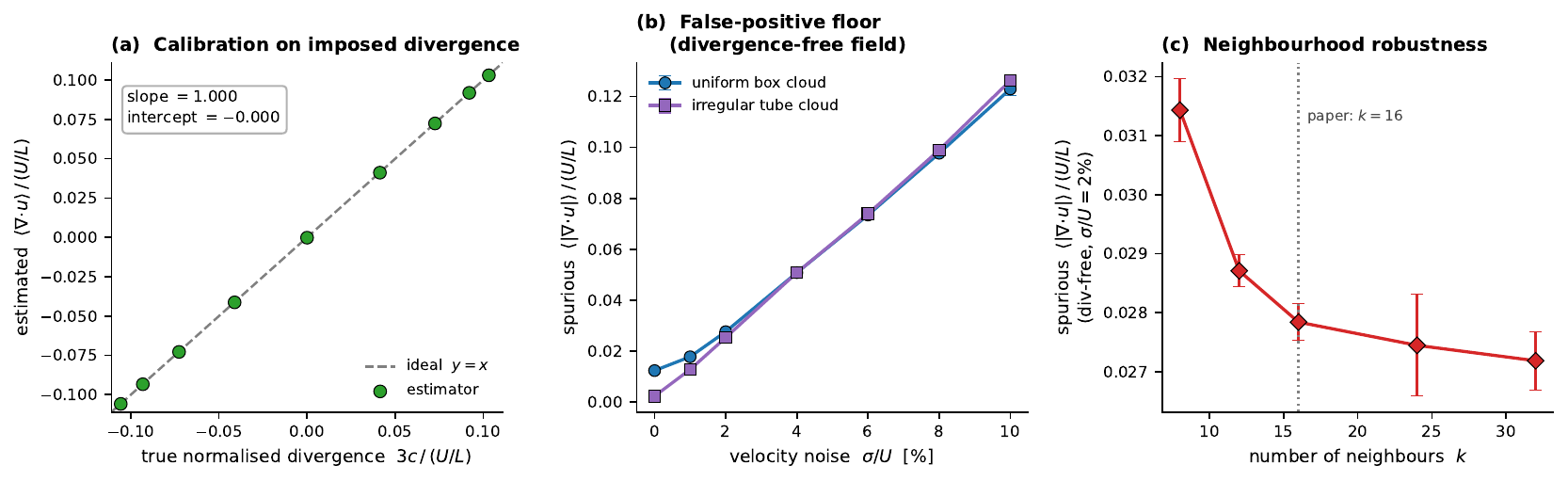}
\caption{\textbf{Validation of the per-node divergence estimator on synthetic
fields with known divergence.} \textbf{(a)} Calibration against an imposed
isotropic source of exact divergence $3c$ added to a divergence-free base field;
the estimator recovers it on $y=x$ (slope $1.000$, intercept $-2\times10^{-4}$).
\textbf{(b)} Spurious $|\nabla\!\cdot\!\mathbf{u}|$ on a divergence-free field
versus velocity-noise level, for a uniform-box and an irregular tube-like cloud;
the false-positive floor is small and noise-controlled. \textbf{(c)} The same
floor versus neighbour count $k$ at fixed $2\%$ noise, flat around the $k=16$
used in the paper. All quantities normalised by the characteristic strain rate
$U/L$. Regeneration script:
\texttt{figures/gen\_fig\_divergence\_validation.py}.}
\label{fig:divergence_validation}
\end{figure}

\subsection{Continuity is learned as a data pattern, not a physical law}
\label{sec:continuity}

A separate question, orthogonal to the identifiability thesis and reported here as a secondary negative control rather than a central claim, is whether the trained surrogate respects the incompressible continuity equation $\nabla\!\cdot\!\mathbf{u} = 0$ at the per-node level. We estimate the local divergence of both the predicted and ground-truth velocity fields on each test mesh by a weighted least-squares Jacobian fit on the $16$-nearest-neighbour neighbourhood (\Cref{sec:methods}), then compare the normalised mean absolute divergence $\overline{|\nabla\!\cdot\!\mathbf{u}|}\cdot L/U$, where $L$ is the median nearest-neighbour spacing and $U$ is the mean HE speed. We validated this estimator on synthetic point clouds with analytically known divergence before applying it here (\Cref{sec:supp_divergence}): on imposed-divergence fields it is unbiased (recovery slope $1.00$, intercept $<10^{-3}$ in normalised units), and on a divergence-free field it reads a small noise-controlled floor ($\approx 0.01$--$0.03$ at $0$--$2\%$ velocity noise) that is stable across neighbourhood size. The comparisons below are therefore read against that calibrated floor, not against an idealised zero.

On the VMR cohort, the predicted and true divergence residuals agree to within a factor of $\sim\!1.5$ across all four variants (\dirmagprobe{} normalised $\overline{|\nabla\!\cdot\!\mathbf{u}_{\mathrm{pred}}|} = 0.0027$ vs.\ true $0.0021$; \geoschema{} $0.0035$ vs.\ $0.0021$). This could be read as the network having internalised mass conservation as a physical constraint. The Womersley cross-domain check refutes that reading. On the analytical synthetic benchmark---where the true field is divergence-free to machine precision ($\overline{|\nabla\!\cdot\!\mathbf{u}_{\mathrm{true}}|} \approx 10^{-4}$)---the predicted divergence is $\sim\!10^2$ times larger ($\overline{|\nabla\!\cdot\!\mathbf{u}_{\mathrm{pred}}|} \approx 10^{-1}$) across every variant including \wmwith{}, for which the velocity target is supplied at input time. The same model that ``respects continuity'' on VMR therefore fails to conserve mass on a regime where the true field is exactly mass-conserving.

The apparent mass conservation on VMR is therefore a learned property of the \emph{training-data distribution}---a side-effect of fitting node-level velocities on a finite, discretisation-noise-bearing CFD output---rather than a property of the network architecture or loss. The continuity-respecting behaviour observed on VMR does not transfer to a regime where it would be most expected to hold, and is therefore best understood as pattern recognition over the training cohort, not as physics imposition. This distinction matters for downstream use: a vascular-flow surrogate whose continuity behaviour is set by training-data statistics rather than by the variational structure of the model offers no guarantee on cases outside the cohort distribution.

\subsection{Per-pathology stratified table}

{\sloppy
The full stratified results (PP@10, PP@5, angle, $\ewRMSE$,
peak-localisation, peak-magnitude-relative, WSS MAE, and WSS $R^2$)
for the strata $\{$healthy, CoA-rigid, CoA-FSI, rigid-all,
all-test$\}$ are committed in
\texttt{results/stratified/\allowbreak stratified\_test.csv}.\par}

\subsection{Peak-localisation map}

\begin{figure}[H]
\centering
\includegraphics[width=0.9\linewidth]{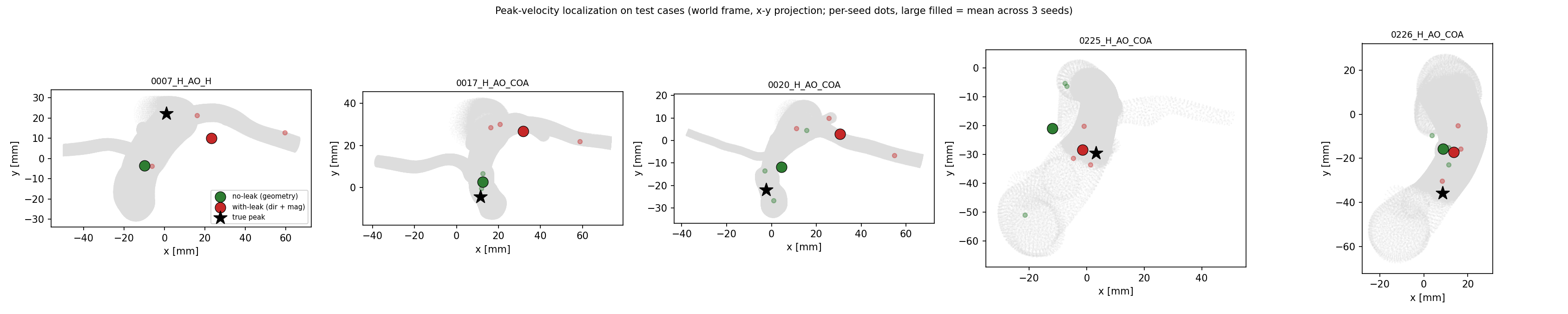}
\caption{\textbf{Predicted peak-magnitude positions overlaid on the
test-case 3-D meshes.} True peak (green star) versus per-(variant,
seed) predicted peak (coloured markers). The two FSI cases
(\texttt{0225}, \texttt{0226}) drive the peak-localisation
statistic-dependence discussed in
\Cref{sec:peakloc}.}
\label{fig:fig4}
\end{figure}

\subsection{Architecture-independence check (GraphSAGE backbone)}
\label{sec:supp_sage}

To verify that the asymmetric direction/magnitude pattern is not a property of the specific FlowGAT attention design, we retrain two of the four variants on a structurally distinct backbone: a vanilla GraphSAGE message-passing stack with mean aggregation and identical hidden width, depth, and optimiser settings, but no attention, no edge-bias, and no hard-no-slip head (\Cref{sec:supp_sage_methods}). The two variants chosen, \sagedirmagprobe{} and \sagegeo{}, bracket the ceiling and floor of the asymmetric pattern.

\Cref{tab:sage_compare} reports the comparison. SAGE reproduces the quantitative pattern: with a direction oracle probe the model achieves oracle-regime performance ($\PPdir{10} = 0.985 \pm 0.019$, $\cosgn = +0.998 \pm 0.001$, $\fracflip = 0$); without it, the model collapses to the geometry-only-class regime ($\PPdir{10} = 0.122 \pm 0.088$, $\cosgn = +0.700 \pm 0.366$, $\fracflip = 0.188 \pm 0.174$). The per-node physics diagnostics sharpen the picture: \sagedirmagprobe{} reports a \emph{higher} Spearman correlation between predicted speed and the Poiseuille radial coordinate ($\rho = +0.47$) than the corresponding FlowGAT \dirmagprobe{} run ($\rho = +0.36$), indicating that the simpler architecture is, if anything, slightly better at recovering the parabolic radial profile from a faithful direction frame.

\begin{table}[H]
\centering
\scriptsize
\setlength{\tabcolsep}{3pt}
\renewcommand{\arraystretch}{1.08}
\begin{tabularx}{\linewidth}{lXcccc}
\toprule
backbone & variant & $\PPdir{10}$ & $\cosgn$ & $\fracflip$ & angle\textsuperscript{med} ($^\circ$) \\
\midrule
FlowGAT & \dirmagprobe{} 
& $0.955 \pm 0.038$ 
& $+0.998 \pm 0.001$ 
& $0.000$ 
& $3.2 \pm 0.9$ \\

SAGE & \sagedirmagprobe{} 
& $0.985 \pm 0.019$ 
& $+0.998 \pm 0.001$ 
& $0.000$ 
& $3.1 \pm 1.1$ \\

\midrule

FlowGAT & \geoschema{} 
& $0.136 \pm 0.078$ 
& $+0.798 \pm 0.177$ 
& $0.159 \pm 0.123$ 
& $34.8 \pm 14.9$ \\

SAGE & \sagegeo{} 
& $0.122 \pm 0.088$ 
& $+0.700 \pm 0.366$ 
& $0.188 \pm 0.174$ 
& $41.5 \pm 25.5$ \\
\bottomrule
\end{tabularx}
\caption{\textbf{Architecture independence of the identifiability
limit.} A vanilla GraphSAGE backbone reproduces both the
direction oracle probe ceiling and the geometry-only floor with
comparable per-node sign-degeneracy.}
\label{tab:sage_compare}
\end{table}

\subsection{Refined geometric proxy: per-node centreline tangent}
\label{sec:supp_centerline}

A natural reading of the geometry-only failure is that the case-global PCA axis is a crude direction proxy and that the network's residual angular error tracks the geometric gap between this proxy and the true centreline tangent at each node. If that reading were correct, replacing the global PCA proxy with a \emph{local} centreline tangent should narrow the gap. We test it directly with a fifth variant, \centerlinegeo{}, which substitutes the global PCA axis with a per-node tangent computed iteratively from a medial-axis skeleton (\Cref{sec:centerline-tangent-method}): the procedure traces a centreline through the lumen by successive Voronoi-pole pruning and projects the resulting Frenet tangent back onto each interior node by nearest surface point.

The result is essentially identical to \geoschema{}. On the direction-identifiability metrics we observe $\PPdir{10} = 0.087 \pm 0.042$ for \centerlinegeo{} versus $0.136 \pm 0.078$ for \geoschema{}; $\cosgn = +0.799 \pm 0.131$ versus $+0.798 \pm 0.177$; $\fracflip = 0.136 \pm 0.076$ versus $0.159 \pm 0.123$; and median angular error $35.4 \pm 12.0^\circ$ versus $34.8 \pm 14.9^\circ$ (three seeds, five test cases). On PP@10, $\ewRMSE$, and per-pathology stratification the two variants are likewise indistinguishable within seed-to-seed scatter. Refining the geometric direction proxy from a one-axis-per-case estimate to a per-node estimate---making the geometric prior provably more faithful at every interior node---does not move the model output, the angular error distribution, or the per-node sign degeneracy. This is the counter-intuitive complement to the \dironly{} result: the network does not benefit from a better geometric direction estimate either as input or as augmented prior; what it cannot derive from the mesh at all is the per-node \emph{instantaneous sign} of the velocity, which is exactly what a direction oracle probe supplies (\Cref{lem:sign}).

\subsection{No-BC ablation variant}
\label{sec:methods_nobc}

The no-slip boundary condition in the standard FlowGAT model is enforced via a hard-constraint loss head: wall nodes (where $\mathbf{u} = \mathbf{0}$) receive a supervised auxiliary loss term that explicitly signals their zero-velocity boundary condition to the network during training. This head means that information about the boundary-condition type (no-slip vs.\ inflow) is available to the model at wall nodes and could, in principle, help anchor the orientation of interior flow via boundary-to-interior message passing. To isolate this concern we train the four-variant Cosserat-sweep ablation on a version of FlowGAT in which the no-slip-loss head is disabled: wall nodes are excluded from the supervised loss entirely (as is standard in mesh-conditioned surrogates that do not explicitly label boundary-condition types in the feature set). All other hyperparameters---hidden width, depth, attention heads, edge-bias design, AdamW learning rate, augmentation, three seeds, and train/val/test split---are held fixed. The resulting model must infer the wall boundary condition from the geometric features alone (specifically, the node-type indicator distinguishing interior from wall nodes in $\mathcal{G}(\Omega)$), which is the condition the theorem assumes for the geometry-only schema (\Cref{def:input-schema}). The no-BC run therefore provides a cleaner instantiation of the theoretical setting than the standard model. Its results are reported in the third block of \Cref{tab:e8_direction} and discussed in \Cref{sec:e8_robustness}.

\subsection{Methods for the GraphSAGE backbone}
\label{sec:supp_sage_methods}

The GraphSAGE backbone used in \Cref{sec:supp_sage} is a message-passing stack
\[
\mathbf{h}_i^{(\ell+1)} = \sigma\!\left(\mathbf{W}_\mathrm{self}^{(\ell)}\,\mathbf{h}_i^{(\ell)}
+ \mathbf{W}_\mathrm{nbr}^{(\ell)}\,\frac{1}{|\mathcal{N}(i)|}\!\!\sum_{j\in\mathcal{N}(i)}\!\!\mathbf{h}_j^{(\ell)}\right),
\]
without attention weights, without edge-bias terms, and without the hard-no-slip head. Hidden width ($128$), depth ($8$ layers), activation (SiLU), optimiser (AdamW, learning rate $10^{-4}$), augmentation, subgraph sampling, and training-epoch budget are identical to the FlowGAT runs. The parameter count drops to $\sim\!348{,}000$. We train each of the two variants (\sagedirmagprobe{}, \sagegeo{}) with three seeds.


\bibliography{refs}

\end{document}